\theoremstyle{definition}
\newtheorem{corollary}{Corollary}
\newtheorem{definition}{Definition}
\newtheorem{example}{Example}
\newtheorem{lemma}{Lemma}
\newtheorem{proposition}{Proposition}
\newtheorem{remark}{Remark}
\newtheorem{theorem}{Theorem}
\newtheorem*{theorem*}{Theorem}
\newcommand{\eq}[1]{\textbf{Eq.~\ref{eq:#1}}}
\newcommand{\fig}[1]{\textbf{Fig.~\ref{fig:#1}}}
\DeclareMathOperator*{\argmax}{argmax}
\DeclareMathOperator*{\argmin}{argmin}
\title{\textbf{Expectation-enforcing strategies for repeated games}}
\author{Nikos Dimou and Alex McAvoy\footnote{Please direct correspondence to A.M. (\texttt{alexmcavoy@gmail.com}).}}
\date{}
\begin{document}

\allowdisplaybreaks

\maketitle

\begin{abstract}
Originating in evolutionary game theory, the class of ``zero-determinant'' strategies enables a player to unilaterally enforce linear payoff relationships in simple repeated games. An upshot of this kind of payoff constraint is that it can shape the incentives for the opponent in a predetermined way. An example is when a player ensures that the agents get equal payoffs. While extensively studied in infinite-horizon games, extensions to discounted games, nonlinear payoff relationships, richer strategic environments, and behaviors with long memory remain incompletely understood. In this paper, we provide necessary and sufficient conditions for a player to enforce arbitrary payoff relationships (linear or nonlinear), in expectation, in discounted games. These conditions characterize precisely which payoff relationships are enforceable using strategies of arbitrary complexity. Our main result establishes that any such enforceable relationship can actually be implemented using a simple two-point reactive learning strategy, which conditions on the opponent's most recent action and the player's own previous mixed action, using information from only one round into the past. For additive payoff constraints, we show that enforcement is possible using even simpler (reactive) strategies that depend solely on the opponent's last move. In other words, this tractable class is universal within expectation-enforcing strategies. As examples, we apply these results to characterize extortionate, generous, equalizer, and fair strategies in the iterated prisoner's dilemma, asymmetric donation game, nonlinear donation game, and the hawk-dove game, identifying precisely when each class of strategy is enforceable and with what minimum discount factor.
\end{abstract}

\section{Introduction}
A foundational question in repeated games is how much an individual can shape long-run outcomes without imposing equilibrium discipline on others. Classical work, notably the Folk Theorem \citep{fudenberg:E:1986}, characterizes the set of payoffs sustainable in equilibrium, provided the horizon of the game is sufficiently long. In contrast, far less is known about the unilateral problem: how much can a player control the space of possible outcomes, even when the opponent behaves arbitrarily? The space of feasible payoffs against a fixed strategy provides outcome-relevant information not always apparent from the behavioral mechanics of a strategy itself. Importantly, this perspective gives insights into the incentives an opponent faces, whether it be in an adaptive (evolutionary) setting or one of reinforcement learning. In fact, despite the rich space of subgame perfect equilibria in repeated games, reinforcement learning illustrates that gradual modifications of one's strategy based on self-interested objectives (i.e., caring about maximizing one's own payoff) leads to inefficient equilibria \citep{foerster:AAMAS:2018,mcavoy:Nexus:2022}, which raises a question about how to productively shape incentives.

Although we are ultimately interested in repeated games in a general setting, some of the motivation of our approach comes from evolutionary game theory, where there is a broad body of work on cooperation in simple repeated games like the prisoner's dilemma. The mechanism by which cooperation emerges in the repeated prisoner's dilemma (``direct reciprocity'' \citep{nowak:Science:2006}) is predicated on the fact that defection now can be punished in the future and cooperation now can be reciprocated in the future \citep{trivers:TQRB:1971}. The underlying stage game has two actions, ``cooperate'' ($C$) and ``defect'' ($D$), and two agents facing one another receive payoffs according to the matrix
\begin{align}
\bordermatrix{%
& C & D \cr
C &\ R,\ R & \ S,\ T \cr
D &\ T,\ S & \ P,\ P \cr
}\ , \label{eq:payoff_matrix}
\end{align}
where $T>R>P>S$. This payoff ranking ensures that defection is the dominant action, despite mutual cooperation yielding the socially optimal outcome.

For this kind of game, \citet{press:PNAS:2012} described a new class of ``zero-determinant'' (ZD) strategies that permit a striking level of control over expected linear payoff outcomes. In the infinitely repeated prisoner's dilemma, ZD strategies allow player $X$ to choose constants $\alpha$, $\beta$, and $\gamma$ and unilaterally enforce the equation $\alpha\pi_{X}+\beta\pi_{Y}+\gamma =0$, where $\pi_{X}$ and $\pi_{Y}$ are the long-term average payoffs of $X$ and $Y$, respectively. This equation is ``enforced'' by $X$ in the sense that it holds regardless of the opponent's behavior in the repeated game. Of course, not all linear relationships are feasible, but several surprising classes of relationships are enforceable in the repeated prisoner's dilemma, including those in the family $\kappa -\pi_{X}=\chi\left(\kappa -\pi_{Y}\right)$ for $\chi\geqslant 1$ and $\kappa\in\left[P,R\right]$. Of particular interest are strategies enforcing $\pi_{X}-P=\chi\left(\pi_{Y}-P\right)$ for $\chi >1$, which allow $X$ to effectively extort the opponent and claim an unfair share of payoffs beyond those of mutual defection. A concrete example of a ZD strategy is tit-for-tat (TFT) in the (infinite-horizon) iterated prisoner's dilemma \citep{axelrod:Science:1981,axelrod:BB:1984}, which enforces equal payoffs ($\chi =1$).

Since their discovery, ZD strategies have been investigated in numerous social dilemmas, including iterated public goods \citep{pan:SR:2015} and (nonlinear) donation games \citep{mcavoy:PNAS:2016}, and their properties have been studied in the contexts of discounted games \citep{hilbe:GEB:2015}, games with continuous action spaces \citep{mcavoy:PNAS:2016}, alternating games \citep{mcavoy:TPB:2017}, evolutionary environments \citep{adami:NC:2013,hilbe:PLOSONE:2013,stewart:PNAS:2013}, multiplayer games \citep{hilbe:PNAS:2014}, human experiments \citep{wang:NC:2016,milinski:NC:2016}, and stochastic, multi-state games \citep{mcavoy:PNAS:2025}. However, the study of ZD strategies remains incomplete. Most work investigates linear payoff relationships in games where both players have access to only two actions. It also is not understood exactly what kinds of payoff constraints can be enforced, even in simple games.

Furthermore, ZD strategies are typically derived under the assumption that one is searching within the class of ``memory-one'' strategies, which specify the player's reaction depending on the outcome of the previous round only. This class of strategies strikes a balance between mathematical tractability and behavioral complexity, and many famous strategies (including tit-for-tat and win-stay, lose-shift \citep{nowak:Nature:1993}) condition on only the previous round's outcome. It has been shown that longer finite memory, which is computationally intractable \citep{hilbe2017memory}, is not always needed against reasonable opponents, and strategies of reduced memory suffice to produce optimal payoff outcomes in some specific game structures \citep{lesigang2025can,barlo2009repeated}. Even though longer-memory ZD strategies have been developed \citep{ueda2021memory}, any advantage they have over memory-one strategies remains unclear.

An additional constraint is imposed by discounting, which weights payoffs from future rounds less than current payoffs through a discount factor $\lambda\in\left[0,1\right)$. Discounting reflects time preferences or uncertainty about future interactions, and it restricts the set of enforceable payoff relationships: as $\lambda$ decreases (placing less weight on the future), fewer relationships can be enforced \citep{hilbe:GEB:2015}. Given the restricted range of feasible ZD strategies in discounted games and the limited understanding of longer-memory properties, the following questions naturally arise:
\begin{itemize}
\item How much control does a player have against an adaptive opponent?
\item Can a player enforce additional payoff relationships by extending memory?
\end{itemize}

This paper addresses both questions and provides a definitive answer to what is possible and what is implementable by a single player in repeated games. More precisely, we adopt the framework of ``autocratic'' strategies, which generalize ZD strategies and allow for nonlinear constraints on expected payoffs. We provide necessary and sufficient conditions of which payoff relationships can be enforced using strategies of arbitrary complexity, thus establishing a complete characterization on the level of control that a player is able to exert against any adaptive opponent. Our main contribution gives a (perhaps surprising) answer to the second question: extending memory beyond a simple reactive learning structure provides no additional power. In fact, any enforceable payoff relationship can be implemented using a two-point reactive learning strategy. Reactive learning strategies generalize memory-one strategies by conditioning on the opponent's most recent action and the player's own previous mixed action (rather than realized action), allowing the player to track their own randomization history. A two-point reactive learning strategy further simplifies this by mixing between just two (fixed) possible responses based on the opponent's last action. In addition, for additive objective functions, we show that enforcement is possible using even simpler reactive strategies that depend solely on the opponent's last action. These results demonstrate that reactive learning strategies are universal within the class of all strategies that endow a player with the ability to control expected payoff outcomes.

For any candidate payoff constraint, we give a concrete ``next-round correction'' condition that is both necessary and sufficient for enforcement.
Furthermore, we provide an explicit formula for the minimum discount factor required to enforce a given payoff relationship. This result resolves a number of open problems. First, the problem of identifying the minimum discount factor in games with a more complex structure than that of the iterated prisoner's dilemma has appeared previously in several works, e.g., in \citep{govaert2019zero}. Second, it answers an open question raised by \citet{hilbe:PLOSONE:2013} regarding the existence of autocratic strategies in discounted games, and extends prior work \citep{hilbe:GEB:2015} by providing exact thresholds rather than just existence results.

In contrast to the approach that resulted in the discovery of ZD strategies \citep{press:PNAS:2012}, we do not rely on the standard method of studying the transition probabilities of memory-one strategies adopted by the focal player, as this technique turns out to be prohibitive when dealing with strategies of broad cognitive complexity. Instead, we explore the dynamics of the repeated game by identifying a mechanism that allows one to control the stochastic path of the game and correct past ``suboptimal" behaviors or errors. Although we focus on discounted games, which are more realistic, we also extend our techniques to the infinite-horizon regime.

The main theoretical results are applied to several classical games. In the iterated prisoner's dilemma, we provide exact conditions for extortionate, generous, and equalizing strategies, computing the minimum discount factor required for each class. Our results verify and extend prior work \citep{hilbe:GEB:2015,mamiya2020zero} by providing constructive formulas. The generality of our framework also enables important non-existence results. For instance, we prove that symmetric relationships (such as $\pi_{X}=\pi_{Y}$ in the infinitely repeated prisoner's dilemma) cannot be enforced in properly discounted games of finite horizon, resolving the question of when fair strategies exist. This shows that tit-for-tat's ability to enforce equal payoffs is fundamentally limited to the infinite-horizon setting. Beyond symmetric, two-action games, we characterize autocratic strategies in a multi-action nonlinear donation game, an asymmetric donation game, and the hawk-dove game. The nonlinear donation game demonstrates that our framework extends naturally to nonlinear payoff relationships, where traditional ZD techniques fail. The asymmetric donation game illustrates the distinction between equality (enforcing $\pi_{X}=\pi_{Y}$) and fairness (proportional sharing based on action costs), showing that only the former can sometimes be enforced unilaterally.

A final important algorithmic consequence of our characterization is that verifying whether a given payoff relationship is enforceable, as well as computing the minimum discount factor and constructing an associated autocratic strategy, can be accomplished in polynomial time using linear programming. This stands in stark contrast to the computational intractability of analyzing general behavioral strategies \citep{hauert1997effects,hilbe2017memory}. Our results thus provide both theoretical closure on the memory question of enforcing payoff constraints, as well as practical tools for computing autocratic strategies.

The remainder of this paper is organized as follows. In Section~\ref{sec:background}, we review the framework of repeated games and introduce the notion of autocratic strategies, establishing key background results including the pointwise and generalized ``next-round correction'' conditions. In Section~\ref{sec:autocratic}, we present our main theoretical contributions: we prove that any enforceable payoff relationship can be implemented using a two-point reactive learning strategy (Theorem~\ref{th:mainresultDiscounted}), characterize the minimum discount factor required for enforcement (Proposition~\ref{prop:lambda_min}), and show that additive payoff constraints admit even simpler reactive implementations (Theorem~\ref{thm:additive}). We also extend our results to the undiscounted setting ($\lambda\rightarrow 1$). In Section~\ref{sec:properties}, we establish computational and structural properties of autocratic strategies, including polynomial-time algorithms for verification and construction, and convexity properties of the space of enforceable relations. Finally, in Section~\ref{sec:applications}, we apply our framework to four examples of social dilemmas \citep{dawes:ARP:1980}: the iterated prisoner's dilemma, a nonlinear donation game, an asymmetric donation game, and the hawk-dove game, providing complete characterizations of all linear payoff relationships that can be unilaterally enforced, as well as the minimum average time horizon needed to do so.

\section{Background and auxiliary results}\label{sec:background}

\subsection{Repeated games and discounting}
We consider repeated games between two players, $X$ and $Y$, with finite action spaces $S_{X}$ and $S_{Y}$, respectively. The players receive short-term payoffs via functions $u_{X}:S_{X}\times S_{Y}\rightarrow\mathbb{R}$ and $u_{Y}:S_{X}\times S_{Y}\rightarrow R$. In each round $t\in\left\{0,1,2,\dots\right\}$, players simultaneously choose actions $\left(s_{X}^{t},s_{Y}^{t}\right)\in S_{X}\times S_{Y}$ from a distribution and receive stage-game payoffs $u_{X}\left(s_{X}^{t},s_{Y}^{t}\right)$ and $u_{Y}\left(s_{X}^{t},s_{Y}^{t}\right)$. The game continues with probability $\lambda\in\left[0,1\right)$ after each round. The discounted payoff for player $X$ over a realization of action outcomes is
\begin{align}
\pi_{X} = \left(1-\lambda\right)\sum_{t=0}^{\infty}\lambda^{t} u_{X}\left(s_{X}^{t},s_{Y}^{t}\right),
\end{align}
with a similar expression for player $Y$, with $u_{Y}$ replacing $u_{X}$. The factor $1-\lambda$ normalizes payoffs to be comparable across different continuation probabilities.

\subsection{Behavioral strategies and histories}
Let $\mathcal{H}^{T}=\left(S_{X}\times S_{Y}\right)^{T}$ denote the set of all possible histories of length $T$, representing the sequence of action pairs played from round $0$ through round $T-1$. We write $\mathcal{H} = \bigcup_{T\geqslant 0}\mathcal{H}^{T}$ for the set of all finite histories, where $\mathcal{H}^{0} = \left\{\varnothing\right\}$ represents the null history at the start of the game.

Let $\Delta\left(S_{X}\right)$ and $\Delta\left(S_{Y}\right)$ denote the corresponding spaces of mixed actions (distributions over pure actions). A behavioral strategy for player $X$ is a map $\sigma_{X}:\mathcal{H}\rightarrow\Delta\left(S_{X}\right)$ that specifies a mixed action for each possible history. Given strategies $\sigma_{X}$ and $\sigma_{Y}$, we write $\mathbb{E}_{\sigma_{X},\sigma_{Y}}\left[\cdot\right]$ for the expected value with respect to the probability distribution over infinite sequences of realized play induced by $\sigma_{X}$ and $\sigma_{Y}$. Notably, the mean long-term payoffs for $X$ and $Y$, respectively, are
\begin{subequations}
\begin{align}
\pi_{X} &\coloneqq \mathbb{E}_{\sigma_{X},\sigma_{Y}}\left[\left(1-\lambda\right)\sum_{t=0}^{\infty}\lambda^{t}u_{X}\left(s_{X}^{t},s_{Y}^{t}\right)\right] ; \\
\pi_{Y} &\coloneqq \mathbb{E}_{\sigma_{X},\sigma_{Y}}\left[\left(1-\lambda\right)\sum_{t=0}^{\infty}\lambda^{t}u_{Y}\left(s_{X}^{t},s_{Y}^{t}\right)\right] .
\end{align}
\end{subequations}

A behavioral strategy is a ``memory-one'' strategy if its response depends only on the most recent round of play. Formally, a memory-one strategy for player $X$ consists of \emph{(i)} an initial mixed action $\sigma_{X}^{0}\in\Delta\left(S_{X}\right)$ and \emph{(ii)} a response rule $\sigma_{X}:S_{X}\times S_{Y}\rightarrow\Delta\left(S_{X}\right)$ that specifies a mixed action $\sigma_{X}\left[s_{X},s_{Y}\right]\in\Delta\left(S_{X}\right)$ for each action pair $\left(s_{X},s_{Y}\right)$ in the previous round. Importantly, the response depends only on the realized actions $\left(s_{X},s_{Y}\right)$, not on how those actions were generated through randomization. Let $\textbf{Mem}_{X}^{1}$ denote the set of all memory-one strategies for player $X$.

\subsection{Reactive learning strategies}
Reactive learning strategies, introduced by \citet{mcavoy:PRSA:2019}, generalize memory-one strategies by allowing a player to condition on their own mixed action from the previous round.

\begin{definition}
A reactive learning strategy for $X$ consists of an initial action $\sigma_{X}^{0}\in\Delta\left(S_{X}\right)$ and a response rule $\sigma_{X}:\Delta\left(S_{X}\right)\times S_{Y}\rightarrow\Delta\left(S_{X}\right)$.
\end{definition}

Let $\textbf{RL}_{X}$ denote the set of all reactive learning strategies for $X$. Every memory-one strategy naturally induces a reactive learning strategy via the canonical embedding
\begin{align}
^{\ast} &: \textbf{Mem}_{X}^{1} \longrightarrow \textbf{RL}_{X} \nonumber \\
&: \left(\sigma_{X}^{0},\sigma_{X}\right) \longmapsto \left(\sigma_{X}^{0},\sigma_{X}^{\ast}\right) ,
\end{align}
where the initial actions are the same, and $\sigma_{X}^{\ast}\left[\tau_{X},s_{Y}\right]\left(\cdot\right) \coloneqq \mathbb{E}_{s_{X}\sim\tau_{X}}\left[\sigma_{X}\left[s_{X},s_{Y}\right]\left(\cdot\right)\right]$ over $S_{X}$. Conversely, restricting a reactive learning strategy to Dirac measures recovers a memory-one response function, $\sigma_{X}\left[s_{X},s_{Y}\right]\coloneqq\sigma_{X}^{\ast}\left[\delta_{s_{X}},s_{Y}\right]$, where $\delta_{s_{X}}$ is the point mass at $s_{X}$.

\begin{definition}
A sequence $\left\{\tau_{X}^{t}\right\}_{t=0}^{\infty}\subseteq\Delta\left(S_{X}\right)$ is a ``chain'' of mixed actions derived from $\left(\sigma_{X}^{0},\sigma_{X}^{\ast}\right)$ if $\tau_{X}^{0}=\sigma_{X}^{0}$ and, for every $t\geqslant 1$, there exists an action $s_{Y}^{t-1}\in S_{Y}$ such that $\tau_{X}^{t}=\sigma_{X}^{\ast}\left[\tau_{X}^{t-1},s_{Y}^{t-1}\right]$.
\end{definition}

\subsection{Feasible and enforceable payoffs}
The feasible region is the set of all payoff pairs $\left(\pi_{Y},\pi_{X}\right)$ that can arise from some pair of strategies (note the unusual ordering, which we fix throughout). For a generic two-player game, this forms a convex subset of $\mathbb{R}^{2}$ of full dimension. When a player commits to a strategy, $\sigma_{X}$, the achievable payoffs as $\sigma_{Y}$ varies form a convex subset of the feasible region. For a generic strategy $\sigma_{X}$, characterizing the geometry of achievable payoff pairs $\left(\pi_{Y},\pi_{X}\right)$ as $\sigma_{Y}$ ranges over all opponent strategies requires detailed knowledge of the repeated game dynamics. There is currently no simple way to determine this payoff region, apart from in special cases, such as two-action games with infinite horizon, where this region is the convex hull of at most $11$ points \citep{mcavoy:PRSA:2019}.

\subsection{Autocratic strategies}
The following definition formalizes the notion of unilateral enforcement of expectations:
\begin{definition}\label{def:autocratic}
Let $\varphi:S_{X}\times S_{Y}\rightarrow\mathbb{R}$ be a fixed function on the space of joint actions. A strategy $\sigma_{X}:\mathcal{H}\rightarrow\Delta\left(S_{X}\right)$ is $\left(\varphi,\lambda\right)$-autocratic if, for all opponent strategies $\sigma_{Y}:\mathcal{H}\rightarrow\Delta\left(S_{Y}\right)$,
\begin{align}
\mathbb{E}_{\sigma_{X},\sigma_{Y}}\left[\left(1-\lambda\right)\sum_{t=0}^{\infty}\lambda^{t}\varphi\left(s_{X}^{t},s_{Y}^{t}\right)\right] = 0 .
\end{align}
\end{definition}

Intuitively, an autocratic strategy unilaterally enforces a constraint on the expected value of $\varphi$, regardless of the opponent's behavior. When $\varphi\left(s_{X},s_{Y}\right) = \alpha u_{X}\left(s_{X},s_{Y}\right) + \beta u_{Y}\left(s_{X},s_{Y}\right) + \gamma$, this corresponds to enforcing the linear payoff relationship $\alpha\pi_{X} + \beta\pi_{Y} + \gamma = 0$, which is exactly the motivation behind zero-determinant (ZD) strategies \citep{press:PNAS:2012}.

\begin{definition}
For $\varphi:S_{X}\times S_{Y}\rightarrow\mathbb{R}$ and $\lambda\in\left[0,1\right]$, we say that $\varphi\equiv 0$ is $\lambda$-enforceable if there exists a $\left(\varphi,\lambda\right)$-autocratic strategy $\sigma_{X}:\mathcal{H}\rightarrow\Delta\left(S_{X}\right)$. We say $\varphi\equiv 0$ is enforceable if it is $\lambda$-enforceable for some $\lambda\in\left[0,1\right]$.
\end{definition}

\begin{figure}
    \centering
    \includegraphics[width=\textwidth]{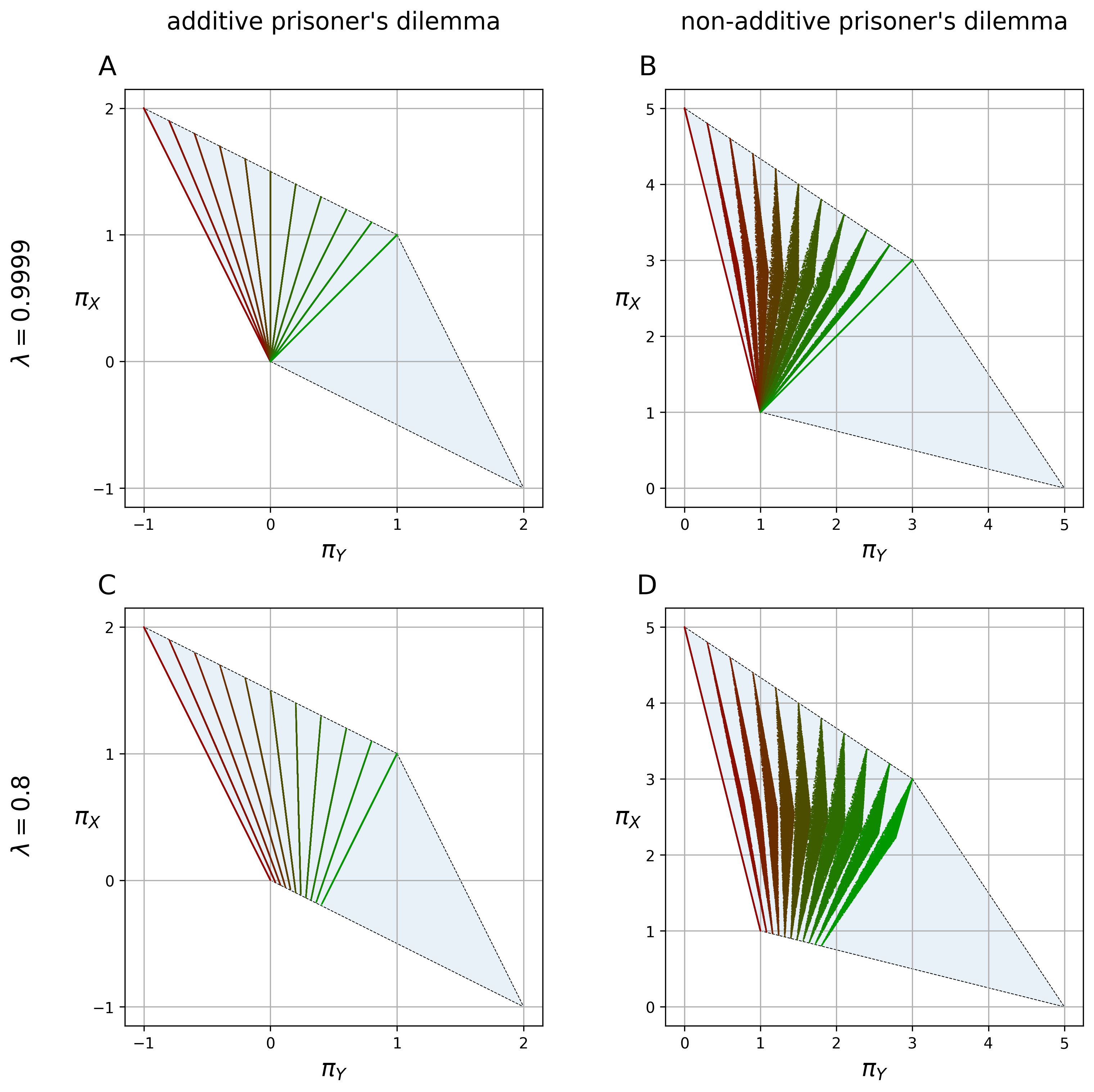}
    \caption{Payoff regions enforced when $X$ plays weighted averages of ALLD (red) and TFT (green) in repeated prisoner's dilemmas. Each colored region shows the payoff region obtained from the strategy $\sigma_{X}\coloneqq\left(1-p\right)\textrm{ALLD}+p\textrm{TFT}$ played against $10^{4}$ randomly-chosen opposing strategies, for $p\in\left\{k/10\right\}_{k=0}^{10}$. (A,C)~Additive prisoner's dilemma with $\left(R,S,T,P\right)=\left(1,-1,2,0\right)$: for $p\notin\left\{0,1\right\}$, the strategy enforces a linear payoff relationship. (B,D)~Non-additive prisoner's dilemma with $\left(R,S,T,P\right)=\left(3,0,5,1\right)$: the strategy enforces a two-dimensional convex region. While line-enforcing strategies naturally arise in additive games through simple mixtures of well-known strategies, non-additive games require more sophisticated constructions. Panels A and B use $\lambda =0.9999$ (a game with $10{,}000$ rounds, on average, approximating an undiscounted game), and panels C and D use $\lambda =0.8$ (a game with $5$ rounds, on average).\label{fig:averagingStrategies}}
\end{figure}

\subsection{The pointwise next-round correction condition}
The following result from \citet{mcavoy:PNAS:2016} provides a sufficient condition for a memory-one strategy to enforce a linear payoff relationship.

\begin{theorem}[\citet{mcavoy:PNAS:2016}]\label{thm:mcavoy-hauert}
Suppose that $\left(\sigma_{X}^{0},\sigma_{X}\left[s_{X},s_{Y}\right]\right)$ is a memory-one strategy for $X$. If there exists a function $\psi:S_{X}\rightarrow\mathbb{R}$ such that
\begin{align}\label{eq:mainEquation2016}
\alpha u_{X}\left(s_{X},s_{Y}\right) + \beta u_{Y}\left(s_{X},s_{Y}\right) + \gamma = \psi\left(s_{X}\right) - \lambda\mathbb{E}_{s_{X}'\sim\sigma_{X}\left[s_{X},s_{Y}\right]}\left[\psi\left(s_{X}'\right)\right] - \left(1-\lambda\right)\mathbb{E}_{s_{X}'\sim\sigma_{X}^{0}}\left[\psi\left(s_{X}'\right)\right]
\end{align}
holds for every $s_{X}\in S_{X}$ and $s_{Y}\in S_{Y}$, then $\left(\sigma_{X}^{0},\sigma_{X}\left[s_{X},s_{Y}\right]\right)$ enforces the linear payoff relationship
\begin{align}\label{eq:linearRelationship2016}
\alpha\pi_{X}+\beta\pi_{Y}+\gamma = 0
\end{align}
against any behavioral strategy of player $Y$, including those with infinite memory.
\end{theorem}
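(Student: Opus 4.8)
The plan is to show directly that the memory-one strategy is $(\varphi,\lambda)$-autocratic for $\varphi(s_X,s_Y) \coloneqq \alpha u_X(s_X,s_Y) + \beta u_Y(s_X,s_Y) + \gamma$. Since $(1-\lambda)\sum_{t\geq 0}\lambda^t = 1$, the autocratic identity $\mathbb{E}_{\sigma_X,\sigma_Y}[(1-\lambda)\sum_t\lambda^t\varphi(s_X^t,s_Y^t)] = 0$ is, by linearity of expectation, literally the assertion $\alpha\pi_X + \beta\pi_Y + \gamma = 0$, so it suffices to prove the former. I would fix an arbitrary (possibly infinite-memory) opponent strategy $\sigma_Y$ and let $\mathcal{F}_t$ be the $\sigma$-algebra generated by the realized play $(s_X^0,s_Y^0,\dots,s_X^t,s_Y^t)$ through round $t$. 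The single structural fact to exploit is that, because $X$ plays a memory-one rule, its action in round $t+1$ is drawn from $\sigma_X[s_X^t,s_Y^t]$ conditional on $\mathcal{F}_t$, irrespective of how $Y$ generates its moves; hence $\mathbb{E}[\psi(s_X^{t+1})\mid\mathcal{F}_t] = \mathbb{E}_{s'\sim\sigma_X[s_X^t,s_Y^t]}[\psi(s')]$. This is precisely where the restriction on $X$ (and nothing on $Y$) enters, which is what ultimately yields enforcement against every opponent.

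Next I would apply the hypothesized correction equation pointwise at the realized pair $(s_X^t,s_Y^t)$ of each round, writing $\varphi(s_X^t,s_Y^t) = \psi(s_X^t) - \lambda\,\mathbb{E}_{s'\sim\sigma_X[s_X^t,s_Y^t]}[\psi(s')] - (1-\lambda)c$, where $c \coloneqq \mathbb{E}_{s'\sim\sigma_X^0}[\psi(s')]$ is a constant. Substituting the conditional-expectation identity above and taking total expectations, the middle term becomes $-\lambda\,\mathbb{E}[\psi(s_X^{t+1})]$ by the tower property. Setting $a_t \coloneqq \mathbb{E}_{\sigma_X,\sigma_Y}[\psi(s_X^t)]$, with $a_0 = c$, this yields the clean per-round identity
\begin{align}
\mathbb{E}_{\sigma_X,\sigma_Y}\!\left[\varphi(s_X^t,s_Y^t)\right] = a_t - \lambda a_{t+1} - (1-\lambda)c,
\end{align}
valid for every $t\geq 0$.

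Finally I would form the discounted sum $(1-\lambda)\sum_{t\geq 0}\lambda^t\!\left(a_t - \lambda a_{t+1} - (1-\lambda)c\right)$. The constant part contributes $(1-\lambda)c$, while the remaining part $(1-\lambda)\sum_t(\lambda^t a_t - \lambda^{t+1}a_{t+1})$ telescopes to $(1-\lambda)a_0 = (1-\lambda)c$, so the two cancel and the total vanishes, as required. The only genuinely technical points, which I expect to be the main obstacle to making fully rigorous, are the analytic justifications: interchanging the infinite sum with the expectation and with the telescoping rearrangement. Both follow from finiteness of $S_X$ and $S_Y$ — which makes $\psi$, $u_X$, $u_Y$, and hence $\varphi$, bounded — together with $\lambda < 1$, so that the geometric weights give absolute convergence and $\lambda^t a_t \to 0$, legitimizing the term-by-term expectation (via dominated convergence or Fubini) and the vanishing of the telescoped tail. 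I would close by noting that $\sigma_Y$ was arbitrary, so the relationship $\alpha\pi_X + \beta\pi_Y + \gamma = 0$ is enforced against every behavioral strategy of $Y$, including those with infinite memory.
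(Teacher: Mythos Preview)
Your argument is correct. The telescoping computation with $a_t=\mathbb{E}[\psi(s_X^t)]$ and $a_0=c$ is exactly right, and the analytic justifications you flag (boundedness from finite action spaces, $\lambda<1$) are all that is needed.

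The paper does not prove Theorem~\ref{thm:mcavoy-hauert} directly but recovers it as a special case of Proposition~\ref{prop:gnrcc_iff}: one first observes that the pointwise condition \eqref{eq:mainEquation2016} implies the generalized condition \eqref{eq:mainIntegral} by averaging over $\tau_X$, and then Proposition~\ref{prop:gnrcc_iff} is proved by (i) reducing via Lemma~\ref{lem:exogenous} to deterministic opponent sequences, (ii) tracking the induced chain of mixed actions $\tau_X^t$ in the reactive-learning framework, and (iii) telescoping. Your route is more direct for the memory-one statement: you stay with the realized-play filtration $\mathcal{F}_t$ and use the tower property to handle an arbitrary $\sigma_Y$ in one pass, never needing the reduction to exogenous sequences or the reactive-learning embedding. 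What the paper's detour buys is that the same machinery immediately covers the reactive-learning generalization (where the response depends on $X$'s previous \emph{mixed} action rather than realized action), which is the setting the rest of the paper actually needs; your filtration argument, as written, uses that $\sigma_X$ conditions on the realized $s_X^t$, so it would need modification to cover that case.
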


We call \eq{mainEquation2016} the pointwise next-round correction condition, and we call $\psi$ the enforcement potential. This condition provides a local, action-by-action characterization that guarantees the global payoff constraint \eq{linearRelationship2016}.

\begin{example}[Tit-for-tat in the undiscounted prisoner's dilemma]\label{ex:TFT}
Consider the prisoner's dilemma with payoff matrix \eq{payoff_matrix} in the undiscounted setting ($\lambda \rightarrow 1$). The well-known strategy of tit-for-tat (TFT) plays $C$ initially and then copies the opponent's previous action. That is, $\sigma_{X}^{0}=C$ and $\sigma_{X}\left[s_{X},s_{Y}\right] = s_{Y}$ for all $s_{X},s_{Y} \in \left\{C,D\right\}$. It is known that TFT enforces the fair relationship $\pi_{X} = \pi_{Y}$, or equivalently, $\varphi \equiv 0$ where $\varphi\left(s_{X},s_{Y}\right) =u_{Y}\left(s_{X},s_{Y}\right) -u_{X}\left(s_{X},s_{Y}\right)$ \citep{press:PNAS:2012}. To verify this using the pointwise next-round correction condition (\eq{mainEquation2016}), we seek an enforcement potential $\psi: \left\{C, D\right\} \rightarrow \mathbb{R}$ such that
\begin{align}
\varphi\left(s_{X},s_{Y}\right) = \psi\left(s_{X}\right) - \psi\left(\sigma_{X}\left[s_{X},s_{Y}\right]\right)
\end{align}
for all $s_{X},s_{Y}\in\left\{C,D\right\}$. This equation simplifies to $\varphi\left(s_{X},s_{Y}\right) =\psi\left(s_{X}\right) - \psi\left(s_{Y}\right)$ since TFT satisfies $\sigma_{X}\left[s_{X},s_{Y}\right] =s_{Y}$. Taking $\psi\left(C\right) = 0$ and solving, we obtain $\psi\left(D\right) = T - S$. By Theorem \ref{thm:mcavoy-hauert} (extended to $\lambda = 1$), TFT enforces $\pi_{X} = \pi_{Y}$ against any opponent strategy. This example illustrates how the enforcement potential $\psi$ captures the ``correction'' that each action provides toward achieving the target payoff relationship.
\end{example}

A natural question is whether the pointwise next-round correction condition is also necessary for enforcing linear payoff relationships. The next example shows that it is not, in general:
\begin{example}\label{ex:PNRCC-not-necessary}
Consider a two-player, two-action game with payoff matrix
\begin{align}
\bordermatrix{%
& C & D \cr
C &\ -1 & \ -2 \cr
D &\ +1 & \ +2 \cr
}. \label{eq:plusMinusPayoffMatrix}
\end{align}
By playing $C$ and $D$ with equal probability in each round, player $X$ can ensure $\pi_{X}=0$ regardless of $Y$'s strategy. Specifically, the constant mixed strategy with $\sigma_{X}^{0}\left(C\right)=\sigma_{X}^{0}\left(D\right)=1/2$ and $\sigma_{X}\left[s_{X},s_{Y}\right]=\sigma_{X}^{0}$ for all $s_{X},s_{Y}\in\left\{C,D\right\}$ enforces $\pi_{X}=0$ for all $\lambda\in\left[0,1\right)$.

However, this strategy does not satisfy the pointwise next-round correction condition (\eq{mainEquation2016}) for every action pair $\left(s_{X},s_{Y}\right)$ because $u_{X}\left(s_{X},s_{Y}\right)$ depends on $s_{Y}$. To see this more explicitly, suppose there exists $\psi:S_{X}\rightarrow\mathbb{R}$ such that \eq{mainEquation2016} holds with $\alpha =1$ and $\beta =\gamma =0$. By scaling, we may assume $\psi\left(C\right) =0$. The pointwise next-round correction condition then requires:
\begin{subequations}
\begin{align}
-1 &= - \lambda\sigma_{X}\left[C,C\right]\left(D\right)\psi\left(D\right) - \left(1-\lambda\right)\sigma_{X}^{0}\left(D\right)\psi\left(D\right); \\
-2 &= - \lambda\sigma_{X}\left[C,D\right]\left(D\right)\psi\left(D\right) - \left(1-\lambda\right)\sigma_{X}^{0}\left(D\right)\psi\left(D\right); \\
+1 &= \psi\left(D\right) - \lambda\sigma_{X}\left[D,C\right]\left(D\right)\psi\left(D\right) - \left(1-\lambda\right)\sigma_{X}^{0}\left(D\right)\psi\left(D\right); \\
+2 &= \psi\left(D\right) - \lambda\sigma_{X}\left[D,D\right]\left(D\right)\psi\left(D\right) - \left(1-\lambda\right)\sigma_{X}^{0}\left(D\right)\psi\left(D\right).
\end{align}
\end{subequations}
This system has no solution for $\psi\left(D\right)$, which shows that the pointwise next-round correction condition (\eq{mainEquation2016}) is sufficient but not necessary for enforcing linear payoff relationships.
\end{example}

\subsection{The generalized next-round correction condition}
We now extend the pointwise condition to reactive learning strategies. Suppose $\left(\sigma_{X}^{0},\sigma_{X}\left[s_{X},s_{Y}\right]\right)$ satisfies the pointwise next-round correction condition (\eq{mainEquation2016}) for some enforcement potential $\psi:S_{X}\rightarrow\mathbb{R}$. For $\tau_{X}\in\Delta\left(S_{X}\right)$, define the map $\Psi\left(\tau_{X}\right) \coloneqq \mathbb{E}_{s_{X}\sim\tau_{X}}\left[\psi\left(s_{X}\right)\right]$ (which we also refer to as an enforcement potential), and recall the induced reactive learning strategy, $\sigma_{X}^{\ast}\left[\tau_{X},s_{Y}\right]\left(\cdot\right) \coloneqq \mathbb{E}_{s_{X}\sim\tau_{X}}\left[\sigma_{X}\left[s_{X},s_{Y}\right]\left(\cdot\right)\right]$.

Let $\mathcal{M}_{X}\subseteq\Delta\left(S_{X}\right)$ be the reachable set of mixed actions,
\begin{align}\label{eq:ReactionSet}
\mathcal{M}_{X} \coloneqq \bigcap \left\{\mathcal{M}\subseteq\Delta\left(S_{X}\right)\mid \sigma_{X}^{0}\in\mathcal{M}\textrm{ and } \sigma_{X}^{\ast}\left[\tau_{X},s_{Y}\right]\in\mathcal{M}\text{ for all }\tau_{X}\in\mathcal{M}\textrm{ and }s_{Y}\in S_{Y}\right\}.
\end{align}
That is, $\mathcal{M}_{X}$ is the smallest subset of $\Delta\left(S_{X}\right)$ containing the initial action $\sigma_{X}^{0}$ and closed under the response map $\tau_{X}\mapsto\sigma_{X}^{\ast}\left[\tau_{X},s_{Y}\right]$ for every $s_{Y}\in S_{Y}$.

Let $\varphi\left(\tau_{X},s_{Y}\right)\coloneqq\mathbb{E}_{s_{X}\sim\tau_{X}}\left[\varphi\left(s_{X},s_{Y}\right)\right]$ be the linear extension of $\varphi$ to mixed actions in the first coordinate. Taking expectations of \eq{mainEquation2016} with respect to $\tau_{X}\in\mathcal{M}_{X}$ yields the generalized next-round correction condition,
\begin{align}\label{eq:mainIntegral}
\varphi\left(\tau_{X},s_{Y}\right) = \Psi\left(\tau_{X}\right) - \lambda\Psi\left(\sigma_{X}^{\ast}\left[\tau_{X},s_{Y}\right]\right) - \left(1-\lambda\right)\Psi\left(\sigma_{X}^{0}\right)
\end{align}
for every $\tau_{X}\in\mathcal{M}_{X}$ and $s_{Y}\in S_{Y}$, where $\varphi\left(s_{X},s_{Y}\right) = \alpha u_{X}\left(s_{X},s_{Y}\right) + \beta u_{Y}\left(s_{X},s_{Y}\right) + \gamma$. Unlike the pointwise condition, we will show that the generalized next-round correction condition is both necessary and sufficient for $\left(\sigma_{X}^{0},\sigma_{X}^{\ast}\right)$ to be autocratic, for general functions $\varphi$. We first need a technical lemma, which shows that autocratic strategies can be characterized by enforcement against deterministic sequences:
\begin{lemma}\label{lem:exogenous}
A behavioral strategy $\sigma_{X}:\mathcal{H}\rightarrow\Delta\left(S_{X}\right)$ is $\left(\varphi,\lambda\right)$-autocratic if and only if it enforces $\varphi\equiv 0$ with discount factor $\lambda$ against all exogenous (deterministic) sequences $\left\{s_{Y}^{t}\right\}_{t=0}^{\infty}\subseteq S_{Y}$.
\end{lemma}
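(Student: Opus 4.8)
The forward implication is immediate: a deterministic sequence $\{s_Y^t\}_{t=0}^{\infty}$ is realized by the (open-loop) behavioral strategy that ignores all history and plays $s_Y^t$ in round $t$, so any $\left(\varphi,\lambda\right)$-autocratic strategy in particular enforces $\varphi\equiv 0$ against it. The entire content is the converse, and my plan is to reduce an arbitrary opponent to a mixture over deterministic sequences by conditioning on the opponent's realized play.

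Fix any behavioral $\sigma_Y$ and let $\mathbf{s}_Y=\left(s_Y^0,s_Y^1,\dots\right)$ denote the (random) sequence of actions that $Y$ actually plays under $\left(\sigma_X,\sigma_Y\right)$. Since $\varphi$ is bounded and $\sum_t\lambda^t<\infty$, I may interchange expectation and summation and write the objective as $\mathbb{E}_{\sigma_X,\sigma_Y}\left[\left(1-\lambda\right)\sum_t\lambda^t\varphi\left(s_X^t,s_Y^t\right)\right]=\left(1-\lambda\right)\sum_t\lambda^t\,\mathbb{E}\left[\varphi\left(s_X^t,s_Y^t\right)\right]$. Conditioning each term on the history $h^t$ and using that $s_X^t$ and $s_Y^t$ are drawn independently given $h^t$, together with linearity of $\varphi$ in its first argument, gives $\mathbb{E}\left[\varphi\left(s_X^t,s_Y^t\right)\mid h^t\right]=\varphi\left(\tau_X^t,s_Y^t\right)$, where $\tau_X^t\in\Delta\left(S_X\right)$ is $X$'s mixed action in round $t$ and $\varphi\left(\tau_X,s_Y\right)=\mathbb{E}_{s_X\sim\tau_X}\left[\varphi\left(s_X,s_Y\right)\right]$. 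The aim is then to recognize $\left(1-\lambda\right)\sum_t\lambda^t\varphi\left(\tau_X^t,s_Y^t\right)$, for each realized $\mathbf{s}_Y$, as precisely the value $X$ enforces against the deterministic sequence $\mathbf{s}_Y$, which is $0$ by hypothesis; averaging over the opponent-induced law of $\mathbf{s}_Y$ would then yield $0$.

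The step I expect to be the crux is justifying that the realized chain of mixed actions $\{\tau_X^t\}$ depends only on the opponent's past actions $\left(s_Y^0,\dots,s_Y^{t-1}\right)$, and not on $X$'s own realized past actions. This is exactly what makes the conditional law of $X$'s play, given that $Y$'s realized sequence equals $\mathbf{s}_Y$, coincide with the law of play generated against the fixed deterministic sequence $\mathbf{s}_Y$, so that the per-sequence contribution really is the enforced value against $\mathbf{s}_Y$. Here the reactive-learning update is essential: because the response rule advances $\tau_X^t=\sigma_X^{\ast}\left[\tau_X^{t-1},s_Y^{t-1}\right]$ from the previous mixed action rather than from a realized action, the chain is a deterministic function of $\left(s_Y^0,\dots,s_Y^{t-1}\right)$ alone and the realized draws $s_X^0,s_X^1,\dots$ are conditionally independent along it. Without this property---for instance, if the response conditioned on $X$'s own realized action---consecutive actions of $X$ could be correlated, a reactive opponent could couple $s_Y^t$ to $s_X^{t-1}$, and the conditional law of $X$'s play would then differ from the open-loop law, breaking the mixture decomposition; I would therefore treat the decorrelation of $X$'s realized actions as the single point needing careful argument. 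Once it is in hand, each realization $\mathbf{s}_Y$ contributes $\left(1-\lambda\right)\sum_t\lambda^t\varphi\left(\tau_X^t,s_Y^t\right)=0$, and taking the expectation over the arbitrary law of $\mathbf{s}_Y$ completes the converse.
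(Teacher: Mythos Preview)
Your approach---condition on $Y$'s realized action sequence and reduce to the deterministic case---is exactly the paper's tower-property argument, and you have been more scrupulous than the paper about the key step. You correctly isolate the crux: one needs the conditional law of $X$'s play given $\left(s_{Y}^{t}\right)_{t\geqslant 0}$ to coincide with the law of $X$'s play against the fixed deterministic sequence $\left(s_{Y}^{t}\right)_{t\geqslant 0}$. And you correctly note that this holds when $\sigma_{X}$ is a reactive learning strategy (or, more broadly, any strategy that conditions only on $Y$'s past actions), because then the chain $\left(\tau_{X}^{t}\right)$ is a deterministic function of $\left(s_{Y}^{0},\dots ,s_{Y}^{t-1}\right)$ alone.

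The gap is that the lemma is stated for an \emph{arbitrary} behavioral strategy $\sigma_{X}:\mathcal{H}\rightarrow\Delta\left(S_{X}\right)$, and your argument does not---indeed cannot---cover that generality. You say as much yourself: if $\sigma_{X}$ conditions on its own realized actions, the decorrelation fails and the mixture decomposition breaks. In fact the converse direction of the lemma is false at that level of generality. Take $S_{X}=S_{Y}=\{0,1\}$, $\varphi\left(s_{X},s_{Y}\right)=\left(2s_{X}-1\right)s_{Y}$, and let $\sigma_{X}$ play uniformly in round $0$ and deterministically copy its own previous action thereafter. Against any deterministic sequence the discounted expectation is $0$ (since $\mathbb{E}\left[2s_{X}^{0}-1\right]=0$ and $s_{X}^{t}=s_{X}^{0}$), but against the reactive opponent that plays $0$ in round $0$ and $s_{X}^{t-1}$ thereafter one gets $\left(1-\lambda\right)\sum_{t\geqslant 1}\lambda^{t}\,\mathbb{E}\left[\left(2s_{X}^{0}-1\right)s_{X}^{0}\right]=\lambda/2\neq 0$. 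The paper's proof glosses over precisely the step you flagged; its assertion that $\mathbb{E}_{\sigma_{X},\sigma_{Y}}\left[\,\cdot\mid\left(s_{Y}^{t}\right)\right]=\mathbb{E}_{\sigma_{X},\left(s_{Y}^{t}\right)}\left[\,\cdot\,\right]$ fails in this example. What your argument actually establishes---and what the paper actually uses, in Propositions~\ref{prop:gnrcc_iff} and~\ref{prop:reactive_behavioral}---is the lemma restricted to strategies $\sigma_{X}$ that depend only on $Y$'s history, so the downstream results are unaffected.
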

\begin{proof}
The ``only if'' direction follows immediately since exogenous sequences are behavioral strategies. For the converse, suppose $\sigma_{X}$ enforces $\varphi\equiv 0$ against all exogenous sequences. For any behavioral strategy $\sigma_{Y}:\mathcal{H}\rightarrow\Delta\left(S_{Y}\right)$, the tower property of total expectation gives
\begin{align}
\mathbb{E}_{\sigma_{X},\sigma_{Y}}\left[\sum_{t=0}^{\infty}\lambda^{t}\varphi\left(s_{X}^{t},s_{Y}^{t}\right)\right] &= \mathbb{E}_{\sigma_{X},\sigma_{Y}}\left[\mathbb{E}_{\sigma_{X},\sigma_{Y}}\left[\sum_{t=0}^{\infty}\lambda^{t}\varphi\left(s_{X}^{t},s_{Y}^{t}\right)\mid \left(s_{Y}^{t}\right)_{t=0}^{\infty}\right]\right] \nonumber \\
&= \mathbb{E}_{\sigma_{X},\sigma_{Y}}\left[\mathbb{E}_{\sigma_{X},\left(s_{Y}^{t}\right)_{t=0}^{\infty}}\left[\sum_{t=0}^{\infty}\lambda^{t}\varphi\left(s_{X}^{t},s_{Y}^{t}\right)\mid \left(s_{Y}^{t}\right)_{t=0}^{\infty}\right]\right] \nonumber \\
&= \mathbb{E}_{\sigma_{X},\sigma_{Y}}\left[0\right] \nonumber \\
&= 0 ,
\end{align}
where the third equality comes from the hypothesis.
\end{proof}

The generalized next-round correction condition provides a complete characterization of autocratic reactive learning strategies. Since \eq{mainEquation2016} implies \eq{mainIntegral} by linearity of expectation, the following strengthens the main result of \citet{mcavoy:PNAS:2016}.

\begin{proposition}\label{prop:gnrcc_iff}
Consider a function $\varphi:S_{X}\times S_{Y}\rightarrow\mathbb{R}$ and fix $\lambda\in\left[0,1\right)$. If the generalized next-round correction condition holds for the reactive learning strategy $\left(\sigma_{X}^{0},\sigma_{X}^{\ast}\right)$, then $\left(\sigma_{X}^{0},\sigma_{X}^{\ast}\right)$ is a $\left(\varphi,\lambda\right)$-autocratic strategy.
\end{proposition}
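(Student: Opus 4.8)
The plan is to invoke Lemma~\ref{lem:exogenous} to reduce the autocracy condition to enforcement against deterministic opponent sequences, and then to exploit the telescoping structure built into the generalized next-round correction condition. First I would fix an arbitrary exogenous sequence $\left\{s_{Y}^{t}\right\}_{t=0}^{\infty}\subseteq S_{Y}$ and track the chain of mixed actions $\left\{\tau_{X}^{t}\right\}_{t=0}^{\infty}$ that $\left(\sigma_{X}^{0},\sigma_{X}^{\ast}\right)$ generates against it, namely $\tau_{X}^{0}=\sigma_{X}^{0}$ and $\tau_{X}^{t+1}=\sigma_{X}^{\ast}\left[\tau_{X}^{t},s_{Y}^{t}\right]$. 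The key structural observation is that, because the reactive learning strategy conditions on its own previous \emph{mixed} action rather than a realized action, this chain is deterministic, and by the defining closure property of $\mathcal{M}_{X}$ in \eq{ReactionSet}, every $\tau_{X}^{t}$ lies in $\mathcal{M}_{X}$. Hence the generalized next-round correction condition \eq{mainIntegral} is applicable at each pair $\left(\tau_{X}^{t},s_{Y}^{t}\right)$.

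The second step is to compute the per-round expected value of $\varphi$. Since the chain is deterministic and $s_{X}^{t}\sim\tau_{X}^{t}$ conditional on the fixed $Y$-sequence, linearity of expectation gives $\mathbb{E}_{\sigma_{X},\left(s_{Y}^{t}\right)}\left[\varphi\left(s_{X}^{t},s_{Y}^{t}\right)\right]=\varphi\left(\tau_{X}^{t},s_{Y}^{t}\right)$, the linear extension. Substituting \eq{mainIntegral} and using $\sigma_{X}^{\ast}\left[\tau_{X}^{t},s_{Y}^{t}\right]=\tau_{X}^{t+1}$ yields, for each $t$,
\begin{align}
\varphi\left(\tau_{X}^{t},s_{Y}^{t}\right) = \Psi\left(\tau_{X}^{t}\right) - \lambda\Psi\left(\tau_{X}^{t+1}\right) - \left(1-\lambda\right)\Psi\left(\sigma_{X}^{0}\right) .
\end{align}

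Third, I would form the discounted sum $\sum_{t=0}^{\infty}\lambda^{t}\varphi\left(\tau_{X}^{t},s_{Y}^{t}\right)$. The contribution $\lambda^{t}\Psi\left(\tau_{X}^{t}\right)-\lambda^{t+1}\Psi\left(\tau_{X}^{t+1}\right)$ telescopes, collapsing to $\Psi\left(\tau_{X}^{0}\right)=\Psi\left(\sigma_{X}^{0}\right)$, while the geometric series $\left(1-\lambda\right)\Psi\left(\sigma_{X}^{0}\right)\sum_{t}\lambda^{t}=\Psi\left(\sigma_{X}^{0}\right)$ exactly cancels it, leaving a total of zero. Multiplying by the normalizing factor $1-\lambda$ preserves this, so $\left(\sigma_{X}^{0},\sigma_{X}^{\ast}\right)$ enforces $\varphi\equiv 0$ against the chosen deterministic sequence; Lemma~\ref{lem:exogenous} then upgrades the conclusion from deterministic sequences to all behavioral strategies $\sigma_{Y}$, establishing autocracy.

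The only point requiring care, and the main (though mild) obstacle, is justifying the rearrangement of the infinite series. I would note that $\psi$ is defined on the finite set $S_{X}$ and is therefore bounded, whence $\Psi$ is bounded on $\Delta\left(S_{X}\right)$; consequently $\left|\lambda^{t}\Psi\left(\tau_{X}^{t}\right)\right|\leqslant\lambda^{t}\max_{s_{X}}\left|\psi\left(s_{X}\right)\right|$ is summable for $\lambda\in\left[0,1\right)$, so both the telescoping sum and the geometric series converge absolutely and the regrouping is legitimate. This is precisely where the restriction $\lambda<1$ is used, consistent with the undiscounted case being treated separately.
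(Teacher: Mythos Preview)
Your proposal is correct and follows essentially the same approach as the paper: reduce to exogenous sequences via Lemma~\ref{lem:exogenous}, track the deterministic chain $\left\{\tau_{X}^{t}\right\}$, apply \eq{mainIntegral} at each $\left(\tau_{X}^{t},s_{Y}^{t}\right)$, and telescope the discounted sum. The only cosmetic difference is that the paper sets up auxiliary distributions $\mu_{X}^{t}$ and $\nu_{X}^{t}$ on $\left(\mathcal{M}_{X}\times S_{Y}\right)^{t}$ before observing that, against a deterministic $Y$-sequence, these collapse to Dirac masses---arriving at the same recursion you write down directly; your explicit remark on the boundedness of $\Psi$ (via finiteness of $S_{X}$) to justify the vanishing tail is a point the paper leaves implicit.
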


\begin{proof}
By Lemma~\ref{lem:exogenous}, it suffices to show that $\left(\sigma_{X}^{0},\sigma_{X}^{\ast}\right)$ enforces $\varphi\equiv 0$ with discount factor $\lambda$ against all exogenous sequences. Consider an arbitrary sequence $\left\{s_{Y}^{t}\right\}_{t=0}^{\infty}\subseteq S_{Y}$. The strategy $\left(\sigma_{X}^{0},\sigma_{X}^{\ast}\right)$ and this sequence generate a chain of mixed actions $\left\{\tau_{X}^{t}\right\}_{t=0}^{\infty}$ with $\tau_{X}^{0}=\sigma_{X}^{0}$ and $\tau_{X}^{t+1}=\sigma_{X}^{\ast}\left[\tau_{X}^{t},s_{Y}^{t}\right]$ for all $t\geqslant 0$. Let $\mu_{X}^{t}\in\Delta\left(\left(\mathcal{M}_{X}\times S_{Y}\right)^{t}\right)$ be the probability distributions defined inductively by
\begin{subequations}
\begin{align}
\mu_{X}^{0}\left[\varnothing\right] &\coloneqq \sigma_{X}^{\ast}\left[\varnothing\right]\times\sigma_{Y}\left[\varnothing\right] = \sigma_{X}^{0}\times\delta_{s_{Y}^{0}}; \\
\mu_{X}^{t+1}\left(\left(\tau_{X}^{0},s_{Y}^{0}\right),\ldots ,\left(\tau_{X}^{t},s_{Y}^{t}\right)\right) &\coloneqq \mu_{X}^{t}\left(\left(\tau_{X}^{0},s_{Y}^{0}\right),\ldots , \left(\tau_{X}^{t-1},s_{Y}^{t-1}\right)\right) \nonumber \\
&\qquad \times\delta_{\sigma_{X}^{\ast}\left[\tau_{X}^{t-1},s_{Y}^{t-1}\right],\tau_{X}^{t}}\times\delta_{\sigma_{Y}\left[\tau_{X}^{t-1},s_{Y}^{t-1}\right],s_{Y}^{t}}.
\end{align}
\end{subequations}
From this we obtain the marginalized distributions $\left\{\nu_{X}^{t}\right\}_{t=0}^{\infty}$ over $\mathcal{M}_{X}\times S_{Y}$ defined by
\begin{align}
\nu_{X}^{t}\left(A\times B\right)\coloneqq\mu_{X}^{t}\left(\left(\mathcal{M}_{X}\times S_{Y}\right)^{t-1}\times\left(A\times B\right)\right) .    
\end{align}
The generalized next-round correction condition (\eq{mainIntegral}) gives
\begin{align}\label{eq:expectedconcrete}
\mathbb{E}_{\left(\tau_{X},s_{Y}\right)\sim\nu_{X}^{t}}\left[\varphi\left(\tau_{X},s_{Y}\right)\right] = \mathbb{E}_{\left(\tau_{X},s_{Y}\right)\sim\nu_{X}^{t}}\left[\Psi\left(\tau_{X}\right)\right] - \lambda\mathbb{E}_{\left(\tau_{X},s_{Y}\right)\sim\nu_{X}^{t}}\left[\Psi\left(\sigma_{X}^{\ast}\left[\tau_{X},s_{Y}\right]\right)\right] - \left(1-\lambda\right)\Psi\left(\sigma_{X}^{0}\right).
\end{align}
Due to the deterministic nature of the exogenous sequence, by induction we find $\nu_{X}^{t+1}\left(\tau_{X},s_{Y}\right)=\delta_{\left\{\tau_{X}=\tau_{X}^{t},s_{Y}=s_{Y}^{t}\right\}}$ for every $\left(\tau_{X},s_{Y}\right)\in\mathcal{M}_{X}\times S_{Y}$, where $\tau_{X}^{t}\coloneqq\sigma_{X}^{\ast}\left[\tau_{X}^{t-1},s_{Y}^{t-1}\right]$ for every $t\geqslant 1$. Thus,
\begin{align}\label{eq:easytransition}
\mathbb{E}_{\left(\tau_{X},s_{Y}\right)\sim\nu_{X}^{t}}\left[\Psi\left(\sigma_{X}^{\ast}\left[\tau_{X},s_{Y}\right]\right)\right] = \mathbb{E}_{\left(\tau_{X},s_{Y}\right)\sim\nu_{X}^{t+1}}\left[\Psi\left(\tau_{X}\right)\right]
\end{align}
for every $t\geqslant 0$. From \eq{expectedconcrete} and \eq{easytransition} we deduce the recursion
\begin{align}\label{eq:expectedconcrete2}
\mathbb{E}_{\left(\tau_{X},s_{Y}\right)\sim\nu_{X}^{t}}\left[\varphi\left(\tau_{X},s_{Y}\right)\right] = \mathbb{E}_{\left(\tau_{X},s_{Y}\right)\sim\nu_{X}^{t}}\left[\Psi\left(\tau_{X}\right)\right] - \lambda\mathbb{E}_{\left(\tau_{X},s_{Y}\right)\sim\nu_{X}^{t+1}}\left[\Psi\left(\tau_{X}\right)\right] - \left(1-\lambda\right)\Psi\left(\sigma_{X}^{0}\right) ,
\end{align}
which gives a telescoping sum. The result follows by multiplying \eq{expectedconcrete2} by $\lambda^{t}$ and summing over $t\geqslant 0$.
\end{proof}

We next prove that the generalized next-round correction condition is also a necessary condition, and we consider the uniqueness of the enforcement potential $\Psi$.

\begin{proposition}\label{prop:uniquenessOfPsi}
If $\left(\sigma_{X}^{0},\sigma_{X}^{\ast}\right)$ is a reactive learning strategy that is $\left(\varphi ,\lambda\right)$-autocratic, then there exists $\mathcal{M}_{X}\subseteq\Delta\left(S_{X}\right)$ and an enforcement potential $\Psi :\mathcal{M}_{X}\rightarrow\mathbb{R}$ such that
\begin{enumerate}

\item[\emph{(a)}] $\lim_{t\rightarrow\infty}\lambda^{t}\Psi\left(\tau_{X}^{t}\right) =0$ for every chain, $\left\{\tau_{X}^{t}\right\}_{t=0}^{\infty}\subseteq\mathcal{M}_{X}$, derived from $\left(\sigma_{X}^{0},\sigma_{X}^{\ast}\right)$;

\item[\emph{(b)}] The generalized next-round correction condition holds for all $\tau_{X}\in\mathcal{M}_{X}$ and $s_{Y}\in S_{Y}$.

\item[\emph{(c)}] if $\widetilde{\mathcal{M}}_{X}\subseteq\Delta\left(S_{X}\right)$ and $\widetilde{\Psi}:\widetilde{\mathcal{M}}_{X}\rightarrow\mathbb{R}$ also satisfy $\lim_{t\rightarrow\infty}\lambda^{t}\widetilde{\Psi}\left(\widetilde{\tau}_{X}^{t}\right) =0$ for chains $\left\{\widetilde{\tau}_{X}^{t}\right\}_{t=0}^{\infty}\subseteq\widetilde{\mathcal{M}}_{X}$, and \eq{mainIntegral} holds for all $\tau_{X}\in\widetilde{\mathcal{M}}_{X}$ and $s_{Y}\in S_{Y}$, then $\mathcal{M}_{X}\subseteq\widetilde{\mathcal{M}}_{X}$ and $\widetilde{\Psi}\vert_{\mathcal{M}_{X}}-\Psi$ is constant on $\mathcal{M}_{X}$.

\end{enumerate}
\end{proposition}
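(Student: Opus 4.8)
The plan is to realize the enforcement potential as a discounted \emph{correction value}: for a reachable mixed action $\tau_X$ and an infinite opponent sequence $\mathbf{s}_Y = \left(s_Y^0, s_Y^1, \dots\right)$, set $V\left(\tau_X, \mathbf{s}_Y\right) := \sum_{t=0}^\infty \lambda^t \varphi\left(\tau_X^t, s_Y^t\right)$, where $\tau_X^0 = \tau_X$ and $\tau_X^{t+1} = \sigma_X^{\ast}\left[\tau_X^t, s_Y^t\right]$; this converges absolutely since $\varphi$ is bounded and $\lambda < 1$. I would take $\mathcal{M}_X$ to be the reachable set of \eq{ReactionSet} and, after establishing that $V\left(\tau_X, \mathbf{s}_Y\right)$ does not depend on the continuation $\mathbf{s}_Y$, define $\Psi\left(\tau_X\right) := V\left(\tau_X, \mathbf{s}_Y\right)$. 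Throughout I assume $\lambda \in \left(0,1\right)$; the degenerate case $\lambda = 0$ is handled directly.

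The crux is proving this independence. By Lemma~\ref{lem:exogenous}, autocracy is equivalent to enforcement against every deterministic sequence, which along such a sequence reads $V\left(\sigma_X^0, \mathbf{s}_Y\right) = 0$ for all $\mathbf{s}_Y$ (using that $X$'s induced chain is deterministic and $\varphi$ is its own linear extension, so the normalizing $1-\lambda$ divides out). Given $\tau_X \in \mathcal{M}_X$, reachability supplies a finite prefix $\left(s_Y^0, \dots, s_Y^{k-1}\right)$ steering $\sigma_X^0$ to $\tau_X$. For two continuations $\mathbf{a}, \mathbf{b}$ from $\tau_X$, I concatenate the common prefix with each, apply $V\left(\sigma_X^0, \cdot\right) = 0$ to both concatenations, and split each sum at round $k$; the identical prefix contributions cancel, leaving $\lambda^k\left(V\left(\tau_X, \mathbf{a}\right) - V\left(\tau_X, \mathbf{b}\right)\right) = 0$. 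Since $\lambda^k \neq 0$, independence follows (this is exactly where $\lambda > 0$ is needed). I expect this prefix-cancellation step to be the main obstacle, as it is the one place the global autocratic hypothesis is converted into a local, state-by-state statement.

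With $\Psi$ well-defined, the remaining existence properties are short. Boundedness $\lvert \Psi \rvert \leqslant \max\lvert\varphi\rvert /\left(1-\lambda\right)$ gives (a) immediately, since $\lambda^t \to 0$. For (b), I choose a continuation beginning with $s_Y$ and peel off its first term, $V\left(\tau_X, \mathbf{s}_Y\right) = \varphi\left(\tau_X, s_Y\right) + \lambda V\left(\sigma_X^{\ast}\left[\tau_X, s_Y\right], \cdot\right)$; rewriting with $\Psi$ and using $\Psi\left(\sigma_X^0\right) = V\left(\sigma_X^0, \cdot\right) = 0$ to absorb the $\left(1-\lambda\right)\Psi\left(\sigma_X^0\right)$ term yields the generalized next-round correction condition (\eq{mainIntegral}).

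For uniqueness (c), I first note that any admissible pair $\left(\widetilde{\mathcal{M}}_X, \widetilde{\Psi}\right)$ must have $\widetilde{\mathcal{M}}_X$ containing $\sigma_X^0$ and closed under each response map $\tau_X \mapsto \sigma_X^{\ast}\left[\tau_X, s_Y\right]$, merely for \eq{mainIntegral} to be well-posed; hence $\widetilde{\mathcal{M}}_X \supseteq \mathcal{M}_X$ by minimality of the reachable set. Subtracting the two instances of \eq{mainIntegral} on $\mathcal{M}_X$, the difference $D := \widetilde{\Psi}\vert_{\mathcal{M}_X} - \Psi$ satisfies $D\left(\tau_X\right) = \lambda D\left(\sigma_X^{\ast}\left[\tau_X, s_Y\right]\right) + \left(1-\lambda\right)D\left(\sigma_X^0\right)$; setting $E := D - D\left(\sigma_X^0\right)$ turns this into the homogeneous relation $E\left(\tau_X^t\right) = \lambda E\left(\tau_X^{t+1}\right)$ along every chain. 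Iterating gives $E\left(\tau_X^k\right) = \lambda^{T-k} E\left(\tau_X^T\right)$ for $T \geqslant k$, and condition (a) for both $\Psi$ and $\widetilde{\Psi}$ forces $\lambda^T E\left(\tau_X^T\right) \to 0$; since every reachable state occurs as some $\tau_X^k$, we conclude $E \equiv 0$, i.e.\ $D$ is the constant $D\left(\sigma_X^0\right)$ on $\mathcal{M}_X$.
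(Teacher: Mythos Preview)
Your proof is correct and takes essentially the same approach as the paper: the paper defines $\Psi\left(\tau_{X}\right) \coloneqq -\lambda^{-T}\sum_{t=0}^{T-1}\lambda^{t}\varphi\left(\tau_{X}^{t},s_{Y}^{t}\right)$ via the prefix sum, which by autocracy equals your forward continuation value $V\left(\tau_{X},\mathbf{s}_{Y}\right)$, and its well-definedness argument is exactly your prefix-cancellation step. The only minor difference is in part~(c): the paper argues by forward induction along chains (the recurrence uniquely determines $\Psi$ from $\Psi\left(\sigma_{X}^{0}\right)=0$, without invoking the limit hypothesis on $\widetilde{\Psi}$), whereas you iterate the homogeneous relation and pass to the limit; both arguments are valid.
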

\begin{proof}
Let $\mathcal{M}_{X}$ be the set of all truncated chains of mixed actions derived from $\sigma_{X}^{\ast}$. In particular, $\mathcal{M}_{X}$ is the set of all actions
$\tau_{X}\in\Delta\left(S_{X}\right)$ for which there exist sequences $\left\{\tau_{X}^{0},\dots ,\tau_{X}^{T}\right\}\subseteq\Delta\left(S_{X}\right)$ and $\left\{s_{Y}^{0},\dots ,s_{Y}^{T-1}\right\}\subseteq S_{Y}$ with $\tau_{X}^{0}=\sigma_{X}^{0}$, $\tau_{X}^{T}=\tau_{X}$, and $\tau_{X}^{t}=\sigma_{X}^{\ast}\left[\tau_{X}^{t-1},s_{Y}^{t-1}\right]$ for $1\leqslant t\leqslant T$. For $\tau_{X}\in\mathcal{M}_{X}$, with sequences $\left\{\tau_{X}^{0},\dots ,\tau_{X}^{T}\right\}\subseteq\Delta\left(S_{X}\right)$ and $\left\{s_{Y}^{0},\dots ,s_{Y}^{T-1}\right\}\subseteq S_{Y}$ chosen as above, let
\begin{align}\label{eq:psi0}
\Psi\left(\tau_{X}\right) &\coloneqq -\lambda^{-T} \sum_{t=0}^{T-1} \lambda^{t} \varphi\left(\tau_{X}^{t},s_{Y}^{t}\right) .
\end{align}
(We define $\Psi\left(\sigma_{X}^{0}\right)\coloneqq 0$.) 
To see that \eq{psi0} gives a well-defined value of $\Psi$ at $\tau_{X}$, suppose that $\left\{\widetilde{\tau}_{X}^{0},\dots ,\widetilde{\tau}_{X}^{\widetilde{T}}\right\}\subseteq\Delta\left(S_{X}\right)$ and $\left\{\widetilde{s}_{Y}^{0},\dots ,\widetilde{s}_{Y}^{\widetilde{T}-1}\right\}\subseteq S_{Y}$ are also sequences with $\widetilde{\tau}_{X}^{0}=\sigma_{X}^{0}$, $\widetilde{\tau}_{X}^{\widetilde{T}}=\tau_{X}$, and $\widetilde{\tau}_{X}^{t}=\sigma_{X}^{\ast}\left[\widetilde{\tau}_{X}^{t-1},\widetilde{s}_{Y}^{t-1}\right]$ for $1\leqslant t\leqslant\widetilde{T}$. For any $s_{Y}^{\ast}\in S_{Y}$, we can extend the sequences $\left\{\tau_{X}^{t}\right\}_{t=0}^{T}$ and $\left\{\widetilde{\tau}_{X}^{t}\right\}_{t=0}^{\widetilde{T}}$ as follows: for each $t\geqslant 1$, let $\tau_{X}^{T+t}\coloneqq\sigma_{X}^{\ast}\left[\tau_{X}^{T+t-1},s_{Y}^{\ast}\right]$ and $\widetilde{\tau}_{X}^{\widetilde{T}+t}\coloneqq\sigma_{X}^{\ast}\left[\widetilde{\tau}_{X}^{\widetilde{T}+t-1},s_{Y}^{\ast}\right]$. Since $\tau_{X}^{T}=\widetilde{\tau}_{X}^{\widetilde{T}}=\tau_{X}$, we have $\tau_{X}^{T+t}=\widetilde{\tau}_{X}^{\widetilde{T}+t}$ for every $t\geqslant 0$. Therefore, by the hypothesis,
\begin{align}\label{eq:wellDefined}
0 &= \sum_{t=0}^{T-1}\lambda^{t}\varphi\left(\tau_{X}^{t},s_{Y}^{t}\right) + \sum_{t=T}^{\infty}\lambda^{t}\varphi\left(\tau_{X}^{t},s_{Y}^{\ast}\right) \nonumber \\
&= \sum_{t=0}^{\widetilde{T}-1}\lambda^{t}\varphi\left(\widetilde{\tau}_{X}^{t},\widetilde{s}_{Y}^{t}\right) + \sum_{t=\widetilde{T}}^{\infty}\lambda^{t}\varphi\left(\widetilde{\tau}_{X}^{t},s_{Y}^{\ast}\right) \nonumber \\
&= \sum_{t=0}^{\widetilde{T}-1}\lambda^{t}\varphi\left(\widetilde{\tau}_{X}^{t},\widetilde{s}_{Y}^{t}\right) + \lambda^{\widetilde{T}-T}\sum_{t=T}^{\infty}\lambda^{t}\varphi\left(\tau_{X}^{t},s_{Y}^{\ast}\right) .
\end{align}
It follows immediately that $\Psi$ is well-defined since, from \eq{wellDefined}, we obtain
\begin{align}
-\lambda^{-T}\sum_{t=0}^{T-1}\lambda^{t}\varphi\left(\tau_{X}^{t},s_{Y}^{t}\right) &= -\lambda^{-\widetilde{T}}\sum_{t=0}^{\widetilde{T}-1}\lambda^{t}\varphi\left(\widetilde{\tau}_{X}^{t},\widetilde{s}_{Y}^{t}\right) .
\end{align}
That $\lim_{t\rightarrow\infty}\lambda^{t}\Psi\left(\tau_{X}^{t}\right) =0$ for every chain of mixed actions $\left\{\tau_{X}^{t}\right\}_{t=0}^{\infty}$ derived from $\left(\sigma_{X}^{0},\sigma_{X}^{\ast}\right)$ similarly follows from the fact that $\left(\sigma_{X}^{0},\sigma_{X}^{\ast}\right)$ is $\left(\varphi ,\lambda\right)$-autocratic, which establishes part \emph{(a)}.

For part \emph{(b)}, consider $\tau_{X}\in\mathcal{M}_{X}$. We may assume, without a loss of generality, that $\tau_{X}$ is not the initial action $\sigma_{X}^{0}$, since $\Psi\left(\sigma_{X}^{0}\right) =0$. Then, as above, there exist $T\geqslant 1$ and sequences $\left\{\tau_{X}^{0},\dots ,\tau_{X}^{T}\right\}\subseteq\Delta\left(S_{X}\right)$ and $\left\{s_{Y}^{0},\dots ,s_{Y}^{T-1}\right\}\subseteq S_{Y}$ with $\tau_{X}^{0}=\sigma_{X}^{0}$, $\tau_{X}^{T}=\tau_{X}$, and $\tau_{X}^{t}=\sigma_{X}^{\ast}\left[\tau_{X}^{t-1},s_{Y}^{t-1}\right]$ for $1\leqslant t\leqslant T$. Let $s_{Y}\in S_{Y}$, and set $\tau_{X}^{T+1}\coloneqq\sigma_{X}^{\ast}\left[\tau_{X},s_{Y}\right]$. By the definition of $\Psi$, we have 
\begin{align}
\Psi\left(\tau_{X}\right) -\lambda\Psi\left(\sigma_{X}^{\ast}\left[\tau_{X},s_{Y}\right]\right) &=
\Psi\left(\tau_{X}^T\right) -\lambda\Psi\left(\tau_{X}^{T+1}\right) \nonumber \\ &=-\lambda^{-T}\sum_{t=0}^{T-1} \lambda^{t}\varphi\left(\tau_{X}^{t},s_{Y}^{t}\right) +\lambda\lambda^{-\left(T+1\right)}\sum_{t=0}^{T} \lambda^{t} \varphi\left(\tau_{X}^{t},s_{Y}^{t}\right) \nonumber \\
&= \varphi\left(\tau_{X},s_{Y}\right) .
\end{align}
The generalized next-round correction condition follows.

Since $\mathcal{M}_{X}$ is defined as the set of all chains of mixed actions derived from $\left(\sigma_{X}^{0},\sigma_{X}^{\ast}\right)$, it follows that any other such $\widetilde{\mathcal{M}}_{X}$ must contain $\mathcal{M}_{X}$. With $\widehat{\Psi}\coloneqq\widetilde{\Psi}\vert_{\mathcal{M}_{X}}-\widetilde{\Psi}\left(\sigma_{X}^{0}\right)$, we have $\widehat{\Psi}\left(\sigma_{X}^{0}\right) =0$ and
\begin{align}
\widehat{\Psi}\left(\sigma_{X}\left[\tau_{X},s_{Y}\right]\right) &= - \lambda^{-1}\varphi\left(\tau_{X},s_{Y}\right) + \lambda^{-1}\widehat{\Psi}\left(\tau_{X}\right)
\end{align}
for every $\tau_{X}\in\mathcal{M}_{X}$ and $s_{Y}\in S_{Y}$. Using this recurrence, it follows by induction that $\widehat{\Psi}=\Psi$, and thus $\widetilde{\Psi}\vert_{\mathcal{M}_{X}}-\Psi=\widetilde{\Psi}\left(\sigma_{X}^{0}\right)$ on $\mathcal{M}_{X}$, which completes the proof of part \emph{(c)}.
\end{proof}

\section{Autocratic strategies with short memory}\label{sec:autocratic}
For most of this section, we assume $\lambda\in\left[0,1\right)$; that is, the game terminates with positive probability $1-\lambda >0$ after each round. Only in Section~\ref{sec:undiscounted}, which covers the undiscounted, infinite-horizon case, do we consider $\lambda\rightarrow 1$.

A basic question for implementing autocratic strategies in practice is: how much memory is required? While Definition~\ref{def:autocratic} allows strategies with arbitrary memory, we show that strategies with shorter memory suffice. In this section, we prove that every autocratic strategy, regardless of its complexity, can be replaced by a two-point reactive learning strategy, which is one that mixes between just two fixed distributions (mixed actions).

We proceed in three steps. First, we show that autocratic strategies need only condition on the opponent's history (Proposition~\ref{prop:reactive_behavioral}). Second, we establish that such strategies correspond to reactive learning strategies with a right-invariance property, which allows longer action histories to be ``rolled up'' into information that can be carried from round to round. Finally, we construct explicit two-point strategies that enforce any enforceable payoff relationship (Theorem~\ref{th:mainresultDiscounted}).

\begin{proposition}\label{prop:reactive_behavioral}
Suppose that $\sigma_{X}:\mathcal{H}\rightarrow\Delta\left(S_{X}\right)$ is $\left(\varphi ,\lambda\right)$-autocratic. Then, with the opponent-action history space $\mathcal{H}_{Y}\coloneqq\bigcup_{t\geqslant 0}S_{Y}^{t}$ (where we interpret $S_{Y}^{0}=\left\{\varnothing\right\}$ as the ``empty'' history), there exists an opponent-conditioned strategy, $\widetilde{\sigma}_{X}:\mathcal{H}_{Y}\rightarrow\Delta\left(S_{X}\right)$, that is also $\left(\varphi ,\lambda\right)$-autocratic. In particular, it always suffices for $X$ to condition on only the observed history of $Y$ alone.
\end{proposition}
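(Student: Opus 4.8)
The plan is to construct the opponent-conditioned strategy $\widetilde{\sigma}_{X}$ by ``integrating out'' $X$'s own realized action history from $\sigma_{X}$, retaining only the dependence on $Y$'s observed actions, and then to verify autocracy through Lemma~\ref{lem:exogenous}. Concretely, fix an opponent history $h_{Y}=\left(s_{Y}^{0},\ldots ,s_{Y}^{t-1}\right)\in\mathcal{H}_{Y}$. Running $\sigma_{X}$ against any deterministic sequence whose first $t$ entries are $h_{Y}$ induces a distribution $P_{h_{Y}}$ over $X$'s own action histories $\left(s_{X}^{0},\ldots ,s_{X}^{t-1}\right)\in S_{X}^{t}$, defined inductively by $s_{X}^{0}\sim\sigma_{X}\left(\varnothing\right)$ and $s_{X}^{\tau}\sim\sigma_{X}\left(\left(s_{X}^{0},s_{Y}^{0}\right),\ldots ,\left(s_{X}^{\tau -1},s_{Y}^{\tau -1}\right)\right)$. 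I would then set
\[
\widetilde{\sigma}_{X}\left(h_{Y}\right) \coloneqq \mathbb{E}_{\left(s_{X}^{0},\ldots ,s_{X}^{t-1}\right)\sim P_{h_{Y}}}\left[\sigma_{X}\left(\left(s_{X}^{0},s_{Y}^{0}\right),\ldots ,\left(s_{X}^{t-1},s_{Y}^{t-1}\right)\right)\right],
\]
the expected mixed action that $\sigma_{X}$ plays at round $t$ against this fixed opponent prefix, averaged over $X$'s induced randomization.

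First I would check that $\widetilde{\sigma}_{X}$ is well-defined on $\mathcal{H}_{Y}$. The only point to verify is that $P_{h_{Y}}$ depends on $h_{Y}$ alone and not on any future opponent actions: since $X$'s randomization at round $\tau$ is measurable with respect to the history through round $\tau -1$, it references only $s_{Y}^{0},\ldots ,s_{Y}^{\tau -1}$, all of which are entries of $h_{Y}$. Hence $P_{h_{Y}}$, and therefore $\widetilde{\sigma}_{X}\left(h_{Y}\right)$, is a genuine function of $h_{Y}$.

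By Lemma~\ref{lem:exogenous}, it suffices to show that $\widetilde{\sigma}_{X}$ enforces $\varphi\equiv 0$ against every exogenous sequence $\left\{s_{Y}^{t}\right\}_{t=0}^{\infty}\subseteq S_{Y}$. Fix such a sequence and write $h_{Y}^{t}\coloneqq\left(s_{Y}^{0},\ldots ,s_{Y}^{t-1}\right)$. The crux is a per-round marginal-matching claim: against this fixed sequence, the distribution of the pair $\left(s_{X}^{t},s_{Y}^{t}\right)$ is identical under $\sigma_{X}$ and under $\widetilde{\sigma}_{X}$. Indeed, $s_{Y}^{t}$ is deterministic in both cases; and under $\sigma_{X}$, the law of $X$'s history at the start of round $t$ is exactly $P_{h_{Y}^{t}}$, so the marginal law of $s_{X}^{t}$ is the $P_{h_{Y}^{t}}$-mixture of $\sigma_{X}$'s responses, which equals $\widetilde{\sigma}_{X}\left(h_{Y}^{t}\right)$ — precisely the (deterministic) mixed action played by $\widetilde{\sigma}_{X}$ at round $t$. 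Because $\varphi$ is evaluated round-by-round and the discounted objective is linear in the per-round expectations, $\mathbb{E}\left[\left(1-\lambda\right)\sum_{t}\lambda^{t}\varphi\left(s_{X}^{t},s_{Y}^{t}\right)\right]$ depends only on these marginals. Thus the objective takes the same value under $\widetilde{\sigma}_{X}$ as under $\sigma_{X}$, and the latter is $0$ since $\sigma_{X}$ is autocratic. Applying Lemma~\ref{lem:exogenous} in the converse direction then yields that $\widetilde{\sigma}_{X}$ is $\left(\varphi ,\lambda\right)$-autocratic.

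The essential insight that makes this work — and the only place that could go wrong if attempted naively — is the reduction to exogenous sequences. Against a genuinely adaptive opponent, replacing $\sigma_{X}$ by its marginalized version $\widetilde{\sigma}_{X}$ would in general change the joint law of play, since $Y$ may condition on $X$'s \emph{realized} (not averaged) actions, so the per-round marginals need not match. It is exactly Lemma~\ref{lem:exogenous} that licenses restricting to deterministic opponent sequences, against which $Y$ cannot react to $X$'s realizations and the marginal of $s_{X}^{t}$ is therefore unaffected by integrating out $X$'s history. The remaining bookkeeping (interchanging the sum and expectation, justified by boundedness of $\varphi$ on the finite action spaces together with $\sum_{t}\lambda^{t}<\infty$) is routine.
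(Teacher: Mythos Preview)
Your proposal is correct and follows essentially the same approach as the paper: both define $\widetilde{\sigma}_{X}$ by averaging $\sigma_{X}$'s response over the distribution on $X$'s own action history induced by the fixed opponent prefix, then invoke Lemma~\ref{lem:exogenous} to reduce verification to deterministic opponent sequences, against which the per-round marginals of $s_{X}^{t}$ coincide. Your write-up is somewhat more explicit about the well-definedness check and about why exogeneity is the crucial ingredient, but the argument is the same.
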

\begin{proof}
Suppose that $\sigma_{X}:\mathcal{H}\rightarrow\Delta\left(S_{X}\right)$ is an autocratic strategy for $X$. If the opponent plays an exogenous sequence of pure actions, $\left\{s_{Y}^{t}\right\}_{t=0}^{\infty}$, which we denote by $\sigma_{Y}$, then
\begin{align}
0 &= \mathbb{E}_{\sigma_{X},\sigma_{Y}}\left[\sum_{t=0}^{\infty} \lambda^{t} \varphi\left(s_{X}^{t},s_{Y}^{t}\right) \right] \nonumber \\
&= \sum_{t=0}^{\infty} \lambda^{t} \mathbb{E}_{\sigma_{X},\sigma_{Y}}\left[\varphi\left(\sigma_{X}\left[\left(s_{X}^{0},s_{Y}^{0}\right) ,\dots ,\left(s_{X}^{t-1},s_{Y}^{t-1}\right)\right] ,s_{Y}^{t}\right) \right] \nonumber \\
&= \sum_{t=0}^{\infty} \lambda^{t} \varphi\left(\widetilde{\sigma}_{X}\left[s_{Y}^{0},\dots ,s_{Y}^{t-1}\right] ,s_{Y}^{t}\right) ,
\end{align}
where $\widetilde{\sigma}_{X}:\mathcal{H}_{Y}\rightarrow\Delta\left(S_{X}\right)$ is defined by $\widetilde{\sigma}_{X}^{0}\coloneqq\sigma_{X}^{0}$ and
\begin{align}
\widetilde{\sigma}_{X}\left[s_{Y}^{0},\dots ,s_{Y}^{t-1}\right]\left(s_{X}^{t}\right) &\coloneqq \mathbb{E}_{\left(s_{X}^{0},\dots ,s_{X}^{t-1}\right)\in S_{X}^{t}}\left[ \sigma_{X}\left[\left(s_{X}^{0},s_{Y}^{0}\right) ,\dots ,\left(s_{X}^{t-1},s_{Y}^{t-1}\right)\right]\left(s_{X}^{t}\right)\right] .
\end{align}
We simply do not track dependence on $X$'s actions, which we can do because $X$'s mixed actions determine their own history and we are taking expectations in the evaluation of autocratic strategies. We note that this strategy, while defined using exogenous sequences of opponent (pure) actions, can be used against any opponent, and it remains autocratic by Lemma~\ref{lem:exogenous}, as desired.
\end{proof}

Proposition~\ref{prop:reactive_behavioral} suggests an interesting connection between autocratic strategies and reactive learning strategies. Consider the map taking a reactive learning strategy, $\left(\sigma_{X}^{0},\sigma_{X}^{\ast}\right)$, to a behavioral strategy, $\mathcal{U}\left(\sigma_{X}^{0},\sigma_{X}^{\ast}\right):\mathcal{H}_{Y}\rightarrow\Delta\left(S_{X}\right)$,
with $\mathcal{U}\left(\sigma_{X}^{0},\sigma_{X}^{\ast}\right)\left[\varnothing\right] =\sigma_{X}^{0}$ and $\mathcal{U}\left(\sigma_{X}^{0},\sigma_{X}^{\ast}\right)\left[s_{Y}^{0},\dots ,s_{Y}^{t}\right] =\tau_{X}^{t+1}$, where $\tau_{X}^{i+1}=\sigma_{X}^{\ast}\left[\tau_{X}^{i},s_{Y}^{i}\right]$ for $i=0,\dots ,t$ and $\tau_{X}^{0}=\sigma_{X}^{0}$. Recall that $\left(\sigma_{X}^{0},\sigma_{X}^{\ast}\right)$ is implicitly based on a subset $\mathcal{M}_{X}\subseteq\Delta\left(S_{X}\right)$, such that $\sigma_{X}^{0}\in\mathcal{M}_{X}$ and $\sigma_{X}^{\ast}$ is a map from $\mathcal{M}_{X}\times S_{Y}$ to $\mathcal{M}_{X}$. The map $\mathcal{U}$ is not surjective because that would require that if two distinct histories prescribe the same randomization for player $X$ at time $t$, then they must prescribe the same randomization for $X$ at all $T\geqslant t$ when $Y$ uses the same actions in both sequences thereafter. We can re-frame this problem slightly. We say that $\sigma_{X}$ is ``right-invariant'' if, whenever $\sigma_{X}\left[\alpha\right] =\sigma_{X}\left[\beta\right]$ for $\alpha ,\beta\in\mathcal{H}_{Y}$, we have $\sigma_{X}\left[\alpha\gamma\right] =\sigma_{X}\left[\beta\gamma\right]$ for all $\gamma\in\mathcal{H}_{Y}$. One can check that if $\sigma_{X}:\mathcal{H}_{Y}\rightarrow\Delta\left(S_{X}\right)$ is in the image of $\mathcal{U}$, then $\sigma_{X}$ is right-invariant. Conversely, if $\sigma_{X}:\mathcal{H}_{Y}\rightarrow\Delta\left(S_{X}\right)$ is right-invariant, then $\sigma_{X}$ lifts to a canonical element in the domain of $\mathcal{U}$.

One of the goals of this paper is to show that every autocratic strategy, regardless of complexity, can be replaced by a reactive learning strategy, i.e., an element of the preimage of $\mathcal{U}$. In fact, we show that this can be done with a particularly simple and convenient reactive learning strategy.

\subsection{Enforcing payoff constraints with short memory}
The key to constructing simple autocratic strategies is identifying when two mixed actions can serve as the building blocks for enforcement. Lemma~\ref{lem:inequalities} ultimately provides necessary and sufficient conditions: two mixed actions $\tau_{X}^{+}$ and $\tau_{X}^{-}$ can enforce a constraint if their maximum and minimum payoffs satisfy certain inequalities that balance the discounting and allow for stable enforcement across all opponent responses.

Intuitively, these inequalities ensure that player $X$ can always find an appropriate mixture of $\tau_{X}^{+}$ and $\tau_{X}^{-}$ in response to any opponent action $s_{Y}$ such that the expected payoff remains on target. The ``$+$'' and ``$-$'' superscripts suggest their roles: $\tau_{X}^{+}$ typically yields higher values of $\varphi$ while $\tau_{X}^{-}$ yields lower values, and the strategy adjusts the mixture to maintain the target.

\begin{lemma}\label{lem:inequalities}
Suppose that $\lambda\in\left[0,1\right)$ and that $\tau_{X}^{\pm}\in\Delta\left(S_{X}\right)$ satisfy the inequalities
\begin{subequations}\label{eq:lemma_inequality}
\begin{align}
\min_{s_{Y}\in S_{Y}}\varphi\left(\tau_{X}^{+},s_{Y}\right) &\geqslant \left(1-\lambda\right)\max_{s_{Y}\in S_{Y}}\varphi\left(\tau_{X}^{+},s_{Y}\right) +\lambda\max_{s_{Y}\in S_{Y}}\varphi\left(\tau_{X}^{-},s_{Y}\right) ; \\
\max_{s_{Y}\in S_{Y}}\varphi\left(\tau_{X}^{-},s_{Y}\right) &\leqslant \lambda\min_{s_{Y}\in S_{Y}}\varphi\left(\tau_{X}^{+},s_{Y}\right) +\left(1-\lambda\right)\min_{s_{Y}\in S_{Y}}\varphi\left(\tau_{X}^{-},s_{Y}\right) .
\end{align}
\end{subequations}
Then, there exists a function $p^{\ast}:\left[0,1\right]\times S_{Y}\rightarrow\left[0,1\right]$ such that, with the reaction
\begin{align}
\sigma_{X}^{\ast}\left[p\tau_{X}^{+}+\left(1-p\right)\tau_{X}^{-},s_{Y}\right] &= p^{\ast}\left[p,s_{Y}\right]\tau_{X}^{+}+\left(1-p^{\ast}\left[p,s_{Y}\right]\right)\tau_{X}^{-} ,
\end{align}
$X$ can enforce $\varphi\equiv K$ for any $K\in\left[\max_{s_{Y}\in S_{Y}}\varphi\left(\tau_{X}^{-},s_{Y}\right) ,\min_{s_{Y}\in S_{Y}}\varphi\left(\tau_{X}^{+},s_{Y}\right)\right]$.
\end{lemma}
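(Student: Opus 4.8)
The plan is to invoke Proposition~\ref{prop:gnrcc_iff}: enforcing $\varphi\equiv K$ is the same as being $\left(\varphi -K,\lambda\right)$-autocratic, so it suffices to build a reactive learning strategy supported on the segment $\mathcal{M}_{X}=\left\{p\tau_{X}^{+}+\left(1-p\right)\tau_{X}^{-}:p\in\left[0,1\right]\right\}$, together with an enforcement potential $\Psi$, satisfying the generalized next-round correction condition (\eq{mainIntegral}) for $\varphi -K$. Abbreviate $M^{\pm}\coloneqq\max_{s_{Y}}\varphi\left(\tau_{X}^{\pm},s_{Y}\right)$ and $m^{\pm}\coloneqq\min_{s_{Y}}\varphi\left(\tau_{X}^{\pm},s_{Y}\right)$, so the hypotheses read $m^{+}\geqslant\left(1-\lambda\right)M^{+}+\lambda M^{-}$ and $M^{-}\leqslant\lambda m^{+}+\left(1-\lambda\right)m^{-}$; using $m^{-}\leqslant M^{-}$, the second already forces $M^{-}\leqslant m^{+}$ when $\lambda>0$, so the target interval $\left[M^{-},m^{+}\right]$ is nonempty. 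By linearity of $\varphi$ in its first argument, the value $\varphi\left(p\tau_{X}^{+}+\left(1-p\right)\tau_{X}^{-},s_{Y}\right)=p\varphi\left(\tau_{X}^{+},s_{Y}\right)+\left(1-p\right)\varphi\left(\tau_{X}^{-},s_{Y}\right)$ lies in $\left[pm^{+}+\left(1-p\right)m^{-},\,pM^{+}+\left(1-p\right)M^{-}\right]$ for every state $p$.

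I would look for a linear potential, $\Psi\left(p\tau_{X}^{+}+\left(1-p\right)\tau_{X}^{-}\right)=\mu p$, an initial state $p^{0}$ (with $\sigma_{X}^{0}=p^{0}\tau_{X}^{+}+\left(1-p^{0}\right)\tau_{X}^{-}$), and then solve \eq{mainIntegral} for the reaction. This yields the explicit formula
\begin{align}
p^{\ast}\left[p,s_{Y}\right]=\frac{1}{\lambda}\left(p-\left(1-\lambda\right)p^{0}-\frac{\varphi\left(p\tau_{X}^{+}+\left(1-p\right)\tau_{X}^{-},s_{Y}\right)-K}{\mu}\right),
\end{align}
so that the condition holds by construction; the only thing left to verify is that $p^{\ast}\in\left[0,1\right]$ for all $p$ and $s_{Y}$, which keeps the induced chain inside $\mathcal{M}_{X}$ and allows Proposition~\ref{prop:gnrcc_iff} to apply. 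Since $p^{\ast}$ is affine and decreasing in $\varphi\left(\cdot,s_{Y}\right)$ and affine in $p$, the requirement $p^{\ast}\in\left[0,1\right]$ reduces to checking the endpoints $p\in\left\{0,1\right\}$ against the extreme payoff values $M^{\pm},m^{\pm}$, i.e.\ to four scalar inequalities in the free parameters $\mu>0$ and $q\coloneqq\left(1-\lambda\right)p^{0}$, together with the range constraint $q\in\left[0,1-\lambda\right]$.

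The crux --- and the only place the hypotheses are used --- is showing these can be satisfied simultaneously. Rewriting them as two-sided bounds on $q$, a feasible $q$ exists precisely when every lower bound is at most every upper bound; the nontrivial comparisons produce five lower bounds and one upper bound on $\mu$, namely $\mu\geqslant M^{+}-K$, $\mu\geqslant K-m^{-}$, $\mu\geqslant\left(M^{-}-m^{-}\right)/\lambda$, $\mu\geqslant\left(M^{+}-m^{+}\right)/\lambda$, $\mu\geqslant\left(M^{+}-m^{-}\right)/\left(1+\lambda\right)$, and $\mu\leqslant\left(m^{+}-M^{-}\right)/\left(1-\lambda\right)$. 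I would set $\mu=\left(m^{+}-M^{-}\right)/\left(1-\lambda\right)$ and check each lower bound against it: using $M^{-}\leqslant K\leqslant m^{+}$, the bounds involving $M^{+}$ collapse to the first hypothesis, those involving $m^{-}$ collapse to the second hypothesis, and the mixed bound $\left(M^{+}-m^{-}\right)/\left(1+\lambda\right)\leqslant\left(m^{+}-M^{-}\right)/\left(1-\lambda\right)$ follows by adding the two hypotheses. This is elementary but is the main obstacle, since it is exactly where the balance between discounting and the payoff spreads of $\tau_{X}^{\pm}$ must be reconciled. With $\mu$ fixed, any admissible $q$ gives $p^{0}=q/\left(1-\lambda\right)\in\left[0,1\right]$, completing the construction.

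Finally, I would dispose of the degenerate case $m^{+}=M^{-}$ (which is forced whenever $\lambda=0$) separately: the two hypotheses then pinch $M^{+}=m^{+}=M^{-}=m^{-}=K$, so $\varphi$ is identically $K$ on both $\tau_{X}^{+}$ and $\tau_{X}^{-}$, and the constant potential $\Psi\equiv 0$ satisfies \eq{mainIntegral} for any reaction. The substantive case is therefore $\lambda\in\left(0,1\right)$ with $m^{+}>M^{-}$, treated above.
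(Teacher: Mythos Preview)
Your approach is essentially identical to the paper's: both choose a linear potential on the segment, solve the generalized next-round correction condition for $p^{\ast}$, and then verify feasibility; your choice $\mu=\left(m^{+}-M^{-}\right)/\left(1-\lambda\right)$ is exactly the paper's $\psi\left(\tau_{X}^{+}\right)-\psi\left(\tau_{X}^{-}\right)$, and your feasibility analysis reproduces (in slightly more systematic form) the paper's interval intersection for $p_{0}$.

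One small slip: your parenthetical ``which is forced whenever $\lambda=0$'' is not correct. At $\lambda=0$ the hypotheses give $m^{+}=M^{+}$ and $m^{-}=M^{-}$ (so $\varphi\left(\tau_{X}^{\pm},\cdot\right)$ are each constant in $s_{Y}$), but they do not force $m^{+}=M^{-}$. The residual case $\lambda=0$, $m^{+}>M^{-}$ is therefore not covered by either your main argument (which divides by $\lambda$) or your degenerate case. It is, however, trivial: since both $\varphi\left(\tau_{X}^{+},\cdot\right)\equiv m^{+}$ and $\varphi\left(\tau_{X}^{-},\cdot\right)\equiv M^{-}$, the fixed mixture $p_{0}=\left(K-M^{-}\right)/\left(m^{+}-M^{-}\right)$ enforces $\varphi\equiv K$ unconditionally, and any $p^{\ast}$ will do.
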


The proof strategy is as follows. By Proposition \ref{prop:gnrcc_iff}, satisfying the generalized next-round correction condition (\eq{mainIntegral}) with an appropriate enforcement potential $\Psi$ is sufficient to guarantee that a reactive learning strategy is autocratic. For a two-point strategy that mixes between $\tau_{X}^{+}$ and $\tau_{X}^{-}$, the map $\Psi$ is completely determined by just two values: $\psi\left(\tau_{X}^{+}\right)$ and $\psi\left(\tau_{X}^{-}\right)$. The strategy begins with an initial mixture $\sigma_{X}^{0}=p_{0}\tau_{X}^{+}+\left(1-p_{0}\right)\tau_{X}^{-}$ for some $p_{0}\in\left[0,1\right]$, and responds to each opponent action $s_{Y}\in S_{Y}$ by playing $\sigma_{X}^{\ast}\left[p\tau_{X}^{+} + \left(1-p\right)\tau_{X}^{-},s_{Y}\right] = p^{\ast}\left[p,s_{Y}\right]\tau_{X}^{+} + \left(1-p^{\ast}\left[p,s_{Y}\right]\right)\tau_{X}^{-}$. The challenge is to choose $\psi\left(\tau_{X}^{+}\right)$, $\psi\left(\tau_{X}^{-}\right)$, and $p_{0}$ such that \eq{mainIntegral} holds and all transition probabilities $p^{\ast}\left[p,s_{Y}\right]$ remain in $\left[0,1\right]$ for every $p \in \left[0,1\right]$ and $s_{Y} \in S_{Y}$. We construct these values explicitly below.

\begin{proof}
For two-point strategies, finding an enforcement potential $\psi$ satisfying the generalized next-round correction condition amounts to finding two constants, $\psi\left(\tau_{X}^{+}\right)$ and $\psi\left(\tau_{X}^{-}\right)$. Let
\begin{subequations}\label{eq:psi_twopoint}
\begin{align}
\psi\left(\tau_{X}^{+}\right) &\coloneqq \frac{1}{1-\lambda}\min_{s_{Y}\in S_{Y}}\varphi\left(\tau_{X}^{+},s_{Y}\right) ; \\
\psi\left(\tau_{X}^{-}\right) &\coloneqq \frac{1}{1-\lambda}\max_{s_{Y}\in S_{Y}}\varphi\left(\tau_{X}^{-},s_{Y}\right) .
\end{align}
\end{subequations}
From \eq{lemma_inequality}, we must have $\psi\left(\tau_{X}^{+}\right)\geqslant\psi\left(\tau_{X}^{-}\right)$. If $\psi\left(\tau_{X}^{+}\right) =\psi\left(\tau_{X}^{-}\right)$, then \eq{lemma_inequality} implies that $\varphi\left(\tau_{X}^{+},s_{Y}\right) =\varphi\left(\tau_{X}^{-},s_{Y}\right) =\psi\left(\tau_{X}^{+}\right)$ for every $s_{Y}\in S_{Y}$, in which case $\varphi\equiv\psi\left(\tau_{X}^{+}\right) =\psi\left(\tau_{X}^{-}\right)$ can be enforced by an unconditional strategy. Therefore, we may assume that $\psi\left(\tau_{X}^{+}\right) -\psi\left(\tau_{X}^{-}\right) >0$.

Fix $K\in\left[\max_{s_{Y}\in S_{Y}}\varphi\left(\tau_{X}^{-},s_{Y}\right) ,\min_{s_{Y}\in S_{Y}}\varphi\left(\tau_{X}^{+},s_{Y}\right)\right]$ and consider the response function
\begin{align}\label{eq:pfunction}
p^{\ast}\left[p,s_{Y}\right] = \frac{K-p\varphi\left(\tau_{X}^{+},s_{Y}\right) -\left(1-p\right)\varphi\left(\tau_{X}^{-},s_{Y}\right)}{\lambda\left(\psi\left(\tau_{X}^{+}\right) -\psi\left(\tau_{X}^{-}\right)\right)} + \frac{p-\left(1-\lambda\right) p_{0}}{\lambda} .
\end{align}
To have $p^{\ast}\left[p,s_{Y}\right]\in\left[0,1\right]$ for all $p\in\left[0,1\right]$ and $s_{Y}\in S_{Y}$, necessary and sufficient conditions are
\begin{subequations}
\begin{align}
\frac{K -\min_{s_{Y}\in S_{Y}}\varphi\left(\tau_{X}^{-},s_{Y}\right)}{\left(1-\lambda\right)\left(\psi\left(\tau_{X}^{+}\right) -\psi\left(\tau_{X}^{-}\right)\right)} -\frac{\lambda}{1-\lambda} &\leqslant p_{0} \leqslant \frac{K -\max_{s_{Y}\in S_{Y}}\varphi\left(\tau_{X}^{-},s_{Y}\right)}{\left(1-\lambda\right)\left(\psi\left(\tau_{X}^{+}\right) -\psi\left(\tau_{X}^{-}\right)\right)} ; \\
1-\frac{\min_{s_{Y}\in S_{Y}}\varphi\left(\tau_{X}^{+},s_{Y}\right) -K}{\left(1-\lambda\right)\left(\psi\left(\tau_{X}^{+}\right) -\psi\left(\tau_{X}^{-}\right)\right)} &\leqslant p_{0} \leqslant \frac{1}{1-\lambda}-\frac{\max_{s_{Y}\in S_{Y}}\varphi\left(\tau_{X}^{+},s_{Y}\right) -K}{\left(1-\lambda\right)\left(\psi\left(\tau_{X}^{+}\right) -\psi\left(\tau_{X}^{-}\right)\right)} .
\end{align}
\end{subequations}
The fact that each of these two intervals is non-trivial follows from \eq{lemma_inequality}. Moreover, by the values given in \eq{psi_twopoint}, the two intervals intersect at a unique initial probability, which is
\begin{align}\label{eq:p_0discounted}
p_{0} &= \frac{K -\max_{s_{Y}\in S_{Y}}\varphi\left(\tau_{X}^{-},s_{Y}\right)}{\left(1-\lambda\right)\left(\psi\left(\tau_{X}^{+}\right) -\psi\left(\tau_{X}^{-}\right)\right)} = \frac{K -\max_{s_{Y}\in S_{Y}}\varphi\left(\tau_{X}^{-},s_{Y}\right)}{\min_{s_{Y}\in S_{Y}}\varphi\left(\tau_{X}^{+},s_{Y}\right) -\max_{s_{Y}\in S_{Y}}\varphi\left(\tau_{X}^{-},s_{Y}\right)} .
\end{align}
With $p_{0}$ and $p^{\ast}$ well-defined and taking values in $\left[0,1\right]$, the generalized next-round correction condition is satisfied for $\varphi -K$, and thus $\left(p_{0},p^{\ast}\right)$ allows $X$ to enforce $\varphi\equiv K$.
\end{proof}

\begin{remark}
In the statement of Lemma~\ref{lem:inequalities}, we implicitly assume that the mixing probability, $p$, can be determined by the value of $p\tau_{X}^{+}+\left(1-p\right)\tau_{X}^{-}$. If $p,q\in\left[0,1\right]$ are mixing probabilities satisfying $p\tau_{X}^{+}+\left(1-p\right)\tau_{X}^{-}=q\tau_{X}^{+}+\left(1-q\right)\tau_{X}^{-}$, then $\left(p-q\right)\tau_{X}^{+}=\left(p-q\right)\tau_{X}^{-}$. If $\tau_{X}^{+}\neq\tau_{X}^{-}$, then $p=q$. If $\tau_{X}^{+}=\tau_{X}^{-}$, then the Lemma holds trivially.
\end{remark}

We are especially interested in enforcing relationships of the form $\varphi\equiv 0$, and there is no loss of generality in setting $K=0$ since we can absorb this constant into $\varphi$, if necessary.

The following result is an immediate consequence of Lemma~\ref{lem:inequalities}:
\begin{corollary}\label{cor:corollary_zero}
If $\max_{s_{Y}\in S_{Y}}\varphi\left(\tau_{X}^{-},s_{Y}\right)\leqslant 0\leqslant\min_{s_{Y}\in S_{Y}}\varphi\left(\tau_{X}^{+},s_{Y}\right)$ and the hypotheses of Lemma~\ref{lem:inequalities} hold for $\tau_{X}^{+}$ and $\tau_{X}^{-}$, then $X$ can enforce $\varphi\equiv 0$ (using a two-point reactive learning strategy).
\end{corollary}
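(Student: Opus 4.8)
The plan is to derive this as a direct specialization of Lemma~\ref{lem:inequalities} to the target value $K=0$, with no additional construction required. Lemma~\ref{lem:inequalities} guarantees, under its hypotheses (the inequalities in \eq{lemma_inequality}), the existence of a two-point reactive learning strategy built from $\tau_{X}^{+}$ and $\tau_{X}^{-}$ that enforces $\varphi\equiv K$ for every target $K$ lying in the interval $\left[\max_{s_{Y}\in S_{Y}}\varphi\left(\tau_{X}^{-},s_{Y}\right),\min_{s_{Y}\in S_{Y}}\varphi\left(\tau_{X}^{+},s_{Y}\right)\right]$. The second hypothesis of the corollary is precisely this hypothesis of Lemma~\ref{lem:inequalities}, so that conclusion is available verbatim.

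The only step that needs checking is that $K=0$ actually lies in the enforceable range, and this is exactly the content of the first hypothesis, $\max_{s_{Y}\in S_{Y}}\varphi\left(\tau_{X}^{-},s_{Y}\right)\leqslant 0\leqslant\min_{s_{Y}\in S_{Y}}\varphi\left(\tau_{X}^{+},s_{Y}\right)$, which places $0$ between the two endpoints of the interval. Substituting $K=0$ into the explicit response function \eq{pfunction} and the initial mixture \eq{p_0discounted} from the proof of Lemma~\ref{lem:inequalities} then produces a concrete two-point reactive learning strategy enforcing $\varphi\equiv 0$. Because everything reduces to an interval-membership check, there is no genuine obstacle here: all of the analytic work, namely verifying that the transition probabilities $p^{\ast}\left[p,s_{Y}\right]$ remain in $\left[0,1\right]$ and that the generalized next-round correction condition holds, has already been carried out inside Lemma~\ref{lem:inequalities}.
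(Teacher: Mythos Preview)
Your proposal is correct and matches the paper's own treatment: the paper states this corollary as ``an immediate consequence of Lemma~\ref{lem:inequalities}'' with no further proof, and your argument simply spells out that immediacy by noting the hypothesis places $K=0$ in the enforceable interval. There is nothing to add.
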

Motivated by this result and our focus on enforcing $\varphi\equiv 0$, we define the sets
\begin{subequations}\label{eq:SeparationSets}
\begin{align}
\Phi_{X}^{+} &\coloneqq \left\{\tau_{X}\in\Delta\left(S_{X}\right)\mid\min_{s_{Y}\in S_{Y}}\varphi\left(\tau_{X},s_{Y}\right)\geqslant 0\right\} ; \\
\Phi_{X}^{-} &\coloneqq \left\{\tau_{X}\in\Delta\left(S_{X}\right)\mid\max_{s_{Y}\in S_{Y}}\varphi\left(\tau_{X},s_{Y}\right)\leqslant 0\right\} .
\end{align}
\end{subequations}
In general, either or both of these sets can be empty.

We are now in a position to state and prove our main theoretical result:

\begin{theorem}\label{th:mainresultDiscounted}
Suppose that $\sigma_{X}:\mathcal{H}\rightarrow\Delta\left(S_{X}\right)$ is a $\left(\varphi ,\lambda\right)$-autocratic strategy of arbitrary memory. Then, there exists a two-point reactive learning strategy that is also $\left(\varphi ,\lambda\right)$-autocratic.
\end{theorem}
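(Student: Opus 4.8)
The plan is to mine the given arbitrary-memory autocratic strategy for two extremal mixed actions $\tau_{X}^{+}$ and $\tau_{X}^{-}$ and then hand them to Corollary~\ref{cor:corollary_zero}, which manufactures the desired two-point reactive learning strategy. First I would apply Proposition~\ref{prop:reactive_behavioral} to replace $\sigma_{X}$ by an opponent-conditioned strategy $\widetilde{\sigma}_{X}:\mathcal{H}_{Y}\rightarrow\Delta\left(S_{X}\right)$ that is still $\left(\varphi,\lambda\right)$-autocratic. On the opponent-history space I define, mimicking the truncated-chain formula of Proposition~\ref{prop:uniquenessOfPsi}, the potential
\[
\Psi\left(\alpha\right)\coloneqq -\lambda^{-T}\sum_{t=0}^{T-1}\lambda^{t}\,\varphi\left(\widetilde{\sigma}_{X}\left[s_{Y}^{0},\dots,s_{Y}^{t-1}\right],s_{Y}^{t}\right)
\]
for $\alpha=\left(s_{Y}^{0},\dots,s_{Y}^{T-1}\right)$ of length $T$. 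This is manifestly a function of $\alpha$ alone, with $\Psi\left(\varnothing\right)=0$, and a one-line telescoping computation yields the history-level correction identity $\varphi\left(\widetilde{\sigma}_{X}\left[\alpha\right],s_{Y}\right)=\Psi\left(\alpha\right)-\lambda\Psi\left(\alpha s_{Y}\right)$.

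The hard part, which I expect to be the crux of the argument, is to show that $\Psi$ is \emph{bounded}. Invoking autocracy along an arbitrary infinite continuation of $\alpha$, the total discounted sum vanishes, so the head $\sum_{t=0}^{T-1}\lambda^{t}\varphi\left(\cdots\right)$ equals minus its tail; hence $\Psi\left(\alpha\right)$ coincides with the discounted continuation value $\sum_{s\geqslant 0}\lambda^{s}\varphi\left(\cdots\right)$, which (as $\varphi$ is bounded on the compact simplex) satisfies $\left|\Psi\right|\leqslant\left(\sup\left|\varphi\right|\right)/\left(1-\lambda\right)$. Writing $\Psi^{+}\coloneqq\sup_{\alpha}\Psi\left(\alpha\right)$ and $\Psi^{-}\coloneqq\inf_{\alpha}\Psi\left(\alpha\right)$, both are finite, and $\Psi^{-}\leqslant 0\leqslant\Psi^{+}$ because $\varnothing$ lies in the domain with $\Psi\left(\varnothing\right)=0$.

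Next I would pick histories $\alpha_{n}^{+}$ with $\Psi\left(\alpha_{n}^{+}\right)\rightarrow\Psi^{+}$ and $\alpha_{n}^{-}$ with $\Psi\left(\alpha_{n}^{-}\right)\rightarrow\Psi^{-}$. Since $\Delta\left(S_{X}\right)$ is compact and each $\varphi\left(\cdot,s_{Y}\right)$ is linear (hence continuous), passing to subsequences gives limits $\tau_{X}^{\pm}=\lim_{n}\widetilde{\sigma}_{X}\left[\alpha_{n}^{\pm}\right]$. Feeding $\alpha_{n}^{\pm}$ and $\alpha_{n}^{\pm}s_{Y}$ into the correction identity and using only the one-sided bounds $\Psi^{-}\leqslant\Psi\left(\alpha_{n}^{\pm}s_{Y}\right)\leqslant\Psi^{+}$, the limit yields, uniformly in $s_{Y}$,
\[
\left(1-\lambda\right)\Psi^{+}\leqslant\varphi\left(\tau_{X}^{+},s_{Y}\right)\leqslant\Psi^{+}-\lambda\Psi^{-},\qquad \Psi^{-}-\lambda\Psi^{+}\leqslant\varphi\left(\tau_{X}^{-},s_{Y}\right)\leqslant\left(1-\lambda\right)\Psi^{-}.
\]

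Finally I would verify the hypotheses of Corollary~\ref{cor:corollary_zero}. The outer bounds above, together with $\Psi^{+}\geqslant 0\geqslant\Psi^{-}$, immediately give $\max_{s_{Y}}\varphi\left(\tau_{X}^{-},s_{Y}\right)\leqslant 0\leqslant\min_{s_{Y}}\varphi\left(\tau_{X}^{+},s_{Y}\right)$. For the two inequalities of Lemma~\ref{lem:inequalities}, substituting $\min_{s_{Y}}\varphi\left(\tau_{X}^{+},\cdot\right)\geqslant\left(1-\lambda\right)\Psi^{+}$ against $\left(1-\lambda\right)\max_{s_{Y}}\varphi\left(\tau_{X}^{+},\cdot\right)+\lambda\max_{s_{Y}}\varphi\left(\tau_{X}^{-},\cdot\right)\leqslant\left(1-\lambda\right)\left(\Psi^{+}-\lambda\Psi^{-}\right)+\lambda\left(1-\lambda\right)\Psi^{-}=\left(1-\lambda\right)\Psi^{+}$ closes the first, and the symmetric computation with $\left(1-\lambda\right)\Psi^{-}$ closes the second. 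Corollary~\ref{cor:corollary_zero} then supplies a two-point $\left(\varphi,\lambda\right)$-autocratic strategy mixing $\tau_{X}^{+}$ and $\tau_{X}^{-}$, as required. Two edge cases remain to be dispatched in passing: if $\Psi^{+}=\Psi^{-}=0$ the bounds force $\varphi\left(\tau_{X}^{\pm},\cdot\right)\equiv 0$, enforced by an unconditional strategy, and if $\lambda=0$ autocracy means $\varphi\left(\sigma_{X}^{0},\cdot\right)\equiv 0$, so the constant strategy already suffices.
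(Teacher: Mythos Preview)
Your proof is correct and follows essentially the same route as the paper's: define the history-level potential $\Psi$ (the paper calls it $\Theta$), show it is bounded via the continuation representation, extract limit points $\tau_{X}^{\pm}$ by compactness, and verify the hypotheses of Corollary~\ref{cor:corollary_zero}. Your version is in fact a bit cleaner: by taking $\Psi^{+}$ and $\Psi^{-}$ as the sup and inf over \emph{all} of $\mathcal{H}_{Y}$ (including $\varnothing$) you get $\Psi^{-}\leqslant 0\leqslant\Psi^{+}$ for free and avoid the paper's three-case split on the signs of $\sup_{h\neq\varnothing}\Theta$ and $\inf_{h\neq\varnothing}\Theta$, and your direct limit argument replaces the paper's $\varepsilon$-bookkeeping.
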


\begin{proof}
If $\lambda =0$, then conditioning is irrelevant and only the initial mixed action matters, so the result is trivial. Therefore, we assume that $\lambda >0$ going forward. By Proposition~\ref{prop:reactive_behavioral}, we may assume that $\sigma_{X}$ is a map from $\mathcal{H}_{Y}=\bigcup_{t\geqslant 0}S_{Y}^{t}$ to $\Delta\left(S_{X}\right)$. Consider the map defined by
\begin{align}
\Theta &: \mathcal{H}_{Y} \longrightarrow \mathbb{R} \nonumber \\
&: \left(s_{Y}^{0},\dots ,s_{Y}^{T-1}\right) \longmapsto -\lambda^{-T}\sum_{t=0}^{T-1}\lambda^{t}\varphi\left(\sigma_{X}\left[s_{Y}^{0},\dots ,s_{Y}^{t-1}\right] ,s_{Y}^{t}\right) ,
\end{align}
where $\Theta\left(\varnothing\right)\coloneqq 0$. From the definition of $\Theta$, we see that for all $h\in\mathcal{H}_{Y}$ and $s_{Y}\in S_{Y}$,
\begin{align}
\varphi\left(\sigma_{X}\left[h\right] ,s_{Y}\right) &= \Theta\left(h\right) -\lambda \Theta\left(h,s_{Y}\right) .
\end{align}
From this equation, we also see that for every $h\in\mathcal{H}_{Y}$,
\begin{align}
\max_{s_{Y}\in S_{Y}} &\varphi\left(\sigma_{X}\left[h\right] ,s_{Y}\right) +\lambda\inf_{h\in\mathcal{H}_{Y}\setminus\left\{\varnothing\right\}}\Theta\left(h\right) \nonumber \\
&\leqslant \Theta\left(h\right)\leqslant \min_{s_{Y}\in S_{Y}}\varphi\left(\sigma_{X}\left[h\right] ,s_{Y}\right) +\lambda\sup_{h\in\mathcal{H}_{Y}\setminus\left\{\varnothing\right\}}\Theta\left(h\right) . \label{eq:theta_inequality}
\end{align}
To see that these bounds are finite, we note that for any $h\in\mathcal{H}_{Y}$ and $s_{Y}\in S_{Y}$,
\begin{align}
\Theta\left(h\right) &= \sum_{t=T}^{\infty}\lambda^{t-T}\varphi\left(\sigma_{X}\left[h,\underbrace{s_{Y},\dots ,s_{Y}}_{\textrm{$t-T$ times}}\right] ,s_{Y}\right) ,
\end{align}
since $\sigma_{X}$ is $\left(\varphi ,\lambda\right)$-autocratic. This equation gives $\left\|\Theta\right\|_{\infty}\leqslant\frac{1}{1-\lambda}\left\|\varphi\right\|_{\infty}<\infty$ since $\varphi$ is bounded.

Fix two sequences of histories, $\left\{h_{n}^{+}\right\}_{n=0}^{\infty}$ and $\left\{h_{n}^{-}\right\}_{n=0}^{\infty}$, such that $\left\{\Theta\left(h_{n}^{+}\right)\right\}_{n=0}^{\infty}$ converges monotonically to $\sup_{h\in\mathcal{H}_{Y}\setminus\left\{\varnothing\right\}}\Theta\left(h\right)$ and $\left\{\Theta\left(h_{n}^{-}\right)\right\}_{n=0}^{\infty}$ converges monotonically to $\inf_{h\in\mathcal{H}_{Y}\setminus\left\{\varnothing\right\}}\Theta\left(h\right)$. For every fixed $\varepsilon >0$, there must then exist $N_{\varepsilon}\geqslant 0$ such that, whenever $n\geqslant N_{\varepsilon}$,
\begin{subequations}
\begin{align}
\Theta\left(h_{n}^{+}\right) &> \sup_{h\in\mathcal{H}_{Y}\setminus\left\{\varnothing\right\}}\Theta\left(h\right) -\frac{1}{2\lambda}\varepsilon ; \label{eq:extrema_epsilon_a} \\
\Theta\left(h_{n}^{-}\right) &< \inf_{h\in\mathcal{H}_{Y}\setminus\left\{\varnothing\right\}}\Theta\left(h\right) +\frac{1}{2\lambda}\varepsilon . \label{eq:extrema_epsilon_b}
\end{align}\label{eq:extrema_epsilon}
\end{subequations}
We note that if $\sup_{h\in\mathcal{H}_{Y}\setminus\left\{\varnothing\right\}}\Theta\left(h\right) <0$, then \eq{extrema_epsilon_a} holds for $h_{n}^{+}=\varnothing$ for all $n\geqslant 0$. Similarly, if $\inf_{h\in\mathcal{H}_{Y}\setminus\left\{\varnothing\right\}}\Theta\left(h\right) >0$, then \eq{extrema_epsilon_b} holds when $h_{n}^{-}=\varnothing$ for all $n\geqslant 0$. Consider now the pair,
\begin{align}
\left(\tau_{X,n}^{+},\tau_{X,n}^{-}\right) &\coloneqq 
\begin{cases}
\left(\sigma_{X}\left[\varnothing\right] ,\sigma_{X}\left[h_{n}^{-}\right]\right) & \sup_{h\in\mathcal{H}_{Y}\setminus\left\{\varnothing\right\}}\Theta\left(h\right) < 0,\\[1em]
\left(\sigma_{X}\left[h_{n}^{+}\right] ,\sigma_{X}\left[\varnothing\right]\right) & \inf_{h\in\mathcal{H}_{Y}\setminus\left\{\varnothing\right\}}\Theta\left(h\right) > 0,\\[1em]
\left(\sigma_{X}\left[h_{n}^{+}\right],\sigma_{X}\left[h_{n}^{-}\right]\right) & \textrm{otherwise}.
\end{cases} \label{eq:tau_n}
\end{align}
Since $\Delta\left(S_{X}\right)$ is compact, by passing to subsequences of histories if necessary, we may assume that $\left\{\tau_{X,n}^{+}\right\}_{n=0}^{\infty}$ and $\left\{\tau_{X,n}^{-}\right\}_{n=0}^{\infty}$ are convergent sequences in $\Delta\left(S_{X}\right)$, with limits $\tau_{X}^{+}$ and $\tau_{X}^{-}$, respectively. To complete the proof, using Lemma~\ref{lem:inequalities} (Corollary~\ref{cor:corollary_zero}), we show that \eq{lemma_inequality} holds and that $\tau_{X}^{+}\in\Phi_{X}^{+}$ and $\tau_{X}^{-}\in\Phi_{X}^{-}$. To see that \eq{lemma_inequality} holds, we fix $\varepsilon >0$ and let $n\geqslant N_{\varepsilon}$ and note that
\begin{align}
\left(1-\lambda\right) &\left(\sup_{h\in\mathcal{H}_{Y}\setminus\left\{\varnothing\right\}}\Theta\left(h\right) - \inf_{h\in\mathcal{H}_{Y}\setminus\left\{\varnothing\right\}}\Theta\left(h\right)\right) \nonumber \\
&< \Theta\left(h_{n}^{+}\right) -\Theta\left(h_{n}^{-}\right) + \frac{1}{\lambda}\varepsilon - \lambda\left(\sup_{h\in\mathcal{H}_{Y}\setminus\left\{\varnothing\right\}}\Theta\left(h\right) - \inf_{h\in\mathcal{H}_{Y}\setminus\left\{\varnothing\right\}}\Theta\left(h\right)\right) \quad \left(\eq{extrema_epsilon}\right) \nonumber \\
&\leqslant \min_{s_{Y}\in S_{Y}}\varphi\left(\tau_{X,n}^{+},s_{Y}\right) -\max_{s_{Y}\in S_{Y}}\varphi\left(\tau_{X,n}^{-},s_{Y}\right) + \frac{1}{\lambda}\varepsilon . \quad \left(\eq{theta_inequality}\right) \label{eq:sup-inf_inequality}
\end{align}
From this inequality, we can conclude that
\begin{subequations}
\begin{align}
\left(1-\lambda\right)\max_{s_{Y}\in S_{Y}} &\varphi\left(\tau_{X,n}^{+},s_{Y}\right) +\lambda\max_{s_{Y}\in S_{Y}} \varphi\left(\tau_{X,n}^{-},s_{Y}\right) \nonumber \\
&\leqslant \left(1-\lambda\right)\min_{s_{Y}\in S_{Y}}\varphi\left(\tau_{X,n}^{+},s_{Y}\right) + \lambda\max_{s_{Y}\in S_{Y}} \varphi\left(\tau_{X,n}^{-},s_{Y}\right) \nonumber \\
&\quad +\left(1-\lambda\right)\lambda\left(\sup_{h\in\mathcal{H}_{Y}\setminus\left\{\varnothing\right\}}\Theta\left(h\right) - \inf_{h\in\mathcal{H}_{Y}\setminus\left\{\varnothing\right\}}\Theta\left(h\right)\right) \nonumber \quad \left(\eq{theta_inequality}\right) \\
&< \left(1-\lambda\right)\min_{s_{Y}\in S_{Y}}\varphi\left(\tau_{X,n}^{+},s_{Y}\right) + \lambda\max_{s_{Y}\in S_{Y}} \varphi\left(\tau_{X,n}^{-},s_{Y}\right) \nonumber \\
&\quad +\lambda\left(\min_{s_{Y}\in S_{Y}}\varphi\left(\tau_{X,n}^{+},s_{Y}\right) -\max_{s_{Y}\in S_{Y}}\varphi\left(\tau_{X,n}^{-},s_{Y}\right) + \frac{1}{\lambda}\varepsilon\right) \quad \left(\eq{sup-inf_inequality}\right) \nonumber \\
&= \min_{s_{Y}\in S_{Y}}\varphi\left(\tau_{X,n}^{+},s_{Y}\right) +\varepsilon ; \\
\lambda\min_{s_{Y}\in S_{Y}} &\varphi\left(\tau_{X,n}^{+},s_{Y}\right) +\left(1-\lambda\right)\min_{s_{Y}\in S_{Y}} \varphi\left(\tau_{X,n}^{-},s_{Y}\right) \nonumber \\
&\geqslant \lambda\min_{s_{Y}\in S_{Y}}\varphi\left(\tau_{X,n}^{+},s_{Y}\right) + \left(1-\lambda\right)\max_{s_{Y}\in S_{Y}} \varphi\left(\tau_{X,n}^{-},s_{Y}\right) \nonumber \\
&\quad -\left(1-\lambda\right)\lambda\left(\sup_{h\in\mathcal{H}_{Y}\setminus\left\{\varnothing\right\}}\Theta\left(h\right) - \inf_{h\in\mathcal{H}_{Y}\setminus\left\{\varnothing\right\}}\Theta\left(h\right)\right) \quad \left(\eq{theta_inequality}\right) \nonumber \\
&> \lambda\min_{s_{Y}\in S_{Y}}\varphi\left(\tau_{X,n}^{+},s_{Y}\right) + \left(1-\lambda\right)\max_{s_{Y}\in S_{Y}} \varphi\left(\tau_{X,n}^{-},s_{Y}\right) \nonumber \\
&\quad -\lambda\left(\min_{s_{Y}\in S_{Y}}\varphi\left(\tau_{X,n}^{+},s_{Y}\right) -\max_{s_{Y}\in S_{Y}}\varphi\left(\tau_{X,n}^{-},s_{Y}\right) + \frac{1}{\lambda}\varepsilon\right) \quad \left(\eq{sup-inf_inequality}\right) \nonumber \\
&= \max_{s_{Y}\in S_{Y}} \varphi\left(\tau_{X,n}^{-},s_{Y}\right) - \varepsilon .
\end{align}
\end{subequations}
Since $\varepsilon >0$ was arbitrary, it follows that \eq{lemma_inequality} holds for the pair $\left(\tau_{X}^{+},\tau_{X}^{-}\right)$.

What remains to be shown is that $\tau_{X}^{+}\in\Phi_{X}^{+}$ and $\tau_{X}^{-}\in\Phi_{X}^{-}$, which we establish in three cases:
\begin{enumerate}

\item[\emph{(i)}] If $\sup_{h\in\mathcal{H}_{Y}\setminus\left\{\varnothing\right\}}\Theta\left(h\right) <0$, then, by \eq{theta_inequality},
\begin{align}
\min_{s_{Y}\in S_{Y}}\varphi\left(\sigma_{X}\left[\varnothing\right] ,s_{Y}\right)\geqslant -\lambda\sup_{h\in\mathcal{H}_{Y}\setminus\left\{\varnothing\right\}}\Theta\left(h\right) > 0 ,
\end{align}
and thus $\min_{s_{Y}\in S_{Y}}\varphi\left(\tau_{X}^{+},s_{Y}\right) >0$. For $n\geqslant N_{\varepsilon}$, \eq{theta_inequality} and \eq{extrema_epsilon} give
\begin{align}
\max_{s_{Y}\in S_{Y}}\varphi\left(\tau_{X,n}^{-},s_{Y}\right)
&< \left(1-\lambda\right)\inf_{h\in\mathcal{H}_{Y}\setminus\left\{\varnothing\right\}}\Theta\left(h\right) +\frac{1}{2\lambda}\varepsilon , \label{eq:case_i}
\end{align}
and thus $\max_{s_{Y}\in S_{Y}}\varphi\left(\tau_{X}^{-},s_{Y}\right) < 0$ in the limit, since $\varepsilon >0$ was arbitrary.

\item[\emph{(ii)}] If $\inf_{h\in\mathcal{H}_{Y}\setminus\left\{\varnothing\right\}}\Theta\left(h\right) >0$, then, by \eq{theta_inequality},
\begin{align}
\max_{s_{Y}\in S_{Y}}\varphi\left(\sigma_{X}\left[\varnothing\right] ,s_{Y}\right)\leqslant -\lambda\inf_{h\in\mathcal{H}_{Y}\setminus\left\{\varnothing\right\}}\Theta\left(h\right) < 0 ,
\end{align}
which gives $\max_{s_{Y}\in S_{Y}}\varphi\left(\tau_{X}^{-},s_{Y}\right) <0$. For $n\geqslant N_{\varepsilon}$, \eq{theta_inequality} and \eq{extrema_epsilon} give
\begin{align}
\min_{s_{Y}\in S_{Y}}\varphi\left(\tau_{X,n}^{+},s_{Y}\right)
&> \left(1-\lambda\right)\sup_{h\in\mathcal{H}_{Y}\setminus\left\{\varnothing\right\}}\Theta\left(h\right) -\frac{1}{2\lambda}\varepsilon , \label{eq:case_ii}
\end{align}
so $\min_{s_{Y}\in S_{Y}}\varphi\left(\tau_{X}^{+},s_{Y}\right) > 0$ in the limit, since $\varepsilon >0$ was arbitrary.

\item[\emph{(iii)}] If neither \emph{(i)} nor \emph{(ii)} applies, then we deduce that $\tau_{X}^{+}\in\Phi_{X}^{+}$ and $\tau_{X}^{-}\in\Phi_{X}^{-}$ by taking the limits of \eq{case_i} and \eq{case_ii} as $n\rightarrow\infty$ and noting that $\varepsilon >0$ was arbitrary.

\end{enumerate}

By Corollary~\ref{cor:corollary_zero}, we obtain a two-point reactive learning strategy that is $\left(\varphi ,\lambda\right)$-autocratic.
\end{proof}

Theorem~\ref{th:mainresultDiscounted} shows that if $\sigma_{X}$ is $\left(\varphi ,\lambda\right)$-autocratic, then there exist $\tau_{X}^{\pm}\in\Delta\left(S_{X}\right)$ satisfying \eq{lemma_inequality} and $\max_{s_{Y}\in S_{Y}}\varphi\left(\tau_{X}^{-},s_{Y}\right)\leqslant 0\leqslant\min_{s_{Y}\in S_{Y}}\varphi\left(\tau_{X}^{+},s_{Y}\right)$. If the latter condition is an equality, then \eq{lemma_inequality} implies that $\max_{s_{Y}\in S_{Y}}\varphi\left(\tau_{X}^{+},s_{Y}\right)\leqslant\min_{s_{Y}\in S_{Y}}\varphi\left(\tau_{X}^{-},s_{Y}\right)$, and thus $\varphi\left(\tau_{X}^{+},s_{Y}\right) =\varphi\left(\tau_{X}^{-},s_{Y}\right) =0$ for all $s_{Y}\in S_{Y}$. Therefore, in this case, \eq{lemma_inequality} holds with $\lambda =0$, and the simple mixed action $\tau_{X}^{+}$, played in every round, is $\left(\varphi ,\lambda\right)$-autocratic for all $\lambda\geqslant 0$. (This statement is also true for $\tau_{X}^{-}$, if it is distinct from $\tau_{X}^{+}$.) We refer to this unconditional play as a ``trivial'' autocratic strategy.

On the other hand, if $\max_{s_{Y}\in S_{Y}}\varphi\left(\tau_{X}^{-},s_{Y}\right) <\min_{s_{Y}\in S_{Y}}\varphi\left(\tau_{X}^{+},s_{Y}\right)$, then \eq{lemma_inequality} gives
\begin{align}\label{eq:lambdabound}
\lambda &\geqslant \max\left\{\frac{\max_{s_{Y}\in S_{Y}}\varphi\left(\tau_{X}^{+},s_{Y}\right) -\min_{s_{Y}\in S_{Y}}\varphi\left(\tau_{X}^{+},s_{Y}\right)}{\max_{s_{Y}\in S_{Y}}\varphi\left(\tau_{X}^{+},s_{Y}\right) -\max_{s_{Y}\in S_{Y}}\varphi\left(\tau_{X}^{-},s_{Y}\right)},\frac{\max_{s_{Y}\in S_{Y}}\varphi\left(\tau_{X}^{-},s_{Y}\right) -\min_{s_{Y}\in S_{Y}}\varphi\left(\tau_{X}^{-},s_{Y}\right)}{\min_{s_{Y}\in S_{Y}}\varphi\left(\tau_{X}^{+},s_{Y}\right) -\min_{s_{Y}\in S_{Y}}\varphi\left(\tau_{X}^{-},s_{Y}\right)}\right\} \nonumber \\
&= 1-\frac{\min_{s_{Y}\in S_{Y}}\varphi\left(\tau_{X}^{+},s_{Y}\right) -\max_{s_{Y}\in S_{Y}}\varphi\left(\tau_{X}^{-},s_{Y}\right)}{\max\left\{ \substack{\max_{s_{Y}\in S_{Y}}\varphi\left(\tau_{X}^{+},s_{Y}\right) -\max_{s_{Y}\in S_{Y}}\varphi\left(\tau_{X}^{-},s_{Y}\right) , \\ \min_{s_{Y}\in S_{Y}}\varphi\left(\tau_{X}^{+},s_{Y}\right) -\min_{s_{Y}\in S_{Y}}\varphi\left(\tau_{X}^{-},s_{Y}\right)} \right\}} .
\end{align}
By Lemma~\ref{lem:inequalities} and Theorem~\ref{th:mainresultDiscounted}, we immediately have the following result:

\begin{proposition}\label{prop:lambda_min}
There exists a $\left(\varphi ,\lambda\right)$-autocratic strategy if and only if $\Phi_{X}^{+},\Phi_{X}^{-}\neq\left\{\right\}$ and either \emph{(i)} $\Phi_{X}^{+}\cap\Phi_{X}^{-}\neq\left\{\right\}$ or \emph{(ii)} $\Phi_{X}^{+}\cap\Phi_{X}^{-}=\left\{\right\}$ and $\lambda\geqslant\lambda_{\textrm{min}}$, where
\begin{align}\label{eq:lambda_min}
\lambda_{\textrm{min}} &\coloneqq 1-\sup_{\left(\tau_{X}^{+},\tau_{X}^{-}\right)\in\Phi_{X}^{+}\times\Phi_{X}^{-}}\frac{\min_{s_{Y}\in S_{Y}}\varphi\left(\tau_{X}^{+},s_{Y}\right) -\max_{s_{Y}\in S_{Y}}\varphi\left(\tau_{X}^{-},s_{Y}\right)}{\max\left\{ \substack{\max_{s_{Y}\in S_{Y}}\varphi\left(\tau_{X}^{+},s_{Y}\right) -\max_{s_{Y}\in S_{Y}}\varphi\left(\tau_{X}^{-},s_{Y}\right) , \\ \min_{s_{Y}\in S_{Y}}\varphi\left(\tau_{X}^{+},s_{Y}\right) -\min_{s_{Y}\in S_{Y}}\varphi\left(\tau_{X}^{-},s_{Y}\right)} \right\}} .
\end{align}
(Note that we use the notation $\left\{\right\}$ to denote the empty set to avoid confusion with the null history, $\varnothing$.)
\end{proposition}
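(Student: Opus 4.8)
The plan is to establish the equivalence by reading off necessity from Theorem~\ref{th:mainresultDiscounted} and sufficiency from Corollary~\ref{cor:corollary_zero} (hence Lemma~\ref{lem:inequalities}), with the algebraic reduction already recorded in \eqref{eq:lambdabound} serving as the bridge between the pointwise inequalities \eqref{eq:lemma_inequality} and the closed-form threshold \eqref{eq:lambda_min}. Throughout, for a candidate pair $\left(\tau_X^{+},\tau_X^{-}\right)$ I abbreviate $a=\min_{s_Y}\varphi\left(\tau_X^{+},s_Y\right)$, $A=\max_{s_Y}\varphi\left(\tau_X^{+},s_Y\right)$, $b=\min_{s_Y}\varphi\left(\tau_X^{-},s_Y\right)$, and $B=\max_{s_Y}\varphi\left(\tau_X^{-},s_Y\right)$. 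The first step, a one-line computation, is to observe that for a pair with $B<a$ the two inequalities in \eqref{eq:lemma_inequality} are jointly equivalent to $\lambda\geq\max\left\{(A-a)/(A-B),\,(B-b)/(a-b)\right\}$, and that this maximum equals $1-(a-B)/\max\left\{A-B,\,a-b\right\}$; this is exactly the manipulation behind \eqref{eq:lambdabound}, and it identifies the per-pair threshold with the summand of the supremum in \eqref{eq:lambda_min}.

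For necessity, suppose a $\left(\varphi,\lambda\right)$-autocratic strategy of arbitrary memory exists. By Theorem~\ref{th:mainresultDiscounted} there is a two-point reactive learning strategy built from some $\tau_X^{+}\in\Phi_X^{+}$ and $\tau_X^{-}\in\Phi_X^{-}$ satisfying \eqref{eq:lemma_inequality}; in particular $\Phi_X^{+}$ and $\Phi_X^{-}$ are both nonempty. If $\Phi_X^{+}\cap\Phi_X^{-}\neq\left\{\right\}$, then condition (i) holds and we are done. Otherwise I first dispatch the degenerate case $a=B$: since $a\geqslant 0\geqslant B$, equality forces $a=B=0$, whereupon \eqref{eq:lemma_inequality} forces $A=0$ (using $\lambda<1$), so $\varphi\left(\tau_X^{+},\cdot\right)\equiv 0$ and $\tau_X^{+}\in\Phi_X^{+}\cap\Phi_X^{-}$, a contradiction. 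Hence $a>B$, the equivalence of the previous paragraph applies, and since $(a-B)/\max\left\{A-B,\,a-b\right\}$ is bounded above by the supremum defining $\lambda_{\textrm{min}}$, we obtain $\lambda\geqslant 1-(a-B)/\max\left\{A-B,\,a-b\right\}\geqslant\lambda_{\textrm{min}}$, which is condition (ii).

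For sufficiency, assume $\Phi_X^{+},\Phi_X^{-}\neq\left\{\right\}$ together with (i) or (ii). Under (i), any $\tau_X\in\Phi_X^{+}\cap\Phi_X^{-}$ satisfies $\varphi\left(\tau_X,s_Y\right)=0$ for all $s_Y$, so playing $\tau_X$ unconditionally is $\left(\varphi,\lambda\right)$-autocratic for every $\lambda$. Under (ii), I would argue that the supremum in \eqref{eq:lambda_min} is attained: $\Phi_X^{+}$ and $\Phi_X^{-}$ are closed, hence compact, subsets of $\Delta\left(S_X\right)$, and the objective $(a-B)/\max\left\{A-B,\,a-b\right\}$ is continuous on $\Phi_X^{+}\times\Phi_X^{-}$ because its denominator never vanishes there (vanishing would force $\varphi\left(\tau_X^{+},\cdot\right)\equiv 0$, contradicting $\Phi_X^{+}\cap\Phi_X^{-}=\left\{\right\}$). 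Let $\left(\tau_X^{+},\tau_X^{-}\right)$ achieve the supremum; then $\lambda\geqslant\lambda_{\textrm{min}}$ reads exactly as $\lambda\geqslant 1-(a-B)/\max\left\{A-B,\,a-b\right\}$ for this pair, which by the equivalence is \eqref{eq:lemma_inequality}. Since $\tau_X^{+}\in\Phi_X^{+}$ and $\tau_X^{-}\in\Phi_X^{-}$ give $B\leqslant 0\leqslant a$, Corollary~\ref{cor:corollary_zero} produces the desired two-point autocratic strategy.

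I expect the main obstacle to be the sufficiency case (ii): specifically, justifying that the supremum defining $\lambda_{\textrm{min}}$ is actually attained, so that the boundary value $\lambda=\lambda_{\textrm{min}}$ is genuinely enforceable rather than only a limit. This is precisely where compactness of $\Phi_X^{\pm}$ and non-degeneracy of the denominator on $\Phi_X^{+}\times\Phi_X^{-}$ must be verified; the purely algebraic equivalence from \eqref{eq:lambdabound} and the bookkeeping for the degenerate pairs with $a=B$ are routine by comparison.
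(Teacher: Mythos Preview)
Your proposal is correct and follows essentially the same route as the paper, which simply declares the result immediate from Lemma~\ref{lem:inequalities} and Theorem~\ref{th:mainresultDiscounted} after recording the algebraic identity \eqref{eq:lambdabound}. You have spelled out what the paper leaves implicit: the degenerate case $a=B$ (which the paper handles in the paragraph preceding the proposition) and, more substantively, the attainment of the supremum in \eqref{eq:lambda_min} via compactness of $\Phi_X^{\pm}$ and non-degeneracy of the denominator under hypothesis~(ii)---a point the paper does not justify here but only asserts later, in the proof of Proposition~\ref{prop:tractability}.
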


\begin{remark}
The minimum discount factor $\lambda_{\min}$ has a natural interpretation: when $\lambda <\lambda_{\min}$, the game is too short on average (with expected duration $1/\left(1-\lambda\right)$ rounds) for the player to enforce the constraint. Enforcement requires the player to ``correct'' deviations from the target payoff relationship over time. If the game terminates too quickly, there is insufficient opportunity for these corrections to bring the expected payoff to the target. Note that $\lambda_{\min}$ depends on the ``distance'' between the extreme values of $\varphi$ at $\tau_{X}^{+}$ and $\tau_{X}^{-}$: larger swings between $\varphi\left(\tau_{X}^{+},\cdot\right)$ and $\varphi\left(\tau_{X}^{-},\cdot\right)$ require more patience (larger $\lambda$) to balance out through the discounting mechanism.
\end{remark}



\begin{remark}
With some slight modifications in its proof, Theorem 1 can be extended to the setting where $S_{X}$ and $S_{Y}$ are compact sets over $\mathbb{R}^{n}$ and $\mathbb{R}^{m}$, respectively, and $\varphi$ is continuous, by invoking the Banach–Alaoglu theorem. More generally, with minimal changes, it can be shown that for $S_{X}$ and $S_{Y}$ of any structure, and for any bounded function $\varphi$, reduction to short-memory autocratic strategies is still plausible under arbitrarily small extensions of the game length. Rigorously speaking, for any $\varepsilon >0$ such that $\lambda +\varepsilon<1$, every $\left(\varphi ,\lambda\right)$-autocratic strategy can be replaced by a two-point reactive learning $\left(\varphi ,\lambda +\varepsilon\right)$-autocratic strategy.
\end{remark}

\subsection{Additive objective functions and reactive strategies}

\begin{definition}
A function $\varphi :S_{X}\times S_{Y}\rightarrow\mathbb{R}$ is additive if there exist functions $\phi_{X}:S_{X}\rightarrow\mathbb{R}$ and $\phi_{Y}:S_{Y}\rightarrow\mathbb{R}$ with $\varphi\left(s_{X},s_{Y}\right) =\phi_{X}\left(s_{X}\right) +\phi_{Y}\left(s_{Y}\right)$ for all $s_{X}\in S_{X}$ and $s_{Y}\in S_{Y}$.
\end{definition}

For additive objective functions, we can strengthen our results considerably. Theorem \ref{th:mainresultDiscounted} guarantees that enforcement can be achieved with two-point reactive learning strategies. However, the additive structure, $\varphi\left(s_{X},s_{Y}\right) = \phi_{X}\left(s_{X}\right) + \phi_{Y}\left(s_{Y}\right)$, allows us to take $\psi =\phi_{X}$ in the next-round correction condition. Here, we show that two-point reactive learning strategies can be reduced to even simpler reactive strategies that condition solely on the opponent's last action, without tracking $X$'s own actions.

\begin{theorem}\label{thm:additive}
If $\varphi\equiv 0$ is $\lambda$-enforceable by $X$ and $\varphi$ is additive, then $\varphi\equiv 0$ is $\lambda$-enforceable by $X$ using a reactive strategy $\sigma_{X}:S_{Y}\rightarrow\Delta\left(S_{X}\right)$.
\end{theorem}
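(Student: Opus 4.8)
The plan is to take the two-point reactive learning strategy furnished by Theorem~\ref{th:mainresultDiscounted} and, using additivity, collapse it into a genuinely reactive strategy $\sigma_{X}:S_{Y}\rightarrow\Delta\left(S_{X}\right)$ whose only relevant data are the expected values of $\phi_{X}$ under its initial distribution and its responses. First I would dispose of the degenerate case $\lambda =0$: there enforcement depends only on the initial action, and additivity forces $\phi_{Y}$ to be constant on the support of any opponent play, so a reactive strategy that plays a fixed $\sigma_{X}^{0}$ with $\mathbb{E}_{\sigma_{X}^{0}}\left[\phi_{X}\right]=-\phi_{Y}$ already enforces $\varphi\equiv 0$. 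Hence assume $\lambda\in\left(0,1\right)$.

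Next, following the hint, I would take $\psi =\phi_{X}$ as the enforcement potential and let $\Psi$ be its linear extension, $\Psi\left(\tau_{X}\right)=\mathbb{E}_{s_{X}\sim\tau_{X}}\left[\phi_{X}\left(s_{X}\right)\right]=:a\left(\tau_{X}\right)$. For a reactive strategy, the induced reactive learning response satisfies $\sigma_{X}^{\ast}\left[\tau_{X},s_{Y}\right]=\sigma_{X}\left[s_{Y}\right]$ independently of $\tau_{X}$, so substituting $\psi=\phi_{X}$ and $\varphi\left(s_{X},s_{Y}\right)=\phi_{X}\left(s_{X}\right)+\phi_{Y}\left(s_{Y}\right)$ into the generalized next-round correction condition (\eq{mainIntegral}) cancels the $\phi_{X}$ terms and leaves the scalar requirement
\[
\phi_{Y}\left(s_{Y}\right) = -\lambda\, a\left(\sigma_{X}\left[s_{Y}\right]\right) - \left(1-\lambda\right) a\left(\sigma_{X}^{0}\right) \qquad\text{for every } s_{Y}\in S_{Y}.
\]
Writing $c_{0}\coloneqq a\left(\sigma_{X}^{0}\right)$ and $c_{s_{Y}}\coloneqq a\left(\sigma_{X}\left[s_{Y}\right]\right)$, and recalling that $a\left(\tau_{X}\right)$ realizes exactly the interval $\left[m,M\right]$ with $m\coloneqq\min_{s_{X}}\phi_{X}\left(s_{X}\right)$ and $M\coloneqq\max_{s_{X}}\phi_{X}\left(s_{X}\right)$, the construction reduces to choosing a single $c_{0}\in\left[m,M\right]$ such that each $c_{s_{Y}}=-\tfrac{1}{\lambda}\left(\phi_{Y}\left(s_{Y}\right)+\left(1-\lambda\right)c_{0}\right)$ also lies in $\left[m,M\right]$; realizing these expectations by any distributions and invoking Proposition~\ref{prop:gnrcc_iff} then yields a $\left(\varphi,\lambda\right)$-autocratic reactive strategy.

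The crux is the existence of such a $c_{0}$, and this is where the hypothesis of enforceability enters. By Theorem~\ref{th:mainresultDiscounted} and Corollary~\ref{cor:corollary_zero}, enforceability supplies $\tau_{X}^{+}\in\Phi_{X}^{+}$ and $\tau_{X}^{-}\in\Phi_{X}^{-}$ satisfying \eq{lemma_inequality}. Under additivity, $\varphi\left(\tau_{X},s_{Y}\right)=a\left(\tau_{X}\right)+\phi_{Y}\left(s_{Y}\right)$, so the outer minima and maxima act only on $\phi_{Y}$; I would compute that, with $a^{\pm}\coloneqq a\left(\tau_{X}^{\pm}\right)$ and $\Phi_{Y}^{\max/\min}\coloneqq\max/\min_{s_{Y}}\phi_{Y}\left(s_{Y}\right)$, both inequalities in \eq{lemma_inequality} collapse to the single condition $\lambda\left(a^{+}-a^{-}\right)\geqslant\Phi_{Y}^{\max}-\Phi_{Y}^{\min}$, while $\tau_{X}^{+}\in\Phi_{X}^{+}$ and $\tau_{X}^{-}\in\Phi_{X}^{-}$ give $a^{+}\geqslant -\Phi_{Y}^{\min}$ and $a^{-}\leqslant -\Phi_{Y}^{\max}$. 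Combining these with the trivial bounds $a^{+}\leqslant M$ and $a^{-}\geqslant m$ yields the three inequalities $M\geqslant -\Phi_{Y}^{\min}$, $m\leqslant -\Phi_{Y}^{\max}$, and $\lambda\left(M-m\right)\geqslant\Phi_{Y}^{\max}-\Phi_{Y}^{\min}$. Finally I would note that $c_{s_{Y}}\in\left[m,M\right]$ for all $s_{Y}$ is equivalent to
\[
\frac{-\lambda M-\Phi_{Y}^{\min}}{1-\lambda}\leqslant c_{0}\leqslant\frac{-\lambda m-\Phi_{Y}^{\max}}{1-\lambda},
\]
and verify that the intersection of this interval with $\left[m,M\right]$ is nonempty: the left endpoint lies below $M$ by $M\geqslant -\Phi_{Y}^{\min}$, the right endpoint lies above $m$ by $m\leqslant -\Phi_{Y}^{\max}$, and the two endpoints are mutually consistent precisely by $\lambda\left(M-m\right)\geqslant\Phi_{Y}^{\max}-\Phi_{Y}^{\min}$. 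Any $c_{0}$ in this interval completes the construction.

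I expect the main obstacle to be this feasibility argument rather than the algebraic cancellation, which is immediate. The substance is that the scalar constraints inherited from the two-point learning strategy are \emph{exactly} strong enough to guarantee a common feasible $c_{0}$; the decisive simplification is that additivity both forces the two inequalities of \eq{lemma_inequality} to coincide and decouples the dependence on $s_{X}$ from that on $s_{Y}$, so that a response ignoring $X$'s own action still suffices to satisfy the next-round correction condition at every $\tau_{X}\in\mathcal{M}_{X}$.
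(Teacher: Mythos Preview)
Your argument is correct, but it follows a different route from the paper's proof of Theorem~\ref{thm:additive}. The paper does not invoke Theorem~\ref{th:mainresultDiscounted} at all: instead it takes the original autocratic strategy, plays it against each constant opponent action $s_{Y}$, and forms the geometric average $\sigma_{X}\left[s_{Y}\right]\coloneqq\left(1-\lambda\right)\sum_{t\geqslant 1}\lambda^{t-1}\sigma_{X}^{t}\left[s_{Y}\right]$ of the resulting trajectory of mixed actions. A direct computation shows that $\left(1-\lambda\right)\varphi\left(\sigma_{X}^{0},s_{Y}\right)+\lambda\varphi\left(\sigma_{X}\left[s_{Y}\right],s_{Y}\right)=0$, which under additivity is exactly the reactive next-round correction condition with $\psi=\phi_{X}$. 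This buys the paper a construction that preserves the original initial action $\sigma_{X}^{0}$ (a feature it emphasizes in a remark). Your approach, by contrast, passes through the two-point inequalities and solves the resulting scalar feasibility problem for $c_{0}$ from scratch; it does not preserve $\sigma_{X}^{0}$ but is more explicit about the degrees of freedom and, if you realize each $c_{0},c_{s_{Y}}$ as a mixture of the two actions attaining $m$ and $M$, immediately yields a \emph{two-point} reactive strategy. In fact the paper later takes essentially your route to prove Lemma~\ref{lem:inequalities_additive} and Corollary~\ref{cor:reactive}, so you have effectively collapsed Theorem~\ref{thm:additive} and its strengthening into one argument.
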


\begin{proof}
Let $s_{Y}\in S_{Y}$ be fixed and suppose that the opponent plays $s_{Y}$ unconditionally in every round. Let $\left\{\sigma_{X}^{t}\left[s_{Y}\right]\right\}_{t=1}^{\infty}\subseteq\Delta\left(S_{X}\right)$ be a sequence for which $\sum_{t=0}^{\infty}\lambda^{t}\varphi\left(\sigma_{X}^{t}\left[s_{Y}\right] ,s_{Y}\right) =0$ (making the dependence on $s_{Y}$ explicit). Consider the distribution, $\sigma_{X}\left[s_{Y}\right]$, defined by
\begin{align}
\sigma_{X}\left[s_{Y}\right]\left(\cdot\right) &\coloneqq \left(1-\lambda\right)\sum_{t=1}^{\infty}\lambda^{t-1}\sigma_{X}^{t}\left[s_{Y}\right]\left(\cdot\right)
\end{align}
for each $s_{X}\in S_{X}$ and $s_{Y}\in S_{Y}$. The sequence of measures, $\left\{\widetilde{\sigma}_{X}^{k}\left[s_{Y}\right]\right\}_{k\geqslant 1}$, defined by
\begin{align}
\widetilde{\sigma}_{X}^{k}\left[s_{Y}\right]\left(\cdot\right) &\coloneqq \left(1-\lambda\right)\sum_{t=1}^{k}\lambda^{t-1}\sigma_{X}^{t}\left[s_{Y}\right]\left(\cdot\right) ,
\end{align}
then converges in total variation to $\sigma_{X}\left[s_{Y}\right]$ because
\begin{align}
\sigma_{X}\left[s_{Y}\right]\left(\cdot\right) - \widetilde{\sigma}_{X}^{k}\left[s_{Y}\right]\left(\cdot\right) &= \left(1-\lambda\right)\sum_{t=k+1}^{\infty}\lambda^{t-1}\sigma_{X}^{t}\left[s_{Y}\right]\left(\cdot\right) \nonumber \\
&\leqslant \left(1-\lambda\right)\sum_{t=k+1}^{\infty}\lambda^{t-1} \nonumber \\
&= \lambda^{k} ,
\end{align}
and thus $\left\|\sigma_{X}\left[s_{Y}\right] -\widetilde{\sigma}_{X}^{k}\left[s_{Y}\right]\right\|_{\textrm{TV}}\leqslant\lambda^{k}\rightarrow 0$ as $k\rightarrow\infty$. It follows, then, that for every $s_{Y}\in S_{Y}$,
\begin{align}
0 &= \left(1-\lambda\right)\sum_{t=0}^{\infty}\lambda^{t}\varphi\left(\sigma_{X}^{t}\left[s_{Y}\right] ,s_{Y}\right) \nonumber \\
&= \left(1-\lambda\right)\varphi\left(\sigma_{X}^{0},s_{Y}\right) + \left(1-\lambda\right)\sum_{t=1}^{\infty}\lambda^{t}\varphi\left(\sigma_{X}^{t}\left[s_{Y}\right] ,s_{Y}\right) \nonumber \\
&= \left(1-\lambda\right)\varphi\left(\sigma_{X}^{0},s_{Y}\right) + \lambda\lim_{t\rightarrow\infty}\varphi\left(\widetilde{\sigma}_{X}^{t}\left[s_{Y}\right] ,s_{Y}\right) \nonumber \\
&= \left(1-\lambda\right)\varphi\left(\sigma_{X}^{0},s_{Y}\right) + \lambda\varphi\left(\sigma_{X}\left[s_{Y}\right] ,s_{Y}\right) , \label{eq:mainEquation}
\end{align}
where the last equation is due to the boundedness of $\varphi$ and the fact that the sequence $\left\{\widetilde{\sigma}_{X}^{k}\left[s_{Y}\right]\right\}_{k\geqslant 1}$ converges to $\sigma_{X}\left[s_{Y}\right]$ in total variation. Since $\varphi\left(s_{X},s_{Y}\right) =\phi_{X}\left(s_{X}\right) +\phi_{Y}\left(s_{Y}\right)$, we have
\begin{align}
\phi_{Y}\left(s_{Y}\right) &= -\lambda\mathbb{E}_{s_{X}\sim\sigma_{X}\left[s_{Y}\right]}\left[\phi_{X}\left(s_{X}\right)\right] - \left(1-\lambda\right)\mathbb{E}_{s_{X}\sim\sigma_{X}^{0}}\left[\phi_{X}\left(s_{X}\right)\right] ,
\end{align}
which is the pointwise next-round correction condition for the reactive strategy $\left(\sigma_{X}^{0},\sigma_{X}\left[s_{Y}\right]\right)$, giving the desired result.
\end{proof}

\begin{remark}
Although our focus is on finite action spaces, this theorem readily extends to measurable spaces. Only minor additional justification is needed, such as the fact that for each $s_{Y}\in S_{Y}$, $\sigma_{X}\left[s_{Y}\right]$ is a probability measure (which follows from the Vitali-Hahn-Saks Theorem \citep[see][]{doob:S:1994}).
\end{remark}

\begin{remark}
An important feature of the reactive strategy constructed in Theorem \ref{thm:additive} is that it uses the same initial mixed action, $\sigma_{X}^{0}$, as the original (possibly longer-memory) autocratic strategy. This is crucial for the proof, which relies on taking a weighted average of the mixed actions along the trajectory induced by the original strategy when the opponent plays the same action, $s_{Y}$, repeatedly. The initial action anchors this averaging process, ensuring that the mean converges to the desired enforcement property.
\end{remark}

In fact, we can say slightly more about reactive strategies in this context:
\begin{lemma}\label{lem:inequalities_additive}
Suppose that $\lambda\in\left[0,1\right)$ and that $\tau_{X}^{\pm}\in\Delta\left(S_{X}\right)$ satisfy \eq{lemma_inequality}. If $\varphi$ is additive, then there exists a function $p^{\ast}:S_{Y}\rightarrow\left[0,1\right]$ such that, with the two-point reactive strategy
\begin{align}
\sigma_{X}^{\ast}\left[s_{Y}\right] &= p^{\ast}\left[s_{Y}\right]\tau_{X}^{+}+\left(1-p^{\ast}\left[s_{Y}\right]\right)\tau_{X}^{-} ,
\end{align}
$X$ can enforce $\varphi\equiv K$ for any $K\in\left[\phi_{X}\left(\tau_{X}^{-}\right) +\max_{s_{Y}\in S_{Y}}\phi_{Y}\left(s_{Y}\right) ,\phi_{X}\left(\tau_{X}^{+}\right) +\min_{s_{Y}\in S_{Y}}\phi_{Y}\left(s_{Y}\right)\right]$.
\end{lemma}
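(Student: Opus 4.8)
The plan is to reuse the two-point construction of Lemma~\ref{lem:inequalities}, but to exploit additivity so that the response can be made to depend on $s_Y$ alone. Write $\phi_X\left(\tau_X\right)\coloneqq\mathbb{E}_{s_X\sim\tau_X}\left[\phi_X\left(s_X\right)\right]$ for the linear extension, and set $a^{\pm}\coloneqq\phi_X\left(\tau_X^{\pm}\right)$, $\phi_Y^{\max}\coloneqq\max_{s_Y\in S_Y}\phi_Y\left(s_Y\right)$, and $\phi_Y^{\min}\coloneqq\min_{s_Y\in S_Y}\phi_Y\left(s_Y\right)$. First I would substitute $\varphi\left(\tau_X^{\pm},s_Y\right)=a^{\pm}+\phi_Y\left(s_Y\right)$ into the two inequalities of \eq{lemma_inequality}. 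A short computation shows that both collapse to the \emph{single} condition $\lambda\left(a^{+}-a^{-}\right)\geqslant\phi_Y^{\max}-\phi_Y^{\min}$; in particular $a^{+}\geqslant a^{-}$. If $a^{+}=a^{-}$ (which forces $\phi_Y$ to be constant) or $\tau_X^{+}=\tau_X^{-}$, the admissible range of $K$ shrinks to a single point enforced by unconditional play, so I may assume $a^{+}>a^{-}$ throughout.

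Next I would take the enforcement potential to be $\psi=\phi_X$, so that $\Psi\left(\tau_X\right)=\phi_X\left(\tau_X\right)$. Because $\varphi\left(s_X,s_Y\right)=\phi_X\left(s_X\right)+\phi_Y\left(s_Y\right)$, the potential term cancels the $\phi_X$ part of $\varphi$, and the generalized next-round correction condition (\eq{mainIntegral}) for the target $\varphi\equiv K$ reduces, for every $s_Y\in S_Y$, to
\begin{align}
\phi_X\left(\sigma_X^{\ast}\left[s_Y\right]\right) &= \frac{K-\phi_Y\left(s_Y\right)-\left(1-\lambda\right)\phi_X\left(\sigma_X^{0}\right)}{\lambda} . \nonumber
\end{align}
The key point is that the right-hand side no longer references $X$'s own previous action, so conditioning on $s_Y$ alone suffices. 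I would then set $\sigma_X^{0}=p_{0}\tau_X^{+}+\left(1-p_{0}\right)\tau_X^{-}$ and $\sigma_X^{\ast}\left[s_Y\right]=p^{\ast}\left[s_Y\right]\tau_X^{+}+\left(1-p^{\ast}\left[s_Y\right]\right)\tau_X^{-}$, and solve the displayed equation explicitly for $p^{\ast}\left[s_Y\right]$, obtaining an affine, monotone function of $\phi_Y\left(s_Y\right)$ with denominator $\lambda\left(a^{+}-a^{-}\right)>0$.

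The heart of the argument is choosing a single initial weight $p_{0}\in\left[0,1\right]$ that makes $p^{\ast}\left[s_Y\right]\in\left[0,1\right]$ simultaneously for all $s_Y$. Since $p^{\ast}\left[s_Y\right]$ is affine in $\phi_Y\left(s_Y\right)$, the constraints $p^{\ast}\left[s_Y\right]\geqslant 0$ and $p^{\ast}\left[s_Y\right]\leqslant 1$ are tightest at $\phi_Y^{\max}$ and $\phi_Y^{\min}$ respectively, collapsing the a priori $\lvert S_Y\rvert$ box constraints to the two-sided bound
\begin{align}
\frac{K-\phi_Y^{\min}-a^{-}-\lambda\left(a^{+}-a^{-}\right)}{\left(1-\lambda\right)\left(a^{+}-a^{-}\right)} &\leqslant p_{0} \leqslant \frac{K-\phi_Y^{\max}-a^{-}}{\left(1-\lambda\right)\left(a^{+}-a^{-}\right)} . \nonumber
\end{align}
I would check that this interval is nonempty \emph{exactly} because $\phi_Y^{\max}-\phi_Y^{\min}\leqslant\lambda\left(a^{+}-a^{-}\right)$ (the condition extracted in the first paragraph), and that it meets $\left[0,1\right]$ because $K\in\left[a^{-}+\phi_Y^{\max},a^{+}+\phi_Y^{\min}\right]$: the lower endpoint of $K$ makes the right bound nonnegative, while the upper endpoint makes the left bound at most $1$. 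Any $p_{0}$ in the resulting intersection gives a valid initial action and valid transition probabilities.

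Finally I would conclude by embedding the reactive strategy $\left(\sigma_X^{0},\sigma_X^{\ast}\left[\cdot\right]\right)$ as a reactive learning strategy whose response map ignores its first argument, i.e. $\sigma_X^{\ast}\left[\tau_X,s_Y\right]\coloneqq\sigma_X^{\ast}\left[s_Y\right]$. Then $\mathcal{M}_X=\left\{\sigma_X^{0}\right\}\cup\left\{\sigma_X^{\ast}\left[s_Y\right]\mid s_Y\in S_Y\right\}$, and the displayed enforcement equation is precisely the generalized next-round correction condition on $\mathcal{M}_X$ for $\varphi-K$; by Proposition~\ref{prop:gnrcc_iff} the strategy is $\left(\varphi-K,\lambda\right)$-autocratic, enforcing $\varphi\equiv K$. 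The main obstacle is the simultaneous-feasibility step: the payoff of additivity is that the cancellation $\psi=\phi_X$ renders $p^{\ast}\left[s_Y\right]$ affine in $\phi_Y\left(s_Y\right)$, so infinitely many (here, $\lvert S_Y\rvert$) box constraints reduce to just two extreme ones, which are compatible precisely under \eq{lemma_inequality}.
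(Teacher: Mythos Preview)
Your proposal is correct and follows essentially the same route as the paper: both take the enforcement potential $\psi=\phi_{X}$, observe that additivity makes the response $p^{\ast}$ independent of $X$'s previous mixed action, derive the same two-sided bound on $p_{0}$, and verify feasibility via the reduced inequality $\phi_{Y}^{\max}-\phi_{Y}^{\min}\leqslant\lambda\left(a^{+}-a^{-}\right)$ together with the assumed range of $K$. The only cosmetic difference is that the paper reaches the $p$-independence by starting from the general formula \eq{pfunction} of Lemma~\ref{lem:inequalities} and cancelling, whereas you derive the reactive equation directly from \eq{mainIntegral}; the resulting bounds on $p_{0}$ are identical.
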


\begin{proof}
Suppose that $\varphi\left(s_{X},s_{Y}\right)=\phi_{X}\left(s_{X}\right) +\phi_{Y}\left(s_{Y}\right)$ for all $s_{X}\in S_{X}$ and $s_{Y}\in S_{Y}$, and let $\psi =\phi_{X}$. The generalized next-round correction condition is equivalent to
\begin{align}
p^{\ast}\left[p,s_{Y}\right] &= \frac{K-p\varphi\left(\tau_{X}^{+},s_{Y}\right) -\left(1-p\right)\varphi\left(\tau_{X}^{-},s_{Y}\right)}{\lambda\left(\psi\left(\tau_{X}^{+}\right) -\psi\left(\tau_{X}^{-}\right)\right)} + \frac{p-\left(1-\lambda\right) p_{0}}{\lambda} \nonumber \\
&= \frac{K-p\left(\phi_{X}\left(\tau_{X}^{+}\right) -\phi_{X}\left(\tau_{X}^{-}\right)\right) -\phi_{X}\left(\tau_{X}^{-}\right) -\phi_{Y}\left(s_{Y}\right)}{\lambda\left(\psi\left(\tau_{X}^{+}\right) -\psi\left(\tau_{X}^{-}\right)\right)} + \frac{p-\left(1-\lambda\right) p_{0}}{\lambda} \nonumber \\
&= \frac{K -\phi_{X}\left(\tau_{X}^{-}\right) -\phi_{Y}\left(s_{Y}\right)}{\lambda\left(\phi_{X}\left(\tau_{X}^{+}\right) -\phi_{X}\left(\tau_{X}^{-}\right)\right)} - \frac{\left(1-\lambda\right) p_{0}}{\lambda} ,
\end{align}
which is independent of $p$. To ensure $p^{\ast}\left[s_{Y}\right]\in\left[0,1\right]$ for all $s_{Y}\in S_{Y}$ (dropping $p$), we require
\begin{align}
\frac{K -\phi_{X}\left(\tau_{X}^{-}\right) -\min_{s_{Y}\in S_{Y}}\phi_{Y}\left(s_{Y}\right)}{\left(1-\lambda\right)\left(\phi_{X}\left(\tau_{X}^{+}\right) -\phi_{X}\left(\tau_{X}^{-}\right)\right)} -\frac{\lambda}{1-\lambda} &\leqslant p_{0} \leqslant \frac{K -\phi_{X}\left(\tau_{X}^{-}\right) -\max_{s_{Y}\in S_{Y}}\phi_{Y}\left(s_{Y}\right)}{\left(1-\lambda\right)\left(\phi_{X}\left(\tau_{X}^{+}\right) -\phi_{X}\left(\tau_{X}^{-}\right)\right)} .
\end{align}
Since $K\in\left[\phi_{X}\left(\tau_{X}^{-}\right) +\max_{s_{Y}\in S_{Y}}\phi_{Y}\left(s_{Y}\right) ,\phi_{X}\left(\tau_{X}^{+}\right) +\min_{s_{Y}\in S_{Y}}\phi_{Y}\left(s_{Y}\right)\right]$, we have the inequalities
\begin{subequations}
\begin{align}
\frac{K -\phi_{X}\left(\tau_{X}^{-}\right) -\min_{s_{Y}\in S_{Y}}\phi_{Y}\left(s_{Y}\right)}{\left(1-\lambda\right)\left(\phi_{X}\left(\tau_{X}^{+}\right) -\phi_{X}\left(\tau_{X}^{-}\right)\right)} -\frac{\lambda}{1-\lambda} &\leqslant 1 ; \\
\frac{K -\phi_{X}\left(\tau_{X}^{-}\right) -\max_{s_{Y}\in S_{Y}}\phi_{Y}\left(s_{Y}\right)}{\left(1-\lambda\right)\left(\phi_{X}\left(\tau_{X}^{+}\right) -\phi_{X}\left(\tau_{X}^{-}\right)\right)} &\geqslant 0 .
\end{align}
\end{subequations}
By \eq{lemma_inequality}, we have $\max_{s_{Y}\in S_{Y}}\phi_{Y}\left(s_{Y}\right)-\min_{s_{Y}\in S_{Y}}\phi_{Y}\left(s_{Y}\right)\leqslant \lambda\left(\phi_{X}\left(\tau_{X}^{+}\right) -\phi_{X}\left(\tau_{X}^{-}\right)\right)$, which gives
\begin{align}
\frac{K -\phi_{X}\left(\tau_{X}^{-}\right) -\min_{s_{Y}\in S_{Y}}\phi_{Y}\left(s_{Y}\right)}{\left(1-\lambda\right)\left(\phi_{X}\left(\tau_{X}^{+}\right) -\phi_{X}\left(\tau_{X}^{-}\right)\right)} -\frac{\lambda}{1-\lambda} &\leqslant \frac{K -\phi_{X}\left(\tau_{X}^{-}\right) -\max_{s_{Y}\in S_{Y}}\phi_{Y}\left(s_{Y}\right)}{\left(1-\lambda\right)\left(\phi_{X}\left(\tau_{X}^{+}\right) -\phi_{X}\left(\tau_{X}^{-}\right)\right)} . \label{eq:additive_interval}
\end{align}
Therefore, the range of acceptable values of $p_{0}$ is the interval
\begin{align}\label{eq:p_0undiscounted}
\Bigg[ \max &\left\{\frac{K -\phi_{X}\left(\tau_{X}^{-}\right) -\min_{s_{Y}\in S_{Y}}\phi_{Y}\left(s_{Y}\right)}{\left(1-\lambda\right)\left(\phi_{X}\left(\tau_{X}^{+}\right) -\phi_{X}\left(\tau_{X}^{-}\right)\right)} -\frac{\lambda}{1-\lambda},0\right\} , \nonumber \\
&\min\left\{\frac{K -\phi_{X}\left(\tau_{X}^{-}\right) -\max_{s_{Y}\in S_{Y}}\phi_{Y}\left(s_{Y}\right)}{\left(1-\lambda\right)\left(\phi_{X}\left(\tau_{X}^{+}\right) -\phi_{X}\left(\tau_{X}^{-}\right)\right)},1\right\} \Bigg] .
\end{align}
Any $p_{0}$ in this range translates into a valid response function, $p^{\ast}$.
\end{proof}

\begin{corollary}\label{cor:reactive}
If $\varphi\equiv 0$ is $\lambda$-enforceable by $X$ and $\varphi$ is additive, then $\varphi\equiv 0$ is $\lambda$-enforceable by $X$ using a two-point reactive strategy.
\end{corollary}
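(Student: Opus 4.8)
The plan is to combine Theorem~\ref{th:mainresultDiscounted}, which extracts a two-point structure from any autocratic strategy, with Lemma~\ref{lem:inequalities_additive}, which converts that structure into a reactive (opponent-only) rule in the additive case. The bridge between the two is the observation that, when $\varphi$ is additive, the membership conditions $\tau_{X}^{+}\in\Phi_{X}^{+}$ and $\tau_{X}^{-}\in\Phi_{X}^{-}$ coincide exactly with the requirement that $K=0$ lie in the admissible range of Lemma~\ref{lem:inequalities_additive}.

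First I would invoke the hypothesis: since $\varphi\equiv 0$ is $\lambda$-enforceable, there is some $\left(\varphi,\lambda\right)$-autocratic strategy $\sigma_{X}$ of arbitrary memory. Applying Theorem~\ref{th:mainresultDiscounted} to $\sigma_{X}$ produces a pair $\tau_{X}^{\pm}\in\Delta\left(S_{X}\right)$ satisfying the inequalities \eq{lemma_inequality} together with $\max_{s_{Y}\in S_{Y}}\varphi\left(\tau_{X}^{-},s_{Y}\right)\leqslant 0\leqslant\min_{s_{Y}\in S_{Y}}\varphi\left(\tau_{X}^{+},s_{Y}\right)$, i.e., $\tau_{X}^{-}\in\Phi_{X}^{-}$ and $\tau_{X}^{+}\in\Phi_{X}^{+}$, exactly as recorded in the discussion following that theorem.

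Next I would translate these bounds using additivity. Writing $\varphi\left(\tau_{X},s_{Y}\right)=\phi_{X}\left(\tau_{X}\right)+\phi_{Y}\left(s_{Y}\right)$ with the linear extension $\phi_{X}\left(\tau_{X}\right)\coloneqq\mathbb{E}_{s_{X}\sim\tau_{X}}\left[\phi_{X}\left(s_{X}\right)\right]$, the two extrema become $\max_{s_{Y}\in S_{Y}}\varphi\left(\tau_{X}^{-},s_{Y}\right)=\phi_{X}\left(\tau_{X}^{-}\right)+\max_{s_{Y}\in S_{Y}}\phi_{Y}\left(s_{Y}\right)$ and $\min_{s_{Y}\in S_{Y}}\varphi\left(\tau_{X}^{+},s_{Y}\right)=\phi_{X}\left(\tau_{X}^{+}\right)+\min_{s_{Y}\in S_{Y}}\phi_{Y}\left(s_{Y}\right)$. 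Hence the inequalities from the previous step assert precisely that $0$ lies in the interval $\left[\phi_{X}\left(\tau_{X}^{-}\right)+\max_{s_{Y}\in S_{Y}}\phi_{Y}\left(s_{Y}\right),\,\phi_{X}\left(\tau_{X}^{+}\right)+\min_{s_{Y}\in S_{Y}}\phi_{Y}\left(s_{Y}\right)\right]$, which is exactly the range of $K$ permitted in Lemma~\ref{lem:inequalities_additive}. I would then apply that lemma with $K=0$ to obtain a function $p^{\ast}:S_{Y}\rightarrow\left[0,1\right]$ and the two-point reactive strategy $\sigma_{X}^{\ast}\left[s_{Y}\right]=p^{\ast}\left[s_{Y}\right]\tau_{X}^{+}+\left(1-p^{\ast}\left[s_{Y}\right]\right)\tau_{X}^{-}$, which enforces $\varphi\equiv 0$.

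I do not expect a serious obstacle, since the analytic work has been front-loaded into Theorem~\ref{th:mainresultDiscounted} and Lemma~\ref{lem:inequalities_additive}; the only point requiring care is the degenerate case $\phi_{X}\left(\tau_{X}^{+}\right)=\phi_{X}\left(\tau_{X}^{-}\right)$, for which the response formula in Lemma~\ref{lem:inequalities_additive} divides by zero. In that case \eq{lemma_inequality} forces $\varphi\left(\tau_{X}^{+},s_{Y}\right)=\varphi\left(\tau_{X}^{-},s_{Y}\right)=0$ for every $s_{Y}\in S_{Y}$, so the unconditional play of $\tau_{X}^{+}$---a constant map $S_{Y}\rightarrow\Delta\left(S_{X}\right)$, and in particular a two-point reactive strategy---already enforces $\varphi\equiv 0$.
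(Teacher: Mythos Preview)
Your proposal is correct and matches the paper's intended argument: the corollary is stated without proof immediately after Lemma~\ref{lem:inequalities_additive}, and the natural derivation is exactly the one you give---extract the pair $\tau_{X}^{\pm}$ with \eq{lemma_inequality} and $\tau_{X}^{\pm}\in\Phi_{X}^{\pm}$ from the proof of Theorem~\ref{th:mainresultDiscounted}, then feed it into Lemma~\ref{lem:inequalities_additive} with $K=0$. Your explicit handling of the degenerate case $\phi_{X}\left(\tau_{X}^{+}\right)=\phi_{X}\left(\tau_{X}^{-}\right)$ is a useful addition that the paper leaves implicit.
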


We further obtain a simplification of Proposition~\ref{prop:lambda_min}:

\begin{proposition}
Suppose $\varphi$ is additive. Then, there exists a $\left(\varphi ,\lambda\right)$-autocratic strategy if and only if $\Phi_{X}^{+},\Phi_{X}^{-}\neq\left\{\right\}$ and either \emph{(i)} $\phi_{X}$ is constant or \emph{(ii)} $\phi_{X}$ is non-constant and $\lambda\geqslant\lambda_{\min}$, where
\begin{align}
\lambda_{\min} &= \frac{\max_{s_{Y}\in S_{Y}}\phi_{Y}\left(s_{Y}\right) -\min_{s_{Y}\in S_{Y}}\phi_{Y}\left(s_{Y}\right)}{\max_{s_{X}\in S_{X}}\phi_{X}\left(s_{X}\right) -\min_{s_{X}\in S_{X}}\phi_{X}\left(s_{X}\right)} .
\end{align}
\end{proposition}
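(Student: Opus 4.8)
The plan is to obtain this as a specialization of Proposition~\ref{prop:lambda_min}, whose statement is already an ``if and only if'' covering both directions; the real work is to evaluate the general threshold \eq{lambda_min} under additivity and then reconcile the two case structures. Throughout I would write $M_X := \max_{s_X\in S_X}\phi_X(s_X)$ and $m_X := \min_{s_X\in S_X}\phi_X(s_X)$, and similarly $M_Y, m_Y$ for $\phi_Y$, and set $\phi_X(\tau_X) := \mathbb{E}_{s_X\sim\tau_X}[\phi_X(s_X)]$. A preliminary observation is that as $\tau_X$ ranges over $\Delta(S_X)$, the value $\phi_X(\tau_X)$ fills the entire interval $[m_X, M_X]$ (place all mass on a minimizer or maximizer of $\phi_X$, then use convexity). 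By additivity, $\varphi(\tau_X, s_Y) = \phi_X(\tau_X) + \phi_Y(s_Y)$, so $\min_{s_Y}\varphi(\tau_X, s_Y) = \phi_X(\tau_X) + m_Y$ and $\max_{s_Y}\varphi(\tau_X, s_Y) = \phi_X(\tau_X) + M_Y$; in particular $\Phi_X^{+} = \{\tau_X : \phi_X(\tau_X) \geqslant -m_Y\}$ and $\Phi_X^{-} = \{\tau_X : \phi_X(\tau_X) \leqslant -M_Y\}$.

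Next I would compute the ratio inside the supremum of \eq{lambda_min}. Writing $a := \phi_X(\tau_X^{+})$ and $b := \phi_X(\tau_X^{-})$, the $M_Y$ and $m_Y$ shifts cancel: the numerator becomes $(a+m_Y)-(b+M_Y) = (a-b)-(M_Y-m_Y)$, while \emph{both} denominator candidates $\max_{s_Y}\varphi(\tau_X^{+},s_Y)-\max_{s_Y}\varphi(\tau_X^{-},s_Y)$ and $\min_{s_Y}\varphi(\tau_X^{+},s_Y)-\min_{s_Y}\varphi(\tau_X^{-},s_Y)$ collapse to $a-b$. Hence the ratio is exactly $1-(M_Y-m_Y)/(a-b)$, which is nondecreasing in $a$ and nonincreasing in $b$ on the region $a>b$, so taking the supremum over $\Phi_X^{+}\times\Phi_X^{-}$ reduces to maximizing $a-b$. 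Since $\Phi_X^{+}\neq\{\}$ forces $M_X \geqslant -m_Y$ and $\Phi_X^{-}\neq\{\}$ forces $m_X \leqslant -M_Y$, the point masses on a maximizer and a minimizer of $\phi_X$ lie in $\Phi_X^{+}$ and $\Phi_X^{-}$ respectively, so the supremum $a-b = M_X-m_X$ is attained. When $\phi_X$ is non-constant this gives
\[
1 - \sup(\cdots) = 1 - \left(1 - \frac{M_Y-m_Y}{M_X-m_X}\right) = \frac{M_Y-m_Y}{M_X-m_X},
\]
the claimed $\lambda_{\min}$. (As a cross-check giving an explicit construction for the sufficiency direction, I would note that under additivity both inequalities in \eq{lemma_inequality} reduce to the single condition $\lambda(a-b)\geqslant M_Y-m_Y$, so Lemma~\ref{lem:inequalities_additive} together with Corollary~\ref{cor:reactive} produces an explicit two-point reactive strategy enforcing $\varphi\equiv 0$ exactly when $\lambda\geqslant\lambda_{\min}$.)

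The remaining step, and the one I expect to require the most care, is reconciling the general disjunction ``$\Phi_X^{+}\cap\Phi_X^{-}\neq\{\}$ or $\lambda\geqslant\lambda_{\min}$'' with the stated split into (i) $\phi_X$ constant and (ii) $\phi_X$ non-constant with $\lambda\geqslant\lambda_{\min}$, since the general formula has a zero denominator precisely in the constant branch. Assuming $\Phi_X^{+},\Phi_X^{-}\neq\{\}$, I would argue that $\Phi_X^{+}\cap\Phi_X^{-}\neq\{\}$ holds iff some $\tau_X$ satisfies $-m_Y \leqslant \phi_X(\tau_X) \leqslant -M_Y$; this interval is nonempty iff $M_Y \leqslant m_Y$, i.e.\ $\phi_Y$ is constant, and conversely when $\phi_Y$ is constant the nonemptiness of both sets places $-\phi_Y$ in $[m_X,M_X]$, so the intersection is nonempty. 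Thus, for non-constant $\phi_X$, $\Phi_X^{+}\cap\Phi_X^{-}\neq\{\}$ is equivalent to $M_Y-m_Y=0$, i.e.\ $\lambda_{\min}=0$, in which case ``$\lambda\geqslant\lambda_{\min}$'' holds automatically; the disjunction therefore collapses to ``$\lambda\geqslant\lambda_{\min}$'' in case (ii). Finally, for the constant branch (i), if $\phi_X\equiv c$ then each of $\Phi_X^{\pm}$ is either all of $\Delta(S_X)$ or empty, so $\Phi_X^{+},\Phi_X^{-}\neq\{\}$ forces $-m_Y \leqslant c \leqslant -M_Y$, whence $\phi_Y\equiv -c$ and $\varphi\equiv 0$ identically; this is enforced by unconditional play for every $\lambda$, matching case (i). The algebra in the first two paragraphs is routine; the genuine obstacle is this final bookkeeping, ensuring the degenerate constant-$\phi_X$ branch and the ``intersection nonempty'' branch are absorbed into the stated formulation without gaps.
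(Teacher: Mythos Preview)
Your proposal is correct and follows exactly the route the paper intends: the proposition is stated immediately after Lemma~\ref{lem:inequalities_additive} and Corollary~\ref{cor:reactive} as ``a simplification of Proposition~\ref{prop:lambda_min}'' with no separate proof, so specializing \eq{lambda_min} under additivity is precisely what is expected. Your computation of the ratio, the reduction to $a-b=M_X-m_X$, and the case bookkeeping for constant $\phi_X$ and for $\Phi_X^{+}\cap\Phi_X^{-}\neq\{\}$ are all sound.
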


\subsection{Infinite-horizon (undiscounted) games}\label{sec:undiscounted}
Up until this point, our focus has been on discounted games, which terminate in finitely many rounds with probability one ($\lambda <1$). Although this case is the most realistic from a modeling perspective, we know from classical results that some linear payoff relationships are enforceable in undiscounted, infinite-horizon games. An example is tit-for-tat in the repeated prisoner's dilemma, which ensures that $\pi_{X}=\pi_{Y}$ as $\lambda\rightarrow 1$ (and only in the infinite-horizon limit). In this section, we consider this limit more generally.

We assume that the undiscounted expectation of $\varphi$ is Ces\`{a}ro summable, meaning the limit
\begin{align}
\lim_{T\rightarrow\infty}\frac{1}{T+1}\sum_{t=0}^{T}\mathbb{E}_{\sigma_{X},\sigma_{Y}}\left[\varphi\left(s_{X}^{t},s_{Y}^{t}\right)\right]
\end{align}
exists. Since both action spaces are finite, this limit exists whenever both behavioral strategies $\sigma_{X}$ and $\sigma_{Y}$ are finite-memory. We begin with the undiscounted analog of Theorem~\ref{th:mainresultDiscounted}:
\begin{theorem}\label{th:mainresultUndiscounted}
Suppose that $\sigma_{X}:\mathcal{H}\rightarrow\Delta\left(S_{X}\right)$ is a $\left(\varphi ,1\right)$-autocratic strategy of arbitrary memory. Then, there exists a two-point reactive learning strategy that is also $\left(\varphi ,1\right)$-autocratic.
\end{theorem}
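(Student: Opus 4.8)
The plan is to reduce the theorem, exactly as in the discounted case, to exhibiting two mixed actions $\tau_X^{+}\in\Phi_X^{+}$ and $\tau_X^{-}\in\Phi_X^{-}$ and then assembling them into a two-point reactive learning strategy. The obstruction to simply reusing the proof of Theorem~\ref{th:mainresultDiscounted} is that the auxiliary potential $\Theta$ there is bounded \emph{precisely} because $\sigma_X$ is $\left(\varphi,\lambda\right)$-autocratic for $\lambda<1$ (its value equals a convergent geometric tail); at $\lambda=1$ we have only Ces\`{a}ro autocracy, so $\Theta$ need not be defined or bounded. I would therefore replace the sup/inf-of-$\Theta$ extraction with a direct separation argument, after first noting that the inequalities \eq{lemma_inequality} degenerate at $\lambda=1$ to the single requirement $\max_{s_Y}\varphi\left(\tau_X^{-},s_Y\right)\leqslant\min_{s_Y}\varphi\left(\tau_X^{+},s_Y\right)$, which holds automatically once $\tau_X^{+}\in\Phi_X^{+}$ and $\tau_X^{-}\in\Phi_X^{-}$.

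The main step, which I expect to be the principal obstacle, is establishing that a $\left(\varphi,1\right)$-autocratic strategy forces $\Phi_X^{+},\Phi_X^{-}\neq\left\{\right\}$. Suppose $\Phi_X^{+}=\left\{\right\}$; equivalently $\max_{\tau_X\in\Delta\left(S_X\right)}\min_{s_Y\in S_Y}\varphi\left(\tau_X,s_Y\right)<0$. By the minimax theorem for the finite zero-sum game with payoff $\varphi\left(s_X,s_Y\right)$ (using $\min_{s_Y}\varphi\left(\tau_X,s_Y\right)=\min_{\rho\in\Delta\left(S_Y\right)}\varphi\left(\tau_X,\rho\right)$), this quantity equals $\min_{\rho\in\Delta\left(S_Y\right)}\max_{s_X\in S_X}\varphi\left(s_X,\rho\right)$, so there exist a mixed opponent action $\rho^{\ast}\in\Delta\left(S_Y\right)$ and a constant $c>0$ with $\varphi\left(s_X,\rho^{\ast}\right)\leqslant -c$ for every $s_X\in S_X$, hence $\varphi\left(\tau_X,\rho^{\ast}\right)\leqslant -c$ for every $\tau_X\in\Delta\left(S_X\right)$. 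Letting $Y$ play $\rho^{\ast}$ i.i.d. in every round then yields $\mathbb{E}_{\sigma_X,\sigma_Y}\left[\varphi\left(s_X^{t},s_Y^{t}\right)\right]\leqslant -c$ for all $t$ and \emph{all} $\sigma_X$, so the Ces\`{a}ro limit is at most $-c<0$, contradicting autocracy. The same argument with the opposite sign gives $\Phi_X^{-}\neq\left\{\right\}$.

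For the construction, fix $\tau_X^{+}\in\Phi_X^{+}$ and $\tau_X^{-}\in\Phi_X^{-}$. If $\Phi_X^{+}\cap\Phi_X^{-}\neq\left\{\right\}$, some $\tau_X$ satisfies $\varphi\left(\tau_X,s_Y\right)=0$ for all $s_Y$, and playing $\tau_X$ unconditionally is trivially $\left(\varphi,1\right)$-autocratic. Otherwise $\tau_X^{+}\neq\tau_X^{-}$, and I would define an enforcement potential on the segment by $\Psi\left(p\tau_X^{+}+\left(1-p\right)\tau_X^{-}\right)=p\psi^{+}+\left(1-p\right)\psi^{-}$, choosing the gap $\psi^{+}-\psi^{-}$ to be any number at least $\max\left\{\max_{s_Y}\varphi\left(\tau_X^{+},s_Y\right),\,-\min_{s_Y}\varphi\left(\tau_X^{-},s_Y\right)\right\}$ (here the undiscounted setting gives freedom absent when $\lambda<1$, where \eq{psi_twopoint} pins $\psi$). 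The $\lambda=1$ form of the generalized next-round correction condition \eq{mainIntegral}, namely $\varphi\left(\tau_X,s_Y\right)=\Psi\left(\tau_X\right)-\Psi\left(\sigma_X^{\ast}\left[\tau_X,s_Y\right]\right)$, forces the response $p^{\ast}\left[p,s_Y\right]=p-\varphi\left(p\tau_X^{+}+\left(1-p\right)\tau_X^{-},s_Y\right)/\left(\psi^{+}-\psi^{-}\right)$, which is affine in $p$; the sign conditions $\varphi\left(\tau_X^{+},\cdot\right)\geqslant 0$ and $\varphi\left(\tau_X^{-},\cdot\right)\leqslant 0$ together with the choice of gap place $p^{\ast}\left[p,s_Y\right]$ in $\left[0,1\right]$ at the endpoints $p\in\left\{0,1\right\}$, hence for all $p\in\left[0,1\right]$ and $s_Y\in S_Y$, for any initial $p_0\in\left[0,1\right]$ (the $\left(1-\lambda\right)\Psi\left(\sigma_X^{0}\right)$ term drops out). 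Finally, along any exogenous sequence the induced chain telescopes to $\sum_{t=0}^{T}\varphi\left(\tau_X^{t},s_Y^{t}\right)=\Psi\left(\tau_X^{0}\right)-\Psi\left(\tau_X^{T+1}\right)$, which is bounded by $\psi^{+}-\psi^{-}$ since every $\Psi$-value lies in $\left[\psi^{-},\psi^{+}\right]$; dividing by $T+1$ shows the Ces\`{a}ro average tends to $0$ uniformly over exogenous sequences, and the tower-property argument of Lemma~\ref{lem:exogenous}, combined with bounded convergence, extends this to all behavioral opponents. Thus the two-point reactive learning strategy is $\left(\varphi,1\right)$-autocratic.
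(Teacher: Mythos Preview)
Your proof is correct and the construction and telescoping steps mirror the paper's Lemma~\ref{lem:undiscounted_gnrcc} and proof of Theorem~\ref{th:mainresultUndiscounted} closely. Where you genuinely diverge from the paper is in establishing $\Phi_X^{+},\Phi_X^{-}\neq\left\{\right\}$. The paper splits into two cases: if a $\left(\varphi,\lambda\right)$-autocratic strategy exists for some $\lambda<1$, it cites the proof of Theorem~\ref{th:mainresultDiscounted}; otherwise, Lemma~\ref{lem:undiscounted_support} constructs an \emph{adaptive} pure opponent playing $s_Y^{t}=\argmin_{s_Y}\varphi\left(\tau_X^{t},s_Y\right)$ at each round (which requires $Y$ to compute $X$'s current mixed action). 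Your minimax argument replaces both cases with a single, cleaner step: a fixed i.i.d.\ mixed opponent $\rho^{\ast}$ that does not depend on $X$'s play at all. This avoids the case split and the slightly delicate issue of whether $Y$ can observe $\tau_X^{t}$. What the paper's route buys in return is a sharper structural conclusion in the boundary case $\lambda_{\min}=1$ (the equalities in \eq{zeroinequalities}), which feeds into Proposition~\ref{prop:separationundiscounted}; your argument yields only non-emptiness, which is all that Theorem~\ref{th:mainresultUndiscounted} needs.

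One small remark on presentation: your final extension from exogenous sequences to behavioral opponents is correct, but the invocation of Lemma~\ref{lem:exogenous} is slightly off, since that lemma is stated for $\lambda<1$ and its conditioning-on-the-full-$Y$-sequence step is subtle when $\sigma_Y$ reacts to $X$'s realized actions. The clean justification is the one the paper uses (following Proposition~\ref{prop:gnrcc_iff}): for a reactive learning strategy, $\mathbb{E}\left[\varphi\left(s_X^{t},s_Y^{t}\right)\right]=\mathbb{E}\left[\varphi\left(\tau_X^{t},s_Y^{t}\right)\right]=\mathbb{E}\left[\Psi\left(\tau_X^{t}\right)\right]-\mathbb{E}\left[\Psi\left(\tau_X^{t+1}\right)\right]$ directly, with no conditioning on future $Y$-actions, so the sum over $t$ telescopes in expectation and is bounded by $\psi^{+}-\psi^{-}$. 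Your uniform bound and bounded-convergence remark effectively capture this, so the argument stands.
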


The proof requires two technical lemmas that establish the existence of appropriate two-point support and the undiscounted analog of the generalized next-round correction condition.

\begin{lemma}\label{lem:undiscounted_support}
If $\varphi\equiv 0$ is enforceable with $\lambda_{\min}=1$, then there exist $\tau_{X}^{\pm}\in\Delta\left(S_{X}\right)$ such that
\begin{align}\label{eq:zeroinequalities}
\max_{s_{Y}\in S_{Y}}\varphi\left(\tau_{X}^{+},s_{Y}\right) >\min_{s_{Y}\in S_{Y}}\varphi\left(\tau_{X}^{+},s_{Y}\right) =0=\max_{s_{Y}\in S_{Y}}\varphi\left(\tau_{X}^{-},s_{Y}\right) >\min_{s_{Y}\in S_{Y}}\varphi\left(\tau_{X}^{-},s_{Y}\right) .
\end{align}
\end{lemma}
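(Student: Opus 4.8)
The plan is to derive everything directly from Proposition~\ref{prop:lambda_min} and the definitions of $\Phi_X^+$ and $\Phi_X^-$ in \eqref{eq:SeparationSets}. Since $\varphi\equiv 0$ is enforceable and $\lambda_{\min}=1$, the threshold is positive, so case (i) of Proposition~\ref{prop:lambda_min} (where enforcement holds for all $\lambda$) cannot apply; we must be in case (ii), meaning $\Phi_X^+,\Phi_X^-\neq\left\{\right\}$, $\Phi_X^+\cap\Phi_X^-=\left\{\right\}$, and the supremum defining $\lambda_{\min}$ in \eqref{eq:lambda_min} equals $0$. First I would record the signs of the numerator and denominator of that ratio. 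For any $\tau_X^+\in\Phi_X^+$ and $\tau_X^-\in\Phi_X^-$, the numerator $N:=\min_{s_Y\in S_Y}\varphi(\tau_X^+,s_Y)-\max_{s_Y\in S_Y}\varphi(\tau_X^-,s_Y)$ is nonnegative because $\min_{s_Y\in S_Y}\varphi(\tau_X^+,s_Y)\geq 0\geq\max_{s_Y\in S_Y}\varphi(\tau_X^-,s_Y)$, and for the same reason both entries inside the $\max$ in the denominator $D$ are nonnegative, so $D\geq 0$.

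Next I would show $D>0$ throughout $\Phi_X^+\times\Phi_X^-$. If $D=0$, then both differences inside the $\max$ vanish; combined with the sign constraints this squeezes $\max_{s_Y\in S_Y}\varphi(\tau_X^+,s_Y)=\min_{s_Y\in S_Y}\varphi(\tau_X^+,s_Y)=0$, whence $\varphi(\tau_X^+,\cdot)\equiv 0$ and therefore $\tau_X^+\in\Phi_X^+\cap\Phi_X^-$, contradicting disjointness. With $D>0$ everywhere and $\sup N/D=0$, each nonnegative ratio is forced to vanish, so $N=0$ for every pair; that is, $\min_{s_Y\in S_Y}\varphi(\tau_X^+,s_Y)=\max_{s_Y\in S_Y}\varphi(\tau_X^-,s_Y)$ for all $\tau_X^\pm$.

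I would then pin this common value to $0$ by a squeeze argument. Fixing one factor and varying the other shows that $\min_{s_Y\in S_Y}\varphi(\tau_X^+,s_Y)$ takes a single value $c$ independent of $\tau_X^+$, and that $\max_{s_Y\in S_Y}\varphi(\tau_X^-,s_Y)$ equals the same $c$ independent of $\tau_X^-$; but $c\geq 0$ from the first expression and $c\leq 0$ from the second, so $c=0$. Hence $\min_{s_Y\in S_Y}\varphi(\tau_X^+,s_Y)=0$ for every $\tau_X^+\in\Phi_X^+$ and $\max_{s_Y\in S_Y}\varphi(\tau_X^-,s_Y)=0$ for every $\tau_X^-\in\Phi_X^-$, which already gives the two middle equalities in \eqref{eq:zeroinequalities}.

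Finally, to obtain the strict outer inequalities I would invoke disjointness once more. If every $\tau_X^+\in\Phi_X^+$ satisfied $\max_{s_Y\in S_Y}\varphi(\tau_X^+,s_Y)=0$, then with $\min=0$ we would get $\varphi(\tau_X^+,\cdot)\equiv 0$ and $\Phi_X^+\subseteq\Phi_X^-$, again contradicting $\Phi_X^+\cap\Phi_X^-=\left\{\right\}$; so some $\tau_X^+\in\Phi_X^+$ has $\max_{s_Y\in S_Y}\varphi(\tau_X^+,s_Y)>0$. Symmetrically, some $\tau_X^-\in\Phi_X^-$ has $\min_{s_Y\in S_Y}\varphi(\tau_X^-,s_Y)<0$. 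This pair satisfies \eqref{eq:zeroinequalities}. The only delicate points are the strict positivity of $D$ (so that $\sup=0$ genuinely forces the numerators to vanish rather than leaving an indeterminate $0/0$) and the repeated appeal to disjointness needed to upgrade the equalities into the strict outer inequalities; these are where I expect the main care is required.
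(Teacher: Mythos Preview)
Your argument from the second paragraph onward is correct and essentially matches the paper's: once $\Phi_X^+$ and $\Phi_X^-$ are nonempty and disjoint, the paper also reduces to the formula~\eqref{eq:lambda_min}, argues that a strictly positive numerator would force $\lambda_{\min}<1$, and then gets the strict outer inequalities from the absence of trivial strategies (equivalently, from disjointness). Your treatment of the $D=0$ case and the squeeze to $c=0$ is more explicit than the paper's, but the logic is the same.

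The gap is in your first step. You invoke Proposition~\ref{prop:lambda_min} to conclude $\Phi_X^+,\Phi_X^-\neq\{\,\}$, but that proposition characterizes the existence of $(\varphi,\lambda)$-autocratic strategies for $\lambda\in[0,1)$. Under the hypothesis $\lambda_{\min}=1$, there is \emph{no} $\lambda<1$ at which enforcement holds, so the positive direction of Proposition~\ref{prop:lambda_min} never applies; its contrapositive is equally consistent with $\Phi_X^+=\{\,\}$ or $\Phi_X^-=\{\,\}$. You cannot read off ``we must be in case~(ii)'' here, because for each $\lambda<1$ we are in \emph{neither} case. Nonemptiness has to come from the $\lambda=1$ enforceability itself, and that is exactly what the paper supplies: if, say, $\Phi_X^+=\{\,\}$, then $a:=\max_{\tau_X\in\Delta(S_X)}\min_{s_Y\in S_Y}\varphi(\tau_X,s_Y)<0$ by compactness, and an opponent who plays $s_Y^t\in\argmin_{s_Y}\varphi(\tau_X^t,s_Y)$ each round forces the Ces\`aro average to stay $\leqslant a<0$, contradicting undiscounted enforcement. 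With that step inserted, your proof goes through.
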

\begin{proof}
Suppose $\varphi\equiv 0$ is enforceable in expectation by some behavioral strategy $\sigma_{X}$ with initial action $\sigma_{X}^{0}$ and discount factor $\lambda=1$. We first show that the sets $\Phi_{X}^{+}$ and $\Phi_{X}^{-}$ defined in \eq{SeparationSets} are non-empty. Suppose by contradiction that this fails. Then either $\min_{s_{Y}\in S_{Y}}\varphi\left(\tau_{X},s_{Y}\right)<0$ for all $\tau_{X}\in\Delta\left(S_{X}\right)$, or $\max_{s_{Y}\in S_{Y}}\varphi\left(\tau_{X},s_{Y}\right)>0$ for all $\tau_{X}\in\Delta\left(S_{X}\right)$. Assume the first case holds (the second is analogous). Let $a\coloneqq \max_{\tau_{X}\in\Delta\left(S_{X}\right)}\min_{s_{Y}\in S_{Y}}\varphi\left(\tau_{X},s_{Y}\right)<0$ and suppose that $Y$ plays $s_{Y}^{t}\coloneqq \argmin_{s_{Y}\in S_{Y}}\varphi\left(\tau_{X}^{t},s_{Y}\right)$ at each round $t$, where $\tau_{X}^{t}$ is the mixed action generated by $\left(\sigma_{X}^{0},\sigma_{X}\right)$ at time $t$. This implies $\left(T+1\right)^{-1}\sum_{t=0}^{T}\varphi\left(\tau_{X}^{t},s_{Y}^{t}\right)\leqslant a$ for every $T\geqslant 0$. Taking limits, we obtain $\lim_{T\rightarrow\infty}\left(T+1\right)^{-1}\sum_{t=0}^{T}\varphi\left(\tau_{X}^{t},s_{Y}^{t}\right)\leqslant a<0$, contradicting the assumption that $\sigma_{X}$ is $\left(\varphi,1\right)$-autocratic. As a result, $\Phi_{X}^{+},\Phi_{X}^{-}\neq\left\{\right\}$. Next, suppose there exist $\tau_{X}^{+}\in\Phi_{X}^{+}$ and $\tau_{X}^{-}\in\Phi_{X}^{-}$ such that $\min_{s_{Y}\in S_{Y}}\varphi\left(\tau_{X}^{+},s_{Y}\right)\geqslant 0\geqslant\max_{s_{Y}\in S_{Y}}\varphi\left(\tau_{X}^{-},s_{Y}\right)$, at least one of which is strict. Since no trivial autocratic strategies exist, by Proposition~\ref{prop:lambda_min} there exists $\lambda^{\ast}<1$ and a $\left(\varphi,\lambda^{\ast}\right)$-autocratic strategy, but this finding is a contradiction because $\lambda_{\min}=1$.
\end{proof}

\begin{lemma}\label{lem:undiscounted_gnrcc}
Suppose that $\Phi_{X}^{+},\Phi_{X}^{-}\neq\left\{\right\}$. Then, there exist a set $\mathcal{M}_{X}\subseteq\Delta\left(S_{X}\right)$, a (two-point) response function $\sigma_{X}^{\ast}:\mathcal{M}_{X}\times S_{Y}\rightarrow\mathcal{M}_{X}$, and an enforcement potential $\Psi:\mathcal{M}_{X}\rightarrow\mathbb{R}$ such that
\begin{align}\label{eq:undiscountedgeneralized}
\varphi\left(\tau_{X},s_{Y}\right) =\Psi\left(\tau_{X}\right) -\Psi\left(\sigma_{X}^{\ast}\left[\tau_{X},s_{Y}\right]\right)
\end{align}
for all $\tau_{X}\in\mathcal{M}_{X}$ and $s_{Y}\in S_{Y}$.
\end{lemma}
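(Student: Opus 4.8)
The plan is to prove this by an explicit two-point construction, mirroring the discounted Lemma~\ref{lem:inequalities} but discarding the $1/\left(1-\lambda\right)$ normalization of \eq{psi_twopoint}, which diverges as $\lambda\rightarrow 1$. First I would fix representatives $\tau_{X}^{+}\in\Phi_{X}^{+}$ and $\tau_{X}^{-}\in\Phi_{X}^{-}$, available by hypothesis, so that (by \eq{SeparationSets}) $\varphi\left(\tau_{X}^{+},s_{Y}\right)\geqslant 0$ and $\varphi\left(\tau_{X}^{-},s_{Y}\right)\leqslant 0$ for every $s_{Y}\in S_{Y}$. The central observation is that at $\lambda=1$ the inequalities \eq{lemma_inequality} collapse to the single relation $\min_{s_{Y}\in S_{Y}}\varphi\left(\tau_{X}^{+},s_{Y}\right)\geqslant\max_{s_{Y}\in S_{Y}}\varphi\left(\tau_{X}^{-},s_{Y}\right)$, which is automatic from these sign constraints; hence every such pair is admissible, and the only real work is to exhibit a $\Psi$ and a $\sigma_{X}^{\ast}$ witnessing \eq{undiscountedgeneralized}. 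This is lighter than Theorem~\ref{th:mainresultDiscounted} because the lemma asks only for the algebraic identity on an invariant set $\mathcal{M}_{X}$, with no initial action or enforcement requirement; enforcement is deferred to Theorem~\ref{th:mainresultUndiscounted}.

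Define $M^{+}\coloneqq\max_{s_{Y}\in S_{Y}}\varphi\left(\tau_{X}^{+},s_{Y}\right)$ and $m^{-}\coloneqq\min_{s_{Y}\in S_{Y}}\varphi\left(\tau_{X}^{-},s_{Y}\right)$, and set $D\coloneqq\max\left\{M^{+},-m^{-}\right\}$. Treating the degenerate case $D=0$ separately at the end, assume $D>0$; then necessarily $\tau_{X}^{+}\neq\tau_{X}^{-}$, so writing $\tau_{X}^{p}\coloneqq p\tau_{X}^{+}+\left(1-p\right)\tau_{X}^{-}$ the map $p\mapsto\tau_{X}^{p}$ is injective (the Remark following Lemma~\ref{lem:inequalities}) and I may take $\mathcal{M}_{X}\coloneqq\left\{\tau_{X}^{p}\mid p\in\left[0,1\right]\right\}$ with the affine potential $\Psi\left(\tau_{X}^{p}\right)\coloneqq pD$. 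Defining the response through
\begin{align}
p^{\ast}\left[p,s_{Y}\right]\coloneqq p-\frac{\varphi\left(\tau_{X}^{p},s_{Y}\right)}{D},
\end{align}
and $\sigma_{X}^{\ast}\left[\tau_{X}^{p},s_{Y}\right]\coloneqq\tau_{X}^{q}$ with $q\coloneqq p^{\ast}\left[p,s_{Y}\right]$, the linear extension $\varphi\left(\tau_{X}^{p},s_{Y}\right)=p\varphi\left(\tau_{X}^{+},s_{Y}\right)+\left(1-p\right)\varphi\left(\tau_{X}^{-},s_{Y}\right)$ makes the right-hand side of \eq{undiscountedgeneralized} equal to $\Psi\left(\tau_{X}^{p}\right)-\Psi\left(\tau_{X}^{q}\right)=D\left(p-q\right)=\varphi\left(\tau_{X}^{p},s_{Y}\right)$ identically in $\left(p,s_{Y}\right)$; this step is purely algebraic.

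The substantive step is to verify $q=p^{\ast}\left[p,s_{Y}\right]\in\left[0,1\right]$ for all $p\in\left[0,1\right]$ and $s_{Y}\in S_{Y}$, so that $\sigma_{X}^{\ast}$ genuinely maps $\mathcal{M}_{X}\times S_{Y}$ into $\mathcal{M}_{X}$ as a two-point response. Since $p^{\ast}$ is affine in $p$, it suffices to check the two endpoints. At $p=0$ one has $p^{\ast}=-\varphi\left(\tau_{X}^{-},s_{Y}\right)/D\geqslant 0$ (as $\varphi\left(\tau_{X}^{-},\cdot\right)\leqslant 0$), and $p^{\ast}\leqslant 1$ because $-\varphi\left(\tau_{X}^{-},s_{Y}\right)\leqslant-m^{-}\leqslant D$; at $p=1$ one has $p^{\ast}=1-\varphi\left(\tau_{X}^{+},s_{Y}\right)/D\leqslant 1$ (as $\varphi\left(\tau_{X}^{+},\cdot\right)\geqslant 0$), and $p^{\ast}\geqslant 0$ because $\varphi\left(\tau_{X}^{+},s_{Y}\right)\leqslant M^{+}\leqslant D$. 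This is exactly where membership in $\Phi_{X}^{\pm}$ via \eq{SeparationSets} is used, and the choice of $D$ plays the role of the normalization in \eq{psi_twopoint}.

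Finally, I would dispose of the degenerate case, which is also the main place to be careful. If $D=0$—equivalently $M^{+}\leqslant 0$ and $m^{-}\geqslant 0$—then together with $\tau_{X}^{\pm}\in\Phi_{X}^{\pm}$ this forces $\varphi\left(\tau_{X}^{+},\cdot\right)\equiv 0\equiv\varphi\left(\tau_{X}^{-},\cdot\right)$ (a regime that also subsumes $\tau_{X}^{+}=\tau_{X}^{-}$), and I take $\mathcal{M}_{X}=\left\{\tau_{X}^{+}\right\}$, a constant $\Psi$, and the identity response, so \eq{undiscountedgeneralized} reduces to $0=0$. I expect no deeper obstacle: the only genuine subtleties are the uniform-in-$s_{Y}$ range check (dispatched by the affinity-plus-endpoints argument) and confirming that the affine $\Psi$ is a bona fide enforcement potential, which it is, being the restriction to $\mathcal{M}_{X}$ of $\tau_{X}\mapsto\mathbb{E}_{s_{X}\sim\tau_{X}}\left[\psi\left(s_{X}\right)\right]$ for a suitable $\psi:S_{X}\rightarrow\mathbb{R}$ (such $\psi$ exists because the distinct probability vectors $\tau_{X}^{+}$ and $\tau_{X}^{-}$ are linearly independent). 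Indeed the $\lambda=1$ construction is easier than its discounted analogue in Theorem~\ref{th:mainresultDiscounted}, precisely because the admissibility inequalities \eq{lemma_inequality} become vacuous.
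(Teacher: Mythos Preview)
Your proposal is correct and follows essentially the same approach as the paper: both set $\psi\left(\tau_{X}^{-}\right)=0$, choose $\psi\left(\tau_{X}^{+}\right)\geqslant\max\left\{M^{+},-m^{-}\right\}$ (you take equality, the paper allows any value at least this large), and use the $\lambda=1$, $K=0$ specialization of \eq{pfunction} to obtain $p^{\ast}\left[p,s_{Y}\right]=p-\varphi\left(\tau_{X}^{p},s_{Y}\right)/\left(\psi\left(\tau_{X}^{+}\right)-\psi\left(\tau_{X}^{-}\right)\right)$, verifying the range by endpoints. Your explicit degenerate-case treatment and the closing remark about realizing $\Psi$ as a linear expectation are extra detail not in the paper (and the latter is unnecessary, since in this context an enforcement potential is simply any map $\mathcal{M}_{X}\rightarrow\mathbb{R}$), but nothing is missing or wrong.
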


\begin{proof}
The proof parallels that of Lemma~\ref{lem:inequalities}. Consider the response function $p^{\ast}$ from \eq{pfunction} with $K=0$ and $\lambda =1$, let $\psi\left(\tau_{X}^{-}\right) =0$, and choose $\psi\left(\tau_{X}^{+}\right)$ satisfying
\begin{align}
\psi\left(\tau_{X}^{+}\right)\geqslant\max\left\{\max_{s_{Y}\in S_{Y}}\varphi\left(\tau_{X}^{+},s_{Y}\right) ,-\min_{s_{Y}\in S_{Y}}\varphi\left(\tau_{X}^{-},s_{Y}\right)\right\} .
\end{align}
Then, assuming no unconditional strategies exist (otherwise \eq{undiscountedgeneralized} holds trivially), we deduce that $\psi\left(\tau_{X}^{+}\right) -\psi\left(\tau_{X}^{-}\right) >0$, which ensures $\textrm{Im}\left(p^{\ast}\right)\subseteq\left[0,1\right]$. Defining $\mathcal{M}_{X}$ and $\Psi$ as in Lemma~\ref{lem:inequalities} (with $p_{0}$ now free in $\left[0,1\right]$), we obtain \eq{undiscountedgeneralized}.
\end{proof}

\eq{undiscountedgeneralized} serves as the undiscounted generalized next-round correction condition, representing the limiting case of \eq{mainIntegral} as $\lambda\rightarrow 1$. We now prove Theorem~\ref{th:mainresultUndiscounted}.

\begin{proof}[Proof of Theorem~\ref{th:mainresultUndiscounted}]
Assume, without loss of generality, that no trivial strategies exist. We consider two cases. In the first case, if there exists a $\left(\varphi,\lambda\right)$-autocratic strategy for some $\lambda\in\left[0,1\right)$, then by the proof of Theorem~\ref{th:mainresultDiscounted}, the separation sets $\Phi_{X}^{+}$ and $\Phi_{X}^{-}$ are non-empty. In the second case, if no $\left(\varphi,\lambda\right)$-autocratic strategy exists for any $\lambda\in\left[0,1\right)$, then by Lemma~\ref{lem:undiscounted_support} there exist $\tau_{X}^{\pm}\in\Delta\left(S_{X}\right)$ satisfying \eq{zeroinequalities}, which also implies $\Phi_{X}^{+},\Phi_{X}^{-}\neq\left\{\right\}$. In either case, Lemma~\ref{lem:undiscounted_gnrcc} ensures that \eq{undiscountedgeneralized} holds. Following the proof of Proposition~\ref{prop:gnrcc_iff}, for any behavioral strategy $\sigma_{Y}$, we derive a sequence of marginalized distributions $\left\{\nu_{X}^{t}\right\}_{t=0}^{\infty}$ over $\Delta\left(S_{X}\right)\times S_{Y}$. Telescoping the undiscounted generalized next-round correction condition yields
\begin{align}\label{eq:TelescopicUndiscounted}
\frac{1}{T+1}\sum_{t=0}^{T}\mathbb{E}_{\left(\tau_{X},s_{Y}\right)\sim\nu_{X}^{t}}\left[\varphi\left(\tau_{X},s_{Y}\right)\right] =\frac{1}{T+1}\left(\mathbb{E}_{\left(\tau_{X},s_{Y}\right)\sim\nu_{X}^{0}}\left[\Psi\left(\tau_{X}\right)\right]-\mathbb{E}_{\left(\tau_{X},s_{Y}\right)\sim\nu_{X}^{T+1}}\left[\Psi\left(\tau_{X}\right)\right]\right)
\end{align}
for every $T\geqslant 0$. By boundedness of $\Psi$ and the dominated convergence theorem, the right-hand side of \eq{TelescopicUndiscounted} converges to $0$ as $T\rightarrow\infty$.
\end{proof}

\begin{remark}
It is important to highlight that, unlike the discounted case, the initial action $\sigma_{X}^{0}$ plays no part in the enforceability of $\varphi$; any ``suboptimal" initial choice will eventually be corrected as the stage game repeats itself infinitely many times. It plays a major role, however, in the conditioning structure of the response function, $\sigma_{X}^{\ast}$. If $p^{\ast}\left[p,s_{Y}\right]\in\left\{0,1\right\}$, then $p\in\left\{0,1\right\}$ and $s_{Y}\in\left\{\argmax_{s_{Y}\in S_{Y}}\varphi\left(\tau_{X}^{-},s_{Y}\right) ,\argmin_{s_{Y}\in S_{Y}}\varphi\left(\tau_{X}^{+},s_{Y}\right)\right\}$. Inductively, we deduce that if $p_{0}\in\left(0,1\right)$, then $\textrm{Im}\left(p^{\ast}\right)\subseteq\left(0,1\right)$. This essentially means that the Markov chain generated by $\left(\sigma_{X}^{0},\sigma_{X}^{\ast}\right)$ (and any behavioral strategy of $Y$) is ergodic. However, if $X$ desires to enforce $\varphi\equiv 0$ via a simple reactive learning strategy, they can do so by choosing $p_{0}=0$ or $p_{0}=1$.
\end{remark}

By Lemmas~\ref{lem:undiscounted_support}, \ref{lem:undiscounted_gnrcc} and Theorem~\ref{th:mainresultUndiscounted}, we obtain a version of Propositions \ref{prop:gnrcc_iff}-\ref{prop:uniquenessOfPsi} and \ref{prop:lambda_min} in the case where a payoff constraint is only enforceable in the undiscounted setting.

\begin{proposition}\label{prop:separationundiscounted}
A function $\varphi:S_{X}\times S_{Y}\rightarrow\mathbb{R}$ is enforceable with $\lambda_{\min}=1$ if and only if $\Phi_{X}^{+},\Phi_{X}^{-}\neq\left\{\right\}$ and
\begin{align}
\Phi_{X}^{+}&=\left\{\tau_{X}\in\Delta\left(S_{X}\right)\mid\max_{s_{Y}\in S_{Y}}\varphi\left(\tau_{X},s_{Y}\right)>\min_{s_{Y}\in S_{Y}}\varphi\left(\tau_{X},s_{Y}\right)=0\right\};\\ \Phi_{X}^{-}&=\left\{\tau_{X}\in\Delta\left(S_{X}\right)\mid\min_{s_{Y}\in S_{Y}}\varphi\left(\tau_{X},s_{Y}\right)<\max_{s_{Y}\in S_{Y}}\varphi\left(\tau_{X},s_{Y}\right)=0\right\}.
\end{align}
\end{proposition}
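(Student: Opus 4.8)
The plan is to deduce both directions from the explicit formula for $\lambda_{\min}$ in Proposition~\ref{prop:lambda_min}, together with the undiscounted machinery of Lemmas~\ref{lem:undiscounted_support} and~\ref{lem:undiscounted_gnrcc}. The key preliminary observation is that, on $\Phi_X^+\times\Phi_X^-$, the numerator $\min_{s_Y}\varphi(\tau_X^+,s_Y)-\max_{s_Y}\varphi(\tau_X^-,s_Y)$ of \eq{lambda_min} is nonnegative, since $\min_{s_Y}\varphi(\tau_X^+,s_Y)\geqslant 0\geqslant\max_{s_Y}\varphi(\tau_X^-,s_Y)$ by the definitions in \eq{SeparationSets}, whereas the denominator is strictly positive as soon as $\Phi_X^+\cap\Phi_X^-=\{\}$. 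Consequently each term of the supremum lies in $[0,1]$, and $\lambda_{\min}=1$ is equivalent to that supremum being exactly $0$; the whole proof then turns on translating ``supremum $=0$'' into the two claimed set identities.

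For the forward direction, I would start from a function enforceable with $\lambda_{\min}=1$ and record two structural facts. First, $\Phi_X^+,\Phi_X^-\neq\{\}$, which is exactly what the opening contradiction argument in the proof of Lemma~\ref{lem:undiscounted_support} supplies. Second, $\Phi_X^+\cap\Phi_X^-=\{\}$: were the intersection nonempty, Proposition~\ref{prop:lambda_min}(i) would furnish a $(\varphi,\lambda)$-autocratic strategy for every $\lambda$, in particular $\lambda=0$, contradicting $\lambda_{\min}=1$. With the sets nonempty and disjoint, the formula \eq{lambda_min} is in force and $\lambda_{\min}=1$ forces the supremum to vanish. Since every term is a nonnegative numerator over a positive denominator, this forces the numerator to vanish on all of $\Phi_X^+\times\Phi_X^-$; as the numerator separates into two pieces of fixed sign, this can happen only if $\min_{s_Y}\varphi(\tau_X,s_Y)=0$ for every $\tau_X\in\Phi_X^+$ and $\max_{s_Y}\varphi(\tau_X,s_Y)=0$ for every $\tau_X\in\Phi_X^-$. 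The strict inequalities $\max_{s_Y}\varphi(\tau_X,s_Y)>0$ on $\Phi_X^+$ and $\min_{s_Y}\varphi(\tau_X,s_Y)<0$ on $\Phi_X^-$ then follow from disjointness, since a point with $\max_{s_Y}\varphi=\min_{s_Y}\varphi=0$ has $\varphi(\tau_X,\cdot)\equiv 0$ and would lie in both sets. Combined with $\Phi_X^+=\{\min_{s_Y}\varphi\geqslant 0\}$, these give the asserted identities.

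For the converse, I would assume the two identities and $\Phi_X^+,\Phi_X^-\neq\{\}$. Disjointness is immediate, because any $\tau_X\in\Phi_X^+\cap\Phi_X^-$ would force $\varphi(\tau_X,\cdot)\equiv 0$, contradicting $\max_{s_Y}\varphi>0$ on $\Phi_X^+$; hence \eq{lambda_min} applies, every pair has vanishing numerator, the supremum is $0$, and $\lambda_{\min}=1$. To obtain enforceability itself, I would pick any $\tau_X^+\in\Phi_X^+$ and $\tau_X^-\in\Phi_X^-$, note that the identities make them satisfy \eq{zeroinequalities}, invoke Lemma~\ref{lem:undiscounted_gnrcc} to produce a two-point response function and enforcement potential realizing the undiscounted next-round correction condition, and finish with the telescoping argument in the proof of Theorem~\ref{th:mainresultUndiscounted} to conclude the resulting strategy is $(\varphi,1)$-autocratic.

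The hard part will be the sign bookkeeping that pins down the supremum: confirming that the denominator of \eq{lambda_min} is strictly positive throughout $\Phi_X^+\times\Phi_X^-$ (so the quotient is well-defined and bounded by $1$), and that vanishing of the supremum is genuinely equivalent to the pointwise conditions $\min_{s_Y}\varphi(\tau_X,s_Y)=0$ and $\max_{s_Y}\varphi(\tau_X,s_Y)=0$ holding on the respective sets, rather than only at some extremizing pair. Everything else is assembly of the characterizations already proved.
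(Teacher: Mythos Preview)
Your proposal is correct and follows essentially the same route as the paper. The paper does not give an explicit proof of Proposition~\ref{prop:separationundiscounted}, instead presenting it as a consequence of Lemmas~\ref{lem:undiscounted_support}, \ref{lem:undiscounted_gnrcc} and Theorem~\ref{th:mainresultUndiscounted}; unpacking that reference, the forward direction is exactly the argument embedded in the second half of the proof of Lemma~\ref{lem:undiscounted_support} (a strict inequality at any pair would, via Proposition~\ref{prop:lambda_min}, produce a discounted autocratic strategy and contradict $\lambda_{\min}=1$), and the converse is the construction of Lemma~\ref{lem:undiscounted_gnrcc} fed into the telescoping argument of Theorem~\ref{th:mainresultUndiscounted}. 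Your version makes the same logic more explicit by reading the condition ``supremum $=0$'' directly off the formula \eq{lambda_min}, which is a clean way to organize the sign bookkeeping you flagged as the hard part; that bookkeeping is sound, and the disjointness of $\Phi_X^+$ and $\Phi_X^-$ does guarantee a strictly positive denominator throughout, as you note.
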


\begin{proposition}
A function $\varphi:S_{X}\times S_{Y}\rightarrow\mathbb{R}$ is enforceable with $\lambda_{\min}=1$ if and only if there exist a set $\mathcal{M}_{X}\subseteq\Delta\left(S_{X}\right)$, an initial action $\sigma_{X}^{0}\in\mathcal{M}_{X}$, a response function $\sigma_{X}^{\ast}:\mathcal{M}_{X}\times S_{Y}\rightarrow\mathcal{M}_{X}$, and an enforcement potential $\Psi:\mathcal{M}_{X}\rightarrow\mathbb{R}$ such that \eq{undiscountedgeneralized} holds for all $\tau_{X}\in\mathcal{M}_{X}$ and $s_{Y}\in S_{Y}$, and the generalized next-round correction condition (\eq{mainIntegral}) never holds for any $\lambda<1$.
\end{proposition}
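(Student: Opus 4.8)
The plan is to prove the two directions by assembling results already established for the undiscounted regime (Lemmas~\ref{lem:undiscounted_support}--\ref{lem:undiscounted_gnrcc} and Theorem~\ref{th:mainresultUndiscounted}) with the discounted characterization (Propositions~\ref{prop:gnrcc_iff} and \ref{prop:uniquenessOfPsi} together with Theorem~\ref{th:mainresultDiscounted}). Throughout, I read ``enforceable with $\lambda_{\min}=1$'' as: a $\left(\varphi,1\right)$-autocratic strategy exists, but no $\left(\varphi,\lambda\right)$-autocratic strategy exists for any $\lambda<1$. The two structural conditions in the statement then correspond exactly to these two features, respectively.

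For the forward direction, I would assume $\varphi$ is enforceable with $\lambda_{\min}=1$. Then Lemma~\ref{lem:undiscounted_support} shows the separation sets are non-empty, so Lemma~\ref{lem:undiscounted_gnrcc} supplies a set $\mathcal{M}_{X}$, a (two-point) response function $\sigma_{X}^{\ast}$, and a potential $\Psi$ for which the undiscounted condition \eq{undiscountedgeneralized} holds; fixing any admissible $\sigma_{X}^{0}\in\mathcal{M}_{X}$ (e.g. the initial mixture used in that construction) completes the required structure. It then remains to verify that \eq{mainIntegral} fails for every $\lambda<1$, which I would argue by contradiction: if \eq{mainIntegral} held for some reactive learning strategy and some $\lambda<1$, then Proposition~\ref{prop:gnrcc_iff} would produce a $\left(\varphi,\lambda\right)$-autocratic strategy with $\lambda<1$, contradicting $\lambda_{\min}=1$.

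For the converse, I would assume the stated structure exists and that \eq{mainIntegral} never holds for $\lambda<1$. Since \eq{undiscountedgeneralized} holds, the telescoping argument in the proof of Theorem~\ref{th:mainresultUndiscounted} applies verbatim: against any behavioral $\sigma_{Y}$ one forms the marginalized distributions $\left\{\nu_{X}^{t}\right\}$, and the Ces\`{a}ro average collapses to $\left(T+1\right)^{-1}$ times a bounded difference of $\Psi$-values, which vanishes as $T\rightarrow\infty$. Hence $\left(\sigma_{X}^{0},\sigma_{X}^{\ast}\right)$ is $\left(\varphi,1\right)$-autocratic and $\varphi$ is enforceable at $\lambda=1$. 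To rule out enforcement for $\lambda<1$, I would use the contrapositive chain: any $\left(\varphi,\lambda\right)$-autocratic strategy of arbitrary memory with $\lambda<1$ can be replaced, via Theorem~\ref{th:mainresultDiscounted}, by a two-point reactive learning $\left(\varphi,\lambda\right)$-autocratic strategy, which by Proposition~\ref{prop:uniquenessOfPsi} satisfies \eq{mainIntegral}. Since \eq{mainIntegral} is assumed to fail for all $\lambda<1$, no such strategy exists; combined with enforceability at $\lambda=1$, this gives $\lambda_{\min}=1$.

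The main obstacle I anticipate is the bookkeeping around the quantifier ``\eq{mainIntegral} never holds for any $\lambda<1$''. One must be careful that this is equivalent to the \emph{non-existence of any} (arbitrary-memory) $\left(\varphi,\lambda\right)$-autocratic strategy for $\lambda<1$, rather than the failure for one fixed reactive learning strategy. The reduction from arbitrary memory to two-point reactive learning (Theorem~\ref{th:mainresultDiscounted}) followed by the necessity of the generalized next-round correction condition (Proposition~\ref{prop:uniquenessOfPsi}) is precisely what makes these two formulations coincide, so I would cite both when closing the converse direction to ensure the equivalence is airtight.
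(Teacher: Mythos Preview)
Your overall approach is correct and matches the paper's, which presents this proposition as a direct consequence of Lemmas~\ref{lem:undiscounted_support}--\ref{lem:undiscounted_gnrcc} and Theorem~\ref{th:mainresultUndiscounted} on the undiscounted side together with Propositions~\ref{prop:gnrcc_iff}--\ref{prop:uniquenessOfPsi} and Theorem~\ref{th:mainresultDiscounted} on the discounted side. The quantifier issue you flag in your last paragraph is handled correctly: the chain Theorem~\ref{th:mainresultDiscounted} $\Rightarrow$ Proposition~\ref{prop:uniquenessOfPsi} is exactly what makes ``\eq{mainIntegral} never holds for any $\lambda<1$'' equivalent to ``no $\left(\varphi,\lambda\right)$-autocratic strategy exists for $\lambda<1$.''

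There is one gap you do not flag, in the converse direction. You write that the telescoped right-hand side is ``$\left(T+1\right)^{-1}$ times a bounded difference of $\Psi$-values,'' but nothing in the bare hypothesis forces $\Psi$ to be bounded on $\mathcal{M}_{X}$. Without that, the converse fails: take $\left|S_{Y}\right|=1$, $\varphi\equiv 1$, an infinite chain $\mathcal{M}_{X}=\left\{\tau_{n}\right\}_{n\geqslant 0}$ with $\sigma_{X}^{\ast}\left[\tau_{n},\cdot\right]=\tau_{n+1}$ and $\Psi\left(\tau_{n}\right)=-n$. Then \eq{undiscountedgeneralized} holds and \eq{mainIntegral} fails for every $\lambda<1$ (since $\varphi\equiv 1$ admits no autocratic strategy at any $\lambda$), yet $\varphi\equiv 0$ is not $1$-enforceable. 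The paper sidesteps this because every enforcement potential it constructs (via Lemma~\ref{lem:undiscounted_gnrcc}) is a two-point affine map on $\left[0,1\right]$ and hence bounded; when it invokes the telescoping in the proof of Theorem~\ref{th:mainresultUndiscounted}, it explicitly cites ``boundedness of $\Psi$.'' You should either read ``enforcement potential'' as implicitly bounded, or add a sentence noting that boundedness of $\Psi$ on $\mathcal{M}_{X}$ (together with compactness of $\Delta\left(S_{X}\right)$) lets you extract limit points in $\Phi_{X}^{\pm}$, after which Lemma~\ref{lem:undiscounted_gnrcc} supplies a bounded potential and the telescoping goes through.
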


The following result extends \cite[Proposition 6]{hilbe:GEB:2015}. In simple terms, a player's control over the long-run payoff outcome is not altered if the game is arbitrarily extended.

\begin{proposition}\label{prop:gamelength}
Fix $\varphi:S_X\times S_Y\to\mathbb{R}$ and $\lambda\in\left[0,1\right]$. If there exists a $\left(\varphi ,\lambda\right)$-autocratic strategy, then there exists a (two-point reactive learning) $\left(\varphi ,\lambda^{\ast}\right)$-autocratic strategy for any $\lambda^{\ast}\in\left[\lambda ,1\right]$.
\end{proposition}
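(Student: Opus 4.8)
The plan is to exploit the fact that the enforceability characterization of Proposition~\ref{prop:lambda_min} depends on the discount factor in an essentially monotone way. The separation sets $\Phi_{X}^{+}$ and $\Phi_{X}^{-}$ of \eq{SeparationSets}, together with the threshold $\lambda_{\min}$ of \eq{lambda_min}, are determined by $\varphi$ alone and carry no dependence on the discount factor actually in use. Consequently, the condition ``$\Phi_{X}^{+},\Phi_{X}^{-}\neq\left\{\right\}$, and either $\Phi_{X}^{+}\cap\Phi_{X}^{-}\neq\left\{\right\}$ or $\lambda\geqslant\lambda_{\min}$'' can only become easier to satisfy as the discount factor increases. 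The sole point requiring care is the right endpoint $\lambda^{\ast}=1$, where Proposition~\ref{prop:lambda_min} does not directly apply and one must pass to the undiscounted machinery of Section~\ref{sec:undiscounted}.

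First I would dispose of the case $\lambda=1$: then $\lambda^{\ast}\in\left[1,1\right]=\left\{1\right\}$, so $\lambda^{\ast}=1$, and Theorem~\ref{th:mainresultUndiscounted} immediately replaces the given $\left(\varphi,1\right)$-autocratic strategy by a two-point reactive learning one. So assume $\lambda<1$. Existence of a $\left(\varphi,\lambda\right)$-autocratic strategy together with Proposition~\ref{prop:lambda_min} yields $\Phi_{X}^{+},\Phi_{X}^{-}\neq\left\{\right\}$ and, moreover, that either $\Phi_{X}^{+}\cap\Phi_{X}^{-}\neq\left\{\right\}$ (the trivial case) or $\lambda\geqslant\lambda_{\min}$.

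For any $\lambda^{\ast}\in\left[\lambda,1\right)$ I would re-apply Proposition~\ref{prop:lambda_min} at $\lambda^{\ast}$. The sets $\Phi_{X}^{\pm}$ are unchanged and non-empty; if the trivial case held it still holds (it is independent of the discount factor), and otherwise $\lambda^{\ast}\geqslant\lambda\geqslant\lambda_{\min}$, so the threshold condition persists. In either case the proposition---whose forward implication is realized through the two-point construction of Lemma~\ref{lem:inequalities} and Corollary~\ref{cor:corollary_zero}---furnishes a two-point reactive learning $\left(\varphi,\lambda^{\ast}\right)$-autocratic strategy. It remains to treat the endpoint $\lambda^{\ast}=1$. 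Here I would invoke Lemma~\ref{lem:undiscounted_gnrcc}, which requires only $\Phi_{X}^{+},\Phi_{X}^{-}\neq\left\{\right\}$, to obtain a two-point response function $\sigma_{X}^{\ast}$ and an enforcement potential $\Psi$ satisfying the undiscounted next-round correction condition \eq{undiscountedgeneralized}. The telescoping argument in the proof of Theorem~\ref{th:mainresultUndiscounted} (culminating in \eq{TelescopicUndiscounted}) then shows this two-point reactive learning strategy is $\left(\varphi,1\right)$-autocratic.

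The only genuine obstacle is the discounted/undiscounted boundary at $\lambda^{\ast}=1$: Proposition~\ref{prop:lambda_min} is stated for $\lambda<1$, so the monotonicity argument alone does not reach the endpoint. The bridge is precisely that $\Phi_{X}^{+}$ and $\Phi_{X}^{-}$ are $\lambda$-independent and have been shown to be non-empty, which is exactly the hypothesis Lemma~\ref{lem:undiscounted_gnrcc} needs; everything else is the persistence of a discount-factor-free condition under increasing $\lambda$. In the trivial case $\Phi_{X}^{+}\cap\Phi_{X}^{-}\neq\left\{\right\}$ one may alternatively note that a mixed action $\tau_{X}$ with $\varphi\left(\tau_{X},\cdot\right)\equiv 0$, played unconditionally, is $\left(\varphi,\lambda^{\ast}\right)$-autocratic for every $\lambda^{\ast}\in\left[0,1\right]$, covering $\lambda^{\ast}=1$ as a degenerate two-point strategy without any further argument.
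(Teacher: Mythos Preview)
Your proposal is correct and follows essentially the same route as the paper: both arguments use the monotonicity of the characterization in Proposition~\ref{prop:lambda_min} for $\lambda^{\ast}<1$ and then bridge to $\lambda^{\ast}=1$ via the undiscounted machinery (Lemma~\ref{lem:undiscounted_gnrcc} and the telescoping argument of Theorem~\ref{th:mainresultUndiscounted}). Your treatment is in fact slightly more careful than the paper's, which dismisses $\lambda\in\{0,1\}$ as ``nothing to show'' and handles the $\lambda^{\ast}=1$ endpoint by a terse reference to ``the first case in the proof of Theorem~\ref{th:mainresultUndiscounted}.''
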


\begin{proof}
If $\lambda=0$ or $1$, then we have nothing to show. We can thus assume that $\lambda\in\left(0,1\right)$. From Proposition~\ref{prop:lambda_min}, for any $1>\lambda^{\ast}\geqslant \lambda$ there exists a two-point reactive learning $\left(\varphi ,\lambda^{\ast}\right)$-autocratic strategy. Suppose $\lambda^{\ast}=1$. As in the first case in the proof of Theorem~\ref{th:mainresultUndiscounted}, there exist mixed actions $\tau_{X}^{+}\in \Phi_{X}^{+},\tau_{X}^{-}\in \Phi_{X}^{-}$ such that $\max_{s_{Y}\in S_{Y}}\varphi\left(\tau_{X}^{+},s_{Y}\right) >\min_{s_{Y}\in S_{Y}}\varphi\left(\tau_{X}^{-},s_{Y}\right)$. Therefore, by the same result, there exists a two-point reactive learning strategy that is $\left(\varphi ,1\right)$-autocratic.
\end{proof}

\section{Properties of payoff relationships and autocratic strategies}\label{sec:properties}
Having established that any enforceable payoff relationship can be implemented using two-point reactive learning strategies (Theorem~\ref{th:mainresultDiscounted}), we now investigate the computational and structural properties of autocratic strategies. We show that verifying enforceability and computing optimal strategies can be accomplished in polynomial time using linear programming. We also establish convexity properties that reveal favorable geometric structure in the space of enforceable relations.

\subsection{Enforceable relationships and the interval of enforceability}
While the generalized next-round correction condition provides a characterization of autocratic strategies, verifying it directly can be challenging in practice, as it requires finding an appropriate map $\Psi$ on the reachable set, $\mathcal{M}_{X}$. In this section, we provide a simple, computationally tractable criterion for determining enforceability. The key observation is that enforceability depends only on whether an ``interval of enforceability'' contains zero, which can be checked by solving two saddle point problems: $\min_{\tau_{X} \in \Delta\left(S_{X}\right)} \max_{s_{Y} \in S_{Y}} \varphi\left(\tau_{X},s_{Y}\right)$ and $\max_{\tau_{X} \in \Delta\left(S_{X}\right)} \min_{s_{Y} \in S_{Y}} \varphi\left(\tau_{X},s_{Y}\right)$.

\begin{definition}
For a function $\varphi :S_{X}\times S_{Y}\rightarrow\mathbb{R}$, the interval of enforceability is
\begin{align}
J\left(\varphi\right) &\coloneqq  \left[\min_{\tau_{X}\in \Delta\left(S_{X}\right)}\max_{s_{Y}\in S_{Y}}\varphi\left(\tau_{X},s_{Y}\right),\max_{\tau_{X}\in \Delta\left(S_{X}\right)}\min_{s_{Y}\in S_{Y}}\varphi\left(\tau_{X},s_{Y}\right) 
\right] .
\end{align}
Of course, $J\left(\varphi\right)\neq\left\{\right\}$ if $\min_{\tau_{X}\in \Delta\left(S_{X}\right)}\max_{s_{Y}\in S_{Y}}\varphi\left(\tau_{X},s_{Y}\right)\leqslant\max_{\tau_{X}\in \Delta\left(S_{X}\right)}\min_{s_{Y}\in S_{Y}}\varphi\left(\tau_{X},s_{Y}\right)$.
\end{definition}

The interval $J\left(\varphi\right)$ encodes geometric information about the separability of payoff values. Since the functions $\max_{s_{Y}\in S_{Y}}\varphi\left(\cdot ,s_{Y}\right)$ and $\min_{s_{Y}\in S_{Y}}\varphi\left(\cdot,s_{Y}\right)$ are continuous over the compact set $\Delta\left(S_{X}\right)$, the extrema defining $J\left(\varphi\right)$ are always attained. The following characterization connects enforceability to the condition $0 \in J\left(\varphi\right)$, which has a natural interpretation: player $X$ can enforce $\varphi \equiv 0$ if and only if she can find mixed actions that ``sandwich'' zero between the worst and best values of $\varphi$ across opponent responses.

\begin{corollary}\label{cor:enforceability}
Consider a function
$\varphi:S_{X}\times S_{Y}\rightarrow\mathbb{R}$. The following are equivalent:
\begin{enumerate}
\item[\emph{(a)}] $\varphi\equiv 0$ is enforceable;
\item[\emph{(b)}] $\Phi_{X}^{+},\Phi_{X}^{-}\neq\left\{\right\}$;
\item[\emph{(c)}] $0\in J\left(\varphi\right)$.
\end{enumerate}
\end{corollary}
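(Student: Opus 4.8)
The plan is to prove the two short equivalences (b)~$\Leftrightarrow$~(c) and (a)~$\Leftrightarrow$~(b) separately and then assemble them; the only delicate point is bookkeeping across the discounted and undiscounted regimes. First I would settle (b)~$\Leftrightarrow$~(c) by unwinding the definitions. Because $\Delta\left(S_{X}\right)$ is compact and the maps $\tau_{X}\mapsto\max_{s_{Y}\in S_{Y}}\varphi\left(\tau_{X},s_{Y}\right)$ and $\tau_{X}\mapsto\min_{s_{Y}\in S_{Y}}\varphi\left(\tau_{X},s_{Y}\right)$ are continuous (being a pointwise maximum and minimum of finitely many affine functions, respectively), the extrema defining $J\left(\varphi\right)$ are attained. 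Consequently $\Phi_{X}^{-}\neq\left\{\right\}$ holds precisely when some $\tau_{X}$ satisfies $\max_{s_{Y}\in S_{Y}}\varphi\left(\tau_{X},s_{Y}\right)\leqslant 0$, i.e.\ when the left endpoint $\min_{\tau_{X}}\max_{s_{Y}}\varphi\left(\tau_{X},s_{Y}\right)$ of $J\left(\varphi\right)$ is nonpositive, and symmetrically $\Phi_{X}^{+}\neq\left\{\right\}$ holds precisely when the right endpoint $\max_{\tau_{X}}\min_{s_{Y}}\varphi\left(\tau_{X},s_{Y}\right)$ is nonnegative. Conjoining the two statements gives exactly $0\in J\left(\varphi\right)$ (which in particular forces $J\left(\varphi\right)\neq\left\{\right\}$), establishing (b)~$\Leftrightarrow$~(c).

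Next I would prove (a)~$\Leftrightarrow$~(b). For (a)~$\Rightarrow$~(b): if enforcement holds for some $\lambda\in\left[0,1\right)$, the ``only if'' direction of Proposition~\ref{prop:lambda_min} already forces $\Phi_{X}^{+},\Phi_{X}^{-}\neq\left\{\right\}$; otherwise enforcement holds only at $\lambda=1$, so $\lambda_{\min}=1$ and Lemma~\ref{lem:undiscounted_support} supplies mixed actions $\tau_{X}^{\pm}$ witnessing $\Phi_{X}^{+},\Phi_{X}^{-}\neq\left\{\right\}$. For (b)~$\Rightarrow$~(a): assume $\Phi_{X}^{+},\Phi_{X}^{-}\neq\left\{\right\}$. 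If $\Phi_{X}^{+}\cap\Phi_{X}^{-}\neq\left\{\right\}$, any $\tau_{X}$ in the intersection satisfies $0\leqslant\min_{s_{Y}}\varphi\left(\tau_{X},s_{Y}\right)\leqslant\max_{s_{Y}}\varphi\left(\tau_{X},s_{Y}\right)\leqslant 0$, so $\varphi\left(\tau_{X},\cdot\right)\equiv 0$ and unconditional play of $\tau_{X}$ is $\left(\varphi,\lambda\right)$-autocratic for every $\lambda$. Otherwise $\Phi_{X}^{+}\cap\Phi_{X}^{-}=\left\{\right\}$ and $\lambda_{\min}$ from \eq{lambda_min} is well-defined; each numerator $\min_{s_{Y}}\varphi\left(\tau_{X}^{+},s_{Y}\right)-\max_{s_{Y}}\varphi\left(\tau_{X}^{-},s_{Y}\right)$ is nonnegative by membership in $\Phi_{X}^{\pm}$ while the denominator is positive, so the supremum is nonnegative and hence $\lambda_{\min}\leqslant 1$. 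When $\lambda_{\min}<1$, Proposition~\ref{prop:lambda_min} furnishes a $\left(\varphi,\lambda\right)$-autocratic strategy for any $\lambda\in\left[\lambda_{\min},1\right)$, and when $\lambda_{\min}=1$, Theorem~\ref{th:mainresultUndiscounted} (equivalently Proposition~\ref{prop:separationundiscounted}) furnishes a $\left(\varphi,1\right)$-autocratic strategy; in every case $\varphi\equiv 0$ is enforceable.

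The equivalence (b)~$\Leftrightarrow$~(c) is routine once the attainment of the extrema is noted. The main obstacle is the regime-splitting in (b)~$\Rightarrow$~(a): since Proposition~\ref{prop:lambda_min} is stated only for $\lambda\in\left[0,1\right)$, the borderline case $\lambda_{\min}=1$ must be routed through the undiscounted machinery of Section~\ref{sec:undiscounted}, and one must verify that $\lambda_{\min}$ never exceeds $1$ so that enforceability is always attainable within the admissible range $\left[0,1\right]$.
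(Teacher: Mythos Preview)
The paper states this result as a corollary without an explicit proof, leaving the reader to assemble it from Proposition~\ref{prop:lambda_min} and the undiscounted results of Section~\ref{sec:undiscounted}. Your argument follows exactly this intended route and is correct: (b)~$\Leftrightarrow$~(c) is immediate from the definitions plus attainment, and (a)~$\Leftrightarrow$~(b) is pieced together from Proposition~\ref{prop:lambda_min} and the undiscounted machinery.

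One small imprecision is worth flagging. In the $\lambda_{\min}=1$ branch of (b)~$\Rightarrow$~(a), Theorem~\ref{th:mainresultUndiscounted} is a \emph{reduction} result (any $\left(\varphi,1\right)$-autocratic strategy can be replaced by a two-point one), not an existence result, so citing it as a black box does not by itself produce a strategy. Your parenthetical appeal to Proposition~\ref{prop:separationundiscounted} is the right move, but its ``if'' direction requires the structural description of $\Phi_{X}^{\pm}$ stated there, which you did not verify. That verification is short: if $\Phi_{X}^{+}\cap\Phi_{X}^{-}=\left\{\right\}$ and $\lambda_{\min}=1$, then the supremum in \eq{lambda_min} equals $0$; since every term is nonnegative, each must vanish, forcing $\min_{s_{Y}}\varphi\left(\tau_{X}^{+},s_{Y}\right)=0=\max_{s_{Y}}\varphi\left(\tau_{X}^{-},s_{Y}\right)$ for all $\tau_{X}^{+}\in\Phi_{X}^{+}$ and $\tau_{X}^{-}\in\Phi_{X}^{-}$, and disjointness gives the strict inequalities on the other extrema. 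Alternatively, bypass Proposition~\ref{prop:separationundiscounted} entirely and cite Lemma~\ref{lem:undiscounted_gnrcc} (which needs only $\Phi_{X}^{+},\Phi_{X}^{-}\neq\left\{\right\}$) together with the telescoping computation \eq{TelescopicUndiscounted} from the proof of Theorem~\ref{th:mainresultUndiscounted}.
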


\subsection{Computational tractability of enforceable relationships}
Suppose that $S_{X}=\left\{U,D\right\}$, $S_{Y}=\left\{L,R\right\}$, and that
\begin{align}
\varphi &= \bordermatrix{%
& L & R \cr
U &\ 4 & \ 1 \cr
D &\ -1 & \ 0 \cr
}\ .
\end{align}
Let $\tau_{X}^{+}$ be the mixed action that plays $U$ and $D$ uniformly at random, and let $\tau_{X}^{-}$ be the pure action $D$. Then, $\varphi\left(\tau_{X}^{+},L\right) =3/2$, $\varphi\left(\tau_{X}^{+},R\right) =1/2$, $\varphi\left(\tau_{X}^{-},L\right) =-1$, and $\varphi\left(\tau_{X}^{-},R\right) =0$, from which we see that $\tau_{X}^{+}\in\Phi_{X}^{+}$ and $\tau_{X}^{-}\in\Phi_{X}^{-}$, and the two inequalities of \eq{lemma_inequality} hold with $\lambda =2/3$ (in fact, they are both equalities). However, if we set $\tau_{X}^{+}$ to be $U$ and $\tau_{X}^{-}$ to be $D$, then the first of these two inequalities fails to hold. In fact, the minimum $\lambda$ for which they hold when restricted to pure actions is $\lambda =3/4$. Therefore, randomizing between two mixed actions, $\tau_{X}^{+}$ and $\tau_{X}^{-}$, might be preferable to randomizing between two pure actions, for impatient players.

However, under somewhat restrictive conditions, we can guarantee that $\lambda_{\textrm{min}}$ is attained for pure actions, $\tau_{X}^{\pm}\in\Delta\left(S_{X}\right)$.
\begin{lemma}\label{lem:two_action_dominance}
Suppose $S_{X}=\left\{U,D\right\}$, $S_{Y}=\left\{L,R\right\}$, and let $\varphi :S_{X}\times S_{Y}\rightarrow\mathbb{R}$ satisfy $\varphi\left(U,L\right)\geqslant\varphi\left(U,R\right)$ and $\varphi\left(D,L\right)\geqslant\varphi\left(D,R\right)$. If $\varphi\equiv 0$ is enforceable, then $\lambda_{\min}$ is attained at pure actions.
\end{lemma}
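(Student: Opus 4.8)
The plan is to use the monotonicity hypotheses to collapse the inner optimizations in Proposition~\ref{prop:lambda_min} and reduce the formula for $\lambda_{\min}$ to a one-parameter problem in each of $\tau_X^{+}$ and $\tau_X^{-}$. First I would note that, by linearity of $\varphi$ in its first argument, $\varphi\left(U,L\right)\geqslant\varphi\left(U,R\right)$ and $\varphi\left(D,L\right)\geqslant\varphi\left(D,R\right)$ give $\varphi\left(\tau_X,L\right)\geqslant\varphi\left(\tau_X,R\right)$ for every $\tau_X\in\Delta\left(S_X\right)$. Hence $\max_{s_Y}\varphi\left(\tau_X,s_Y\right)=\varphi\left(\tau_X,L\right)$ and $\min_{s_Y}\varphi\left(\tau_X,s_Y\right)=\varphi\left(\tau_X,R\right)$ throughout, so the separation sets of \eq{SeparationSets} become $\Phi_X^{+}=\{\tau_X:\varphi\left(\tau_X,R\right)\geqslant 0\}$ and $\Phi_X^{-}=\{\tau_X:\varphi\left(\tau_X,L\right)\leqslant 0\}$. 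Writing $\tau_X=p\delta_U+\left(1-p\right)\delta_D$ and setting $x\left(p\right)\coloneqq\varphi\left(\tau_X,L\right)$ and $y\left(p\right)\coloneqq\varphi\left(\tau_X,R\right)$, both affine and monotone in $p$, the sets $\Phi_X^{+}$ and $\Phi_X^{-}$ are the super- and sub-level sets of $y$ and $x$, hence subintervals of $[0,1]$.

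Since by Corollary~\ref{cor:enforceability} enforceability gives $\Phi_X^{+},\Phi_X^{-}\neq\{\}$, the case $\Phi_X^{+}\cap\Phi_X^{-}\neq\{\}$ is case (i) of Proposition~\ref{prop:lambda_min} with $\lambda_{\min}=0$ and is immediate, so I would assume $\Phi_X^{+}\cap\Phi_X^{-}=\{\}$. The key structural step is that the pure action maximizing $y$ lies in $\Phi_X^{+}$ (it is the endpoint of the super-level set of the monotone affine $y$), and the pure action minimizing $x$ lies in $\Phi_X^{-}$; call these $\bar\tau_X^{+}$ and $\bar\tau_X^{-}$. Disjointness forces $\bar\tau_X^{+}\neq\bar\tau_X^{-}$ (a common pure action would lie in $\Phi_X^{+}\cap\Phi_X^{-}$), and inspecting the four sign possibilities for $\Delta_L\coloneqq\varphi\left(U,L\right)-\varphi\left(D,L\right)$ and $\Delta_R\coloneqq\varphi\left(U,R\right)-\varphi\left(D,R\right)$, distinctness rules out the mixed-sign cases and forces $\Delta_L,\Delta_R$ to share a sign. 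After relabelling $U\leftrightarrow D$ (which sends $\Delta_L,\Delta_R$ to their negatives and preserves the monotonicity hypotheses), I may assume $\Delta_L,\Delta_R\geqslant 0$, so $\bar\tau_X^{+}=\delta_U$, $\bar\tau_X^{-}=\delta_D$, and $p>q$ for every feasible pair.

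With these reductions the objective in \eq{lambda_min} becomes $F\left(p,q\right)=\dfrac{y\left(p\right)-x\left(q\right)}{\left(p-q\right)\max\{\Delta_L,\Delta_R\}}$, where the $\max$ in the denominator collapses to the single linear form $\left(p-q\right)\max\{\Delta_L,\Delta_R\}$ because $p-q>0$ and $\Delta_L,\Delta_R\geqslant 0$. It then suffices to show $F$ is maximized at $\left(p,q\right)=\left(1,0\right)$, i.e. at $\left(\delta_U,\delta_D\right)$. I would prove the single inequality $y\left(p\right)-x\left(q\right)\leqslant\left(\varphi\left(U,R\right)-\varphi\left(D,L\right)\right)\left(p-q\right)$ for all feasible $p,q$; expanding and collecting terms, the difference between the two sides reduces exactly to $g_D\left(1-p\right)+g_U\,q$, where $g_U=\varphi\left(U,L\right)-\varphi\left(U,R\right)\geqslant 0$ and $g_D=\varphi\left(D,L\right)-\varphi\left(D,R\right)\geqslant 0$ are the monotonicity gaps, so it is non-negative since $1-p\geqslant 0$ and $q\geqslant 0$. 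Dividing by $\left(p-q\right)\max\{\Delta_L,\Delta_R\}>0$ yields $F\left(p,q\right)\leqslant F\left(1,0\right)$, so the supremum defining $\lambda_{\min}$ is attained at the pure pair $\left(\delta_U,\delta_D\right)$.

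The main obstacle is precisely this final inequality. A generic linear-fractional-over-a-box argument would already locate the optimum at a vertex of the feasible region, but some vertices correspond to the interior crossings $y\left(p\right)=0$ or $x\left(q\right)=0$, which are mixed actions; a soft vertex argument therefore does not by itself pin the optimum to pure actions. The monotonicity hypotheses are exactly what excludes these, entering through the signs of $g_U$ and $g_D$ in the collapsed difference $g_D\left(1-p\right)+g_U\,q$. The remaining bookkeeping is routine: confirming that the degenerate situations (namely $\varphi$ independent of $s_X$, or $\Phi_X^{+}\cap\Phi_X^{-}\neq\{\}$) are handled by case (i) of Proposition~\ref{prop:lambda_min}, and that the relabelling genuinely reduces the case $\Delta_L,\Delta_R<0$ to the one treated.
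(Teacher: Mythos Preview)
Your proof is correct and reaches the same conclusion by a genuinely different route from the paper. The paper argues at the level of the feasibility inequalities \eq{lemma_inequality}: given any mixed pair $(\tau_X^+,\tau_X^-)$ satisfying them at some $\lambda$, it first replaces $\tau_X^-$ by $\delta_D$ (using that $q\mapsto\max_{s_Y}\varphi(\tau_X^-,s_Y)$ is monotone once $\delta_U\in\Phi_X^+$, $\delta_D\in\Phi_X^-$), then notes that the remaining inequality is affine in $p$ and so holds at an endpoint, and finally uses the hypothesis $\varphi(D,L)\geqslant\varphi(D,R)$ to force that endpoint to be $p=1$. You instead substitute the monotonicity directly into the explicit formula \eq{lambda_min}, reduce the supremum to a scalar linear-fractional problem in $(p,q)$, and establish the single algebraic inequality $y(p)-x(q)\leqslant(\varphi(U,R)-\varphi(D,L))(p-q)$ by recognizing the difference as $g_D(1-p)+g_U\,q\geqslant 0$. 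Your argument makes the role of the hypotheses completely transparent (they are precisely the signs of $g_U$ and $g_D$) and delivers the optimizing pure pair in one stroke; the paper's argument avoids the closed-form expression and works directly with the constraint system, which is a bit more portable if one does not want to invoke Proposition~\ref{prop:lambda_min}. Both approaches handle the degenerate case $\Phi_X^{+}\cap\Phi_X^{-}\neq\{\}$ lightly, which is appropriate since under the monotonicity hypotheses that intersection forces $\varphi(\tau_X,L)=\varphi(\tau_X,R)=0$ at the common point.
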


\begin{proof}
Suppose that $\tau_{X}^{+}\in\Phi_{X}^{+}$ and $\tau_{X}^{-}\in\Phi_{X}^{-}$ are any two mixed actions satisfying the inequalities of \eq{lemma_inequality} for some $\lambda\in\left[0,1\right)$. Without a loss of generality, we may assume that $\delta_{U}\in\Phi_{X}^{+}$ and $\delta_{D}\in\Phi_{X}^{-}$ because $\varphi\left(\tau_{X}^{+},L\right)\geqslant 0$ means that there must be a pure action $s_{X}^{+}\in\left\{U,D\right\}$ in the support of $\tau_{X}^{+}$ such that $\varphi\left(s_{X}^{+},L\right)\geqslant 0$, and similarly for $\tau_{X}^{-}$. Since $S_{X}$ has only two options, $\tau_{X}^{+}$ and $\tau_{X}^{-}$ can be represented by the probabilities of playing $U$, denoted $p$ and $q$, respectively. Since $\max_{s_{Y}\in S_{Y}}\varphi\left(\tau_{X}^{-},s_{Y}\right)$ is a non-decreasing function of $q$, we see that \eq{lemma_inequality}a holds when $\tau_{X}^{-}$ is replaced by $\delta_{D}$. This first inequality, \eq{lemma_inequality}a, then says
\begin{align}
\varphi\left(\tau_{X}^{+},R\right) &\geqslant \left(1-\lambda\right)\varphi\left(\tau_{X}^{+},L\right) +\lambda\varphi\left(D,L\right) .
\end{align}
This inequality is linear in $p$, so it must also hold when $p=0$ or when $p=1$. If it holds when $p=0$, then $\varphi\left(D,R\right)\geqslant\varphi\left(D,L\right)$. Since it must be true that $\varphi\left(D,L\right) =\varphi\left(D,R\right)$ or $\varphi\left(D,L\right) >\varphi\left(D,R\right)$, we see at once that the inequality holds for $p=1$ and thus for the pair $\left(U,R\right)$. That \eq{lemma_inequality}b holds for the pair $\left(U,R\right)$ is analogous, this time increasing $p$ to $1$ first and then $q$ to $0$.
\end{proof}

To see that Lemma~\ref{lem:two_action_dominance} does not extend to larger action spaces for $X$, consider the function
\begin{align}
\varphi &= \bordermatrix{%
& L & R \cr
U &\ -2 & \ -\frac{2}{5}   \cr
M &\ 2 & \ 5  \cr
D &\ 1 & \ 2  \cr
}\ .
\end{align}
For pure action candidates, we must have $s_{X}^{+}\in\left\{M,D\right\}$ and $s_{X}^{-}=U$. Suppose that $\lambda =1/2$, and let $\tau_{X}^{+}=\left(3/10\right)\delta_{M}+\left(7/10\right)\delta_{D}$ and $\tau_{X}^{-}=U$. We then calculate that
\begin{subequations}
\begin{align}
\frac{13}{10} = \min_{s_{Y}\in S_{Y}}\varphi\left(\tau_{X}^{+},s_{Y}\right) &\geqslant \left(1-\lambda\right)\max_{s_{Y}\in S_{Y}}\varphi\left(\tau_{X}^{+},s_{Y}\right) +\lambda\max_{s_{Y}\in S_{Y}}\varphi\left(\tau_{X}^{-},s_{Y}\right) = \frac{25}{20} ; \\
-\frac{2}{5} = \max_{s_{Y}\in S_{Y}}\varphi\left(\tau_{X}^{-},s_{Y}\right) &\leqslant \lambda\min_{s_{Y}\in S_{Y}}\varphi\left(\tau_{X}^{+},s_{Y}\right) +\left(1-\lambda\right)\min_{s_{Y}\in S_{Y}}\varphi\left(\tau_{X}^{-},s_{Y}\right) = -\frac{7}{20} .
\end{align}
\end{subequations}
However, if we consider the pure-action candidates for $s_{X}^{+}$ and $s_{X}^{-}$ for $\lambda =1/2$, then, with $s_{X}^{+}=M$ and $s_{X}^{-}=U$, the first inequality is
\begin{align}
2 = \min_{s_{Y}\in S_{Y}}\varphi\left(s_{X}^{+},s_{Y}\right) &\geqslant \left(1-\lambda\right)\max_{s_{Y}\in S_{Y}}\varphi\left(s_{X}^{+},s_{Y}\right) +\lambda\max_{s_{Y}\in S_{Y}}\varphi\left(s_{X}^{-},s_{Y}\right) = \frac{23}{10} ,
\end{align}
which does not hold. If $s_{X}^{+}=D$ and $s_{X}^{-}=U$, then the second inequality is
\begin{align}
-\frac{2}{5} = \max_{s_{Y}\in S_{Y}}\varphi\left(\tau_{X}^{-},s_{Y}\right) &\leqslant \lambda\min_{s_{Y}\in S_{Y}}\varphi\left(\tau_{X}^{+},s_{Y}\right) +\left(1-\lambda\right)\min_{s_{Y}\in S_{Y}}\varphi\left(\tau_{X}^{-},s_{Y}\right) = -\frac{1}{2} ,
\end{align}
which also does not hold.

These examples demonstrate that mixed actions can achieve better (smaller) values of $\lambda_{\min}$ than pure actions, even in small games. This motivates the need for efficient algorithms that search over the full space $\Delta\left(S_{X}
\right)$ rather than just pure actions. We now show that despite this complexity, both the verification of enforceability and the computation of optimal autocratic strategies are tractable via linear programming.

To specify a two-point reactive learning strategy explicitly, we need only provide a small number of parameters. For a strategy $\left(\sigma_{X}^{0},\sigma_{X}^{\ast}\right)$ defined by transition probabilities $\left(p_{0},p^{\ast}\right)$ (as in Lemmas~\ref{lem:inequalities}~and~\ref{lem:undiscounted_gnrcc}), we define its base to be the finite set of probabilities that fully characterize the strategy.

\begin{definition}
Consider a two-point reactive learning strategy $\left(\sigma_{X}^{0},\sigma_{X}^{\ast}\right)$ defined by $\left(p_{0},p^{\ast}\right)$ of either Lemma~\ref{lem:inequalities} or Lemma~\ref{lem:undiscounted_gnrcc}. The set $\left\{p_{0}\right\}\cup\left\{p^{\ast}\left[0,s_{Y}\right] ,p^{\ast}\left[1,s_{Y}\right]\right\}_{s_{Y}\in S_{Y}}$ is called a ``base'' of $\left(\sigma_{X}^{0},\sigma_{X}^{\ast}\right)$.
\end{definition}

The notion of a base arises from the fact that
\begin{align}\label{eq:convexproperty}
p^{\ast}\left[q,s_{Y}\right]=qp^{\ast}\left[1,s_{Y}\right] +\left(1-q\right) p^{\ast}\left[0,s_{Y}\right]
\end{align}
for all $q\in\left[0,1\right]$ and $s_{Y}\in S_{Y}$.

\begin{proposition}\label{prop:tractability}
Consider a function $\varphi:S_{X}\times S_{Y}\rightarrow\mathbb{R}$. The problem of identifying whether $\varphi\equiv 0$ is enforceable can be solved in polynomial time. In addition, if $\varphi\equiv 0$ is enforceable, then the minimum discount factor $\lambda_{\min}$ and a base of some two-point reactive learning $\left(\varphi ,\lambda_{\min}\right)$-autocratic strategy can both be computed in polynomial time.
\end{proposition}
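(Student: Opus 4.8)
The plan is to reduce every task to linear programming, leaning on the characterizations already established. I would organize the argument in three stages: deciding enforceability, computing $\lambda_{\min}$ together with an optimal pair $\tau_{X}^{\pm}$, and assembling a base. For enforceability, Corollary~\ref{cor:enforceability} says $\varphi\equiv 0$ is enforceable if and only if $0\in J\left(\varphi\right)$. The two endpoints of $J\left(\varphi\right)$ are the values of a maximin and a minimax problem, each a standard linear program: to compute $\max_{\tau_{X}}\min_{s_{Y}}\varphi\left(\tau_{X},s_{Y}\right)$ I introduce a scalar $v$ and maximize $v$ subject to $\sum_{s_{X}}\tau_{X}\left(s_{X}\right)\varphi\left(s_{X},s_{Y}\right)\geqslant v$ for every $s_{Y}\in S_{Y}$ and $\tau_{X}\in\Delta\left(S_{X}\right)$, and symmetrically for the minimax endpoint. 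These programs have $O\left(\left|S_{X}\right|+\left|S_{Y}\right|\right)$ variables and constraints, so the test $0\in J\left(\varphi\right)$, equivalently $\Phi_{X}^{+},\Phi_{X}^{-}\neq\left\{\right\}$, is decided in polynomial time.

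The central reduction for $\lambda_{\min}$ is that, for a \emph{fixed} $\lambda$, the existence of a pair meeting the hypotheses of Corollary~\ref{cor:corollary_zero} is an LP feasibility problem. I introduce $\tau_{X}^{\pm}\in\Delta\left(S_{X}\right)$ and scalars $a,b,c,d$ as proxies for $\min_{s_{Y}}\varphi\left(\tau_{X}^{+},s_{Y}\right)$, $\max_{s_{Y}}\varphi\left(\tau_{X}^{+},s_{Y}\right)$, $\max_{s_{Y}}\varphi\left(\tau_{X}^{-},s_{Y}\right)$, and $\min_{s_{Y}}\varphi\left(\tau_{X}^{-},s_{Y}\right)$, imposing the one-sided bounds $a\leqslant\varphi\left(\tau_{X}^{+},s_{Y}\right)\leqslant b$ and $d\leqslant\varphi\left(\tau_{X}^{-},s_{Y}\right)\leqslant c$ for all $s_{Y}$, the sign constraints $\varphi\left(\tau_{X}^{+},s_{Y}\right)\geqslant 0$ and $c\leqslant 0$ (encoding $\tau_{X}^{+}\in\Phi_{X}^{+}$ and $\tau_{X}^{-}\in\Phi_{X}^{-}$), and the two inequalities $a-\left(1-\lambda\right)b\geqslant\lambda c$ and $\lambda a+\left(1-\lambda\right)d\geqslant c$ from \eq{lemma_inequality}. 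Because each proxy enters monotonically, any feasible point forces the genuine extrema to satisfy \eq{lemma_inequality}, and conversely the genuine extrema give a feasible point; thus this LP is feasible exactly when a two-point $\left(\varphi,\lambda\right)$-autocratic strategy exists.

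The main obstacle is that $\lambda$ multiplies the variables $a,b,c,d$, so minimizing $\lambda$ directly is a bilinear rather than a linear program; moreover the conflicting roles of $c$ (it sits in both numerator and denominator of the ratio \eq{lambda_min}) block a direct Charnes--Cooper linearization of the equivalent linear-fractional formulation. I would circumvent this using monotonicity: by Proposition~\ref{prop:gamelength} feasibility of the above LP is monotone in $\lambda$, so the feasible discount factors form an interval $\left[\lambda_{\min},1\right]$, and $\lambda_{\min}$ can be located by binary search in which each query is a single fixed-$\lambda$ LP. This terminates exactly in polynomially many steps because $\lambda_{\min}$ is the value of the ratio \eq{lambda_min} at optimizers $\tau_{X}^{\pm}$ that themselves solve polynomially-sized linear programs, hence is rational with polynomially bounded bit length. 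I would first detect the degenerate case $\Phi_{X}^{+}\cap\Phi_{X}^{-}\neq\left\{\right\}$ by an LP (giving $\lambda_{\min}=0$), while the endpoint $\lambda_{\min}=1$ corresponds to Proposition~\ref{prop:separationundiscounted} and is handled by the $\lambda=1$ feasibility LP built from Lemma~\ref{lem:undiscounted_gnrcc}. Solving the feasibility LP at $\lambda=\lambda_{\min}$ then returns witnesses $\tau_{X}^{+},\tau_{X}^{-}$.

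Finally, to output a base I evaluate the finitely many numbers $\varphi\left(\tau_{X}^{\pm},s_{Y}\right)$ and substitute them into the closed form \eq{p_0discounted} for $p_{0}$ and into \eq{pfunction} (with $K=0$, or its $\lambda=1$ analogue) for $p^{\ast}\left[0,s_{Y}\right]$ and $p^{\ast}\left[1,s_{Y}\right]$; by \eq{convexproperty} these determine the full response map. All of this is rational arithmetic on polynomially-sized data, so both $\lambda_{\min}$ and a base of a $\left(\varphi,\lambda_{\min}\right)$-autocratic two-point reactive learning strategy are produced in polynomial time, completing the proof.
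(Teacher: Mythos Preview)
Your approach diverges from the paper's at the computation of $\lambda_{\min}$. You set up a feasibility LP parametrized by $\lambda$ and locate $\lambda_{\min}$ by binary search; the paper instead observes that the supremum in \eq{lambda_min} is a linear-fractional program and applies the Charnes--Cooper transformation directly. The $\max$ in the denominator is handled by splitting into two linear programs, one normalizing each branch of the $\max$ to $1$, and $1-\lambda_{\min}$ is read off as the larger of the two optimal objective values. So your claim that ``the conflicting roles of $c$ \dots\ block a direct Charnes--Cooper linearization'' is mistaken: nothing is blocked, one simply solves two LPs instead of one.

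This matters for more than elegance, because the paper's direct reduction also plugs the one soft spot in your argument. Your binary search needs $\lambda_{\min}$ to be rational of polynomially bounded bit length, and your justification---that the optimizing $\tau_{X}^{\pm}$ ``themselves solve polynomially-sized linear programs''---is circular as written: which linear programs those are is precisely what you are trying to establish. The claim is true and can be rescued via parametric LP theory (the constraints of your feasibility system are affine in $\lambda$, so the threshold value is determined by a polynomial-size linear system), but you have not supplied that argument. The paper's route, by exhibiting $\lambda_{\min}$ as one minus the optimum of an explicit LP, makes both the rationality and the polynomial bit bound immediate. The remaining pieces (enforceability via the endpoints of $J\left(\varphi\right)$, detecting the trivial case $\Phi_{X}^{+}\cap\Phi_{X}^{-}\neq\left\{\right\}$, and assembling the base from \eq{p_0discounted}, \eq{pfunction}, and \eq{convexproperty}) match the paper's treatment.
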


\begin{proof}
Let $S_{X}\coloneqq\left\{s_{X,1},\dots ,s_{X,m}\right\}$ and $S_{Y}\coloneqq\left\{s_{Y,1},\dots ,s_{Y,n}\right\}$. We can think of $\varphi$ as an $m\times n$ matrix with entries $\phi_{ij}=\varphi\left(s_{X,i},s_{Y,j}\right)$. Recall that $\varphi\equiv 0$ is enforceable if and only if $\Phi_{X}^{+},\Phi_{X}^{-}\neq\left\{\right\}$. $\Phi_{X}^{+}\neq\left\{\right\}$ is equivalent to $\phi^{\intercal}x\geqslant 0$ for $x\in\mathbb{R}_{+}^{m}$ and $\sum_{i=1}^{m}x_{i}=1$, while $\Phi_{X}^{-}\neq\left\{\right\}$ is equivalent to $\phi^{\intercal}x\leqslant 0$ for $x\in\mathbb{R}_{+}^{m}$ and $\sum_{i=1}^{m}x_{i}=1$, both of which can be solved in polynomial time. In addition, $\varphi\equiv 0$ is $0$-enforceable if and only if the linear system $\phi^{\intercal}x=0$ with $x\in\mathbb{R}_{+}^{m}$ and $\sum_{i=1}^{m}x_{i}=1$ is feasible, which is also a computationally tractable task.

Suppose now that $\varphi\equiv 0$ is enforceable. If $\varphi$ is $0$-enforceable, then we can compute the trivial mixed action in polynomial time as above. Therefore, we may assume that $\varphi\equiv 0$ is not $0$-enforceable. As $\Phi_{X}^{+},\Phi_{X}^{-}\neq\left\{\right\}$, the supremum problem in \eq{lambda_min} is well-defined and its value is attained. In fact, by standard max-min techniques and the Charnes-Cooper transformation, the supremum problem can be equivalently split into the following linear programming forms:
\begin{center}
\setlength{\arraycolsep}{15pt}
\begin{equation*}{\left(P_{\lambda_{\min}}^{1}\right)\hspace{1cm}}
\begin{array}{rrr}
\text{maximize} & z^{+} - w^{-} & \\[4mm]
\text{subject to} & z^{+}\leqslant\sum_{i=1}^{m}x_{i}^{+}\phi_{ij} & j=1,\dots , n\\[1.5mm]
& w^{+}\geqslant\sum_{i=1}^{m}x_{i}^{+}\phi_{ij} & j=1,\dots ,n\\[1.5mm]
& z^{-}\leqslant\sum_{i=1}^{m}x_{i}^{-}\phi_{ij} & j=1,\dots ,n\\[1.5mm]
& w^{-}\geqslant\sum_{i=1}^{m}x_{i}^{-}\phi_{ij} & j=1,\dots ,n\\[1.5mm]
& x_{i}^{\pm}\geqslant 0 & i=1,\dots ,m\\[1.5mm]
& \sum_{i=1}^{m}x_{i}^{\pm}=t\\[1.5mm]
& w^{+}-w^{-} = 1\\[1.5mm]
& z^{+}-z^{-}\leqslant 1\\[1.5mm]
& z^{+}\geqslant 0 \\[1.5mm]
& w^{-}\leqslant 0\\[1.5mm]
& t\geqslant 0
\end{array}
\end{equation*}
\\
\begin{equation*}{\left(P_{\lambda_{\min}}^{2}\right)\hspace{1cm}}
\begin{array}{rrr}
\text{maximize} & z^{+} - w^{-} & \\[4mm]
\text{subject to} & z^{+}\leqslant\sum_{i=1}^{m}x_{i}^{+}\phi_{ij} & j=1,\dots ,n\\[1.5mm]
& w^{+}\geqslant\sum_{i=1}^{m}x_{i}^{+}\phi_{ij} & j=1,\dots ,n\\[1.5mm]
& z^{-}\leqslant\sum_{i=1}^{m}x_{i}^{-}\phi_{ij} & j=1,\dots ,n\\[1.5mm]
& w^{-}\geqslant\sum_{i=1}^{m}x_{i}^{-}\phi_{ij} & j=1,\dots ,n\\[1.5mm]
& x_{i}^{\pm}\geqslant 0 & i=1,\dots ,m\\[1.5mm]
& \sum_{i=1}^{m}x_{i}^{\pm}=t\\[1.5mm]
& w^{+}-w^{-}\leqslant 1\\[1.5mm]
& z^{+}-z^{-}= 1\\[1.5mm]
& z^{+}\geqslant 0 \\[1.5mm]
& w^{-}\leqslant 0\\[1.5mm]
& t\geqslant 0
\end{array}
\end{equation*}
\end{center}

Note that as $\lambda_{\min}\in\left(0,1\right]$, we can always find an optimal solution for $\left(P_{\lambda_{\min}}^{1}\right)$ and $\left(P_{\lambda_{\min}}^{1}\right)$ with $t_{1},t_{2}>0$. We can then set $\tau_{X}^{1,\pm}\coloneqq x^{1,\pm}/t_{1}$ and $\tau_{X}^{2,\pm}\coloneqq x^{2,\pm}/t_{2}$ to retrieve the desired mixed actions in $\Phi_{X}^{+}$ and $\Phi_{X}^{-}$, and we set $\lambda_{\min}=1-\max\left\{z^{1,+}-w^{1,-},z^{2,+}-w^{2,-}\right\}$. The pair $\tau_{X}^{i,\pm}$ of actions that corresponds to the largest optimal value of the two linear programs will then lead to a two-point reactive learning $\left(\varphi ,\lambda_{\min}\right)$-autocratic strategy $\left(\sigma_{X}^{0},\sigma_{X}^{\ast}\right)$, as seen from Theorem~\ref{th:mainresultDiscounted}.

Finally, in order to identify a base of $\left(\sigma_{X}^{0},\sigma_{X}^{\ast}\right)$, we need only compute at most $2n+1$ values, as indicated by \eq{p_0discounted}, \eq{p_0undiscounted} and \eq{convexproperty}.
\end{proof}

\subsection{Convexity of autocratic strategies and payoff relationships}
The space of enforceable payoff relationships exhibits several convexity properties that facilitate the construction of new autocratic strategies from known ones. These properties have both theoretical and practical significance since they reveal geometric structure in the space of enforceable constraints and provide methods for ``interpolating'' between different autocratic strategies. Throughout this subsection, we focus on the discounted case ($\lambda\in\left[0,1\right)$) for simplicity, though analogous results hold in the limit $\lambda\rightarrow 1$.

Consider two autocratic strategies, $\left(\sigma_{X}^{0},\sigma_{X}^{\ast}\right)$ and $\left(\widetilde{\sigma}_{X}^{0},\widetilde{\sigma}_{X}^{\ast}\right)$ that enforce $\varphi\equiv 0$ and $\widetilde{\varphi}\equiv 0$, respectively. Assume these are generated by mixed actions $\tau_{X}^{\pm},\widetilde{\tau}_{X}^{\pm}\in\Delta\left(S_{X}\right)$ and transition probabilities $\left(p_{0},p^{\ast}\right)$ and $\left(\widetilde{p}_{0},\widetilde{p}^{\ast}\right)$, as in Lemma~\ref{lem:inequalities}, and that they share a common discount factor $\lambda\in\left[0,1\right)$ (which can always be arranged by Proposition~\ref{prop:gamelength}). For $q\in\left[0,1\right]$, we investigate when $\varphi_{q}\coloneqq \left(1-q\right)\varphi +q\widetilde{\varphi}\equiv 0$ is enforceable and how to construct strategies that enforce it.

\begin{proposition}\label{prop:convexproperty1}
Consider some $q\in\left[0,1\right]$ and suppose $\tau_{X}^{\pm}=\widetilde{\tau}_{X}^{\pm}$. Then, there exists a two-point reactive learning strategy that is $\left(\varphi_{q},\lambda_{q}\right)$-autocratic for some $\lambda_{q}\in\left[0,1\right)$.
\end{proposition}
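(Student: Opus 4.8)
The plan is to keep the common support $\tau_X^{\pm}$ and simply take $\lambda_q = \lambda$, so that the claim reduces to verifying the hypotheses of Corollary~\ref{cor:corollary_zero} for the objective $\varphi_q$ at $\tau_X^{\pm}$. Since $\tau_X^+ = \widetilde{\tau}_X^+$ lies in $\Phi_X^+$ for both $\varphi$ and $\widetilde{\varphi}$, we have $\varphi\left(\tau_X^+, s_Y\right) \geqslant 0$ and $\widetilde{\varphi}\left(\tau_X^+, s_Y\right) \geqslant 0$ for every $s_Y \in S_Y$; taking the convex combination gives $\varphi_q\left(\tau_X^+, s_Y\right) \geqslant 0$ for all $s_Y$, and symmetrically $\varphi_q\left(\tau_X^-, s_Y\right) \leqslant 0$ for all $s_Y$. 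In particular $\max_{s_Y\in S_Y} \varphi_q\left(\tau_X^-, s_Y\right) \leqslant 0 \leqslant \min_{s_Y\in S_Y} \varphi_q\left(\tau_X^+, s_Y\right)$, which is the sandwiching condition of Corollary~\ref{cor:corollary_zero} and also shows $\tau_X^+ \in \Phi_X^+$ and $\tau_X^- \in \Phi_X^-$ relative to $\varphi_q$.

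It then remains to check the two inequalities of \eq{lemma_inequality} for $\varphi_q$ with $\lambda_q = \lambda$. The tool here is that the maximum is subadditive and the minimum superadditive under the convex combination $\varphi_q = \left(1-q\right)\varphi + q\widetilde{\varphi}$: for example, $\max_{s_Y\in S_Y}\varphi_q\left(\tau_X^+, s_Y\right) \leqslant \left(1-q\right)\max_{s_Y\in S_Y}\varphi\left(\tau_X^+, s_Y\right) + q\max_{s_Y\in S_Y}\widetilde{\varphi}\left(\tau_X^+, s_Y\right)$, and likewise with $\geqslant$ for the minimum, applied at both $\tau_X^+$ and $\tau_X^-$. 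Abbreviating the four extreme values of $\varphi$ at $\tau_X^{\pm}$ (and their analogues for $\widetilde{\varphi}$), I would rewrite each inequality of \eq{lemma_inequality} for $\varphi$, and for $\widetilde{\varphi}$, as a one-sided affine relation among these extremes, and then bound the corresponding affine expression for $\varphi_q$ from the appropriate side using the sub/superadditive estimates. The resulting bound is exactly the $q$-weighted convex combination of the (already valid) relations for $\varphi$ and $\widetilde{\varphi}$, hence nonnegative, respectively nonpositive, as required.

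With both \eq{lemma_inequality} and the sandwiching condition established at $\tau_X^{\pm}$ for $\lambda_q = \lambda < 1$, Corollary~\ref{cor:corollary_zero} furnishes the desired two-point reactive learning $\left(\varphi_q, \lambda_q\right)$-autocratic strategy. The one point demanding care is the sign bookkeeping in the middle step: the extrema of a convex combination are not the convex combinations of the extrema, so for each of the four extreme values entering a given inequality one must invoke precisely the subadditive ($\max$) or superadditive ($\min$) estimate whose direction matches the sign of that value's coefficient, so that the substitution only weakens the inequality. This is routine once set up, but should be carried out term by term.
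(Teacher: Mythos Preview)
Your approach is correct and, in fact, yields a stronger conclusion than the paper's. The paper's proof is a one-liner: it observes that $\tau_X^{\pm}\in\Phi_q^{\pm}$ (exactly your first paragraph) and then simply invokes Proposition~\ref{prop:lambda_min} to conclude that some $\lambda_q$ works. It does not revisit \eq{lemma_inequality} at all, and it does not claim that $\lambda_q=\lambda$.

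Your route, via Corollary~\ref{cor:corollary_zero}, is more hands-on: you verify \eq{lemma_inequality} for $\varphi_q$ at the common pair $\tau_X^{\pm}$ with the \emph{same} discount factor $\lambda$. The sign bookkeeping you flag does work out, because in both inequalities of \eq{lemma_inequality} every coefficient ($1-\lambda$ and $\lambda$) is nonnegative; hence replacing each $\max$ on the right by its subadditive upper bound and each $\min$ on the left by its superadditive lower bound only weakens the inequality, and the resulting bound is precisely the $q$-convex combination of the two valid inequalities for $\varphi$ and $\widetilde{\varphi}$. This buys you $\lambda_q=\lambda$, which the paper's appeal to Proposition~\ref{prop:lambda_min} does not deliver directly. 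It also makes the conclusion $\lambda_q<1$ explicit, whereas the paper's citation of Proposition~\ref{prop:lambda_min} leaves the reader to check that $\lambda_{\min}(\varphi_q)<1$ in case~(ii) of that proposition.
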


\begin{proof}
Consider the sets $\Phi_{q}^{\pm}$ for the function $\varphi_{q}$ as in \eq{SeparationSets}. Then $\tau_{X}^{\pm}\in\Phi_{q}^{\pm}$, due to $\tau_{X}^{\pm}\in\Phi_{i}^{\pm}$ for $i=1,2$, where $\Phi_{i}^{\pm}$ are also defined as in  \eq{SeparationSets}. The result follows from Proposition~\ref{prop:lambda_min}. 
\end{proof}

Proposition \ref{prop:convexproperty1} does not guarantee that we can enforce a convex combination of $\varphi$ and $\widetilde{\varphi}$ by taking a convex combination of the probabilistic mechanisms $p_{1}^{\ast}$ and $p_{2}^{\ast}$, unless the latter are identical (see \fig{averagingStrategies}). However, we can give a sufficient condition for this property:
\begin{proposition}\label{prop:convexproperty2}
Consider some $q\in\left[0,1\right]$ and suppose $\tau_{X}^{\pm}=\widetilde{\tau}_{X}^{\pm}$. If
\begin{align}
\min_{s_{Y}\in S_{Y}}\varphi\left(\tau_{X}^{+},s_{Y}\right) -\max_{s_{Y}\in S_{Y}}\varphi\left(\tau_{X}^{-},s_{Y}\right)=\min_{s_{Y}\in S_{Y}}\widetilde{\varphi}\left(\tau_{X}^{+},s_{Y}\right)-\max_{s_{Y}\in S_{Y}}\widetilde{\varphi}\left(\tau_{X}^{-},s_{Y}\right) ,
\end{align}
then the reactive learning strategy generated by the initial action $qp_{0}+\left(1-q\right)\widetilde{p}_{0}$, with the response function $q p^{\ast}+\left(1-q\right)\widetilde{p}^{\ast}$ randomizing between $\tau_{X}^{\pm}$, is $\left(\varphi_{q},\lambda\right)$-autocratic.
\end{proposition}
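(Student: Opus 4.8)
The plan is to check the generalized next-round correction condition (\eq{mainIntegral}) for $\varphi_{q}$ directly and then appeal to Proposition~\ref{prop:gnrcc_iff}. Because both constituent strategies mix between the \emph{same} pair $\tau_{X}^{\pm}$, each enforcement potential is pinned down by its two values $\psi\left(\tau_{X}^{\pm}\right)$ and $\widetilde{\psi}\left(\tau_{X}^{\pm}\right)$ from \eq{psi_twopoint}, and each response function is given explicitly by \eq{pfunction}. I would take as the candidate potential for $\varphi_{q}$ the convex combination (with the weights that define $\varphi_{q}$) of the two potentials, determined at the two support points by the corresponding combination of $\psi\left(\tau_{X}^{\pm}\right)$ and $\widetilde{\psi}\left(\tau_{X}^{\pm}\right)$, and then show that the stated combined initial action and combined response function are exactly the ones that \eq{pfunction} produces for $\varphi_{q}$ out of this potential.

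The decisive point is the hypothesis, which equates the two ``gaps.'' By \eq{psi_twopoint},
\begin{align}
\psi\left(\tau_{X}^{+}\right) -\psi\left(\tau_{X}^{-}\right) &= \frac{1}{1-\lambda}\left(\min_{s_{Y}\in S_{Y}}\varphi\left(\tau_{X}^{+},s_{Y}\right) -\max_{s_{Y}\in S_{Y}}\varphi\left(\tau_{X}^{-},s_{Y}\right)\right) ,
\end{align}
and likewise for $\widetilde{\psi}$. The assumed equality of the two differences forces $\psi\left(\tau_{X}^{+}\right) -\psi\left(\tau_{X}^{-}\right) =\widetilde{\psi}\left(\tau_{X}^{+}\right) -\widetilde{\psi}\left(\tau_{X}^{-}\right)$, so the combined potential inherits this common gap independently of $q$. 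Consequently the denominator $\lambda\left(\psi\left(\tau_{X}^{+}\right) -\psi\left(\tau_{X}^{-}\right)\right)$ that appears in \eq{pfunction} is shared by both strategies and by their combination.

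With a fixed common denominator, \eq{pfunction} is affine in its numerator (hence in $\varphi$) and in the initial probability $p_{0}$. Therefore the convex combination of the two response functions, matched to the weights defining $\varphi_{q}$, is literally the response function that Lemma~\ref{lem:inequalities} associates to $\varphi_{q}$, built from the combined potential and the combined initial action; substituting into \eq{mainIntegral} and collecting the coefficients of $\psi\left(\tau_{X}^{+}\right)$ and $\psi\left(\tau_{X}^{-}\right)$ reduces the right-hand side to $\varphi_{q}\left(\tau_{X},s_{Y}\right)$. Feasibility is then free: a convex combination of numbers in $\left[0,1\right]$ again lies in $\left[0,1\right]$, so every combined transition probability is admissible without a separate check, and Proposition~\ref{prop:gnrcc_iff} yields that the combined two-point reactive learning strategy is $\left(\varphi_{q},\lambda\right)$-autocratic.

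The only real obstacle, and the sole function of the hypothesis, is this denominator. Without the equal-gap assumption the two instances of \eq{pfunction} carry different denominators, their convex combination no longer coincides with the canonical response function for $\varphi_{q}$, and the clean affine bookkeeping breaks down; everything else is routine linearity together with the convexity of the unit interval.
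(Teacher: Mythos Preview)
Your argument is correct and follows the same route as the paper: the equal-gap hypothesis forces the common denominator $\lambda\left(\psi\left(\tau_{X}^{+}\right)-\psi\left(\tau_{X}^{-}\right)\right)$ in \eq{pfunction}, after which affinity in the numerator and in $p_{0}$ makes the convex combination satisfy the generalized next-round correction condition for $\varphi_{q}$, with feasibility coming from convexity of $\left[0,1\right]$, and Proposition~\ref{prop:gnrcc_iff} finishes. Your write-up is in fact more precise than the paper's on one point: the hypothesis only forces the \emph{differences} $\psi\left(\tau_{X}^{+}\right)-\psi\left(\tau_{X}^{-}\right)$ and $\widetilde{\psi}\left(\tau_{X}^{+}\right)-\widetilde{\psi}\left(\tau_{X}^{-}\right)$ to agree (which is all that enters the denominator), not the individual values, and that is exactly what the verification of \eq{mainIntegral} needs.
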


\begin{proof}
The assumption implies that $\psi\left(\tau_{X}^{\pm}\right) =\widetilde{\psi}\left(\tau_{X}^{\pm}\right)$, as these are defined in the denominators of $p^{\ast},\widetilde{p}^{\ast}$ in the proof of Lemma~\ref{lem:inequalities}. Moreover, for $q\in\left[0,1\right]$, the response function $qp^{\ast}+\left(1-q\right)\widetilde{p}^{\ast}$ yields valid probabilities with initial action $qp_{0}+\left(1-q\right)\widetilde{p}_{0}$. The resulting reactive learning strategy, $\left(qp_{0}+\left(1-q\right)\widetilde{p}_{0},q p^{\ast}+\left(1-q\right)\widetilde{p}^{\ast}\right)$, and the function $\Psi\coloneqq \psi_{1}\left(\tau_{X}^{+}\right) -\psi_{1}\left(\tau_{X}^{-}\right)$ satisfy the generalized next-round correction condition with discount factor $\lambda$, and the result then follows from Proposition~\ref{prop:gnrcc_iff}.
\end{proof}

In the donation game, ALLD enforces the line $cu_{X}=-bu_{Y}$, while ALLC enforces the line $cu_{X}=-bu_{Y}+b^{2}-c^{2}$. It is easily verified that the condition of Proposition~\ref{prop:convexproperty2} holds. As a result, an agent can enforce all lines of slope $-b/c$ intersecting the feasible region. In the donation game in particular, the convex hull of all such feasible lines is equal to the entire payoff region (see \fig{pencil}).

\begin{figure}
    \centering
    \includegraphics[width=0.7\linewidth]{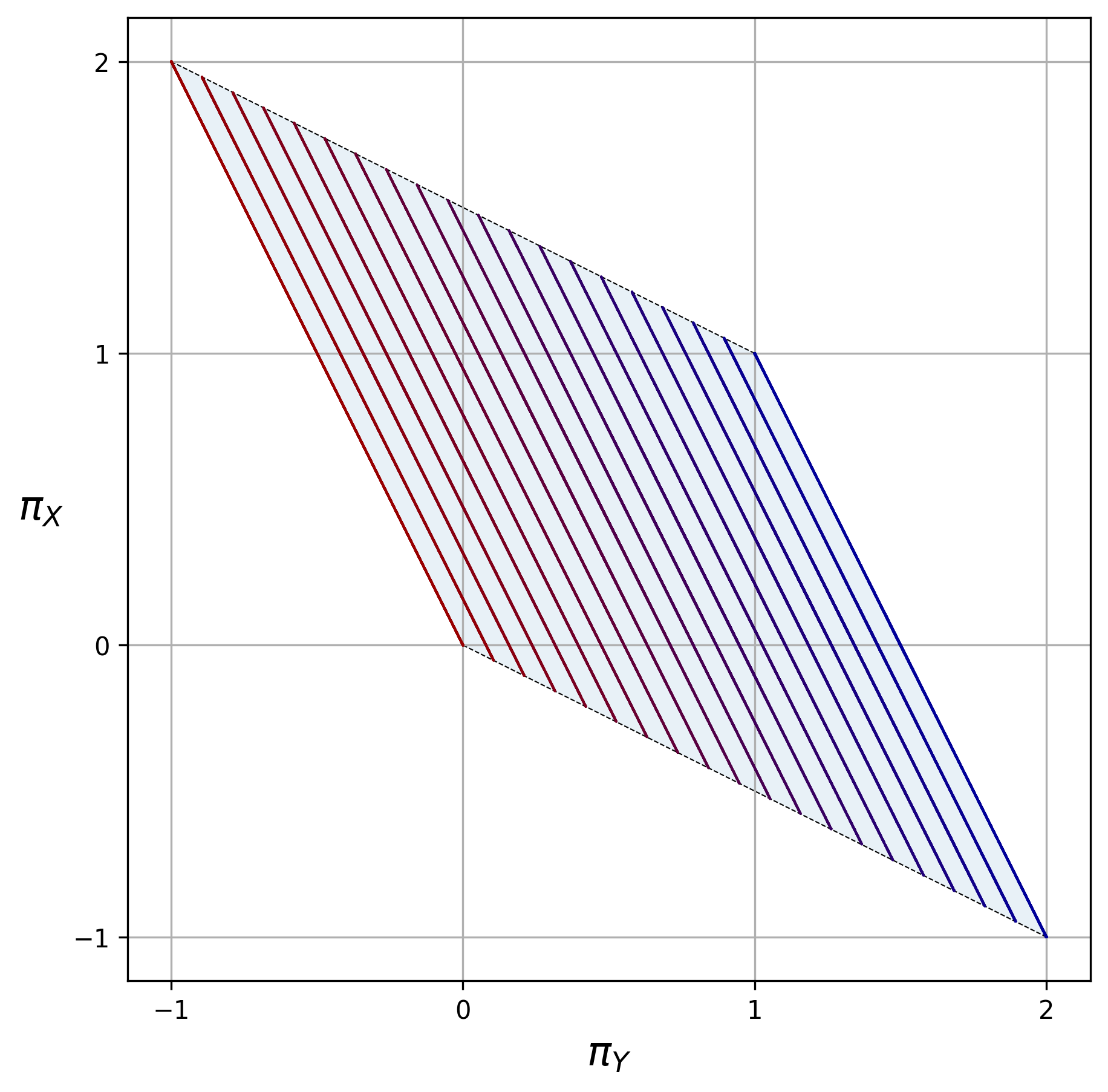}
    \caption{Pencil of enforceable lines in the donation game. The boundary lines $cu_{X}=-bu_{Y}$ (enforced by ALLD) and $cu_{X}=-bu_{Y}+b^{2}-c^{2}$ (enforced by ALLC) are shown, along with intermediate parallel lines that can be enforced by convex combinations of these strategies. By Proposition~\ref{prop:convexproperty2}, an agent can enforce any line in this family by appropriately mixing between punishment and forgiveness. The convex hull of all such enforceable lines equals the entire payoff region, demonstrating complete unilateral control over expected payoff outcomes in the repeated donation game.}
    \label{fig:pencil}
\end{figure}

\subsection{Equalizing and self-equalizing strategies for generic payoff functions}
$X$ can set her own payoff to some constant, $K$, if $u_{X}-K\equiv 0$ is enforceable. Assuming that the aim of player $X$ is setting her own score and that no trivial strategies exist, we deduce from Corollary~\ref{cor:enforceability} that she can set her payoff equal to $K$ if and only if
\begin{align}\label{eq:selfequalizing}
K\in\left[\min_{\tau_{X}\in\Delta\left(S_{X}\right)}\max_{s_{Y}\in S_{Y}}u_{X}\left(\tau_{X},s_{Y}\right),\max_{\tau_{X}\in \Delta\left(S_{X}\right)}\min_{s_{Y}\in S_{Y}}u_{X}\left(\tau_{X},s_{Y}\right)
\right] .
\end{align}
Note that the maximum payoff value that can be set by $X$ is $\max_{\tau_{X}\in\Delta\left(S_{X}\right)}\min_{s_{Y}\in S_{Y}}u_{X}\left(\tau_{X},s_{Y}\right)$. This results in a simple relationship between an agent's control of their payoff in a repeated game and that of their optimal return in a one-shot game: an agent can achieve  their (average) security level when the enforceability interval \eq{selfequalizing} is well-defined.

Accordingly, suppose player $X$ has an incentive to set her opponent's expected payoff. This case corresponds to enforcing $u_{Y}-K\equiv 0$. In similar fashion to the above, it can be shown that if she desires to ``punish" her opponent in the worst possible way, the best she can hope for is setting $Y$'s payoff to $\min_{\tau_{X}\in \Delta\left(S_{X}\right)}\max_{s_{Y}\in S_{Y}}u_{Y}\left(\tau_{X},s_{Y}\right)$, which by the minimax theorem is exactly equal to $\max_{s_{Y}\in \Delta\left(S_{Y}\right)}\min_{\tau_{X}\in \Delta\left(S_{X}\right)}u_{Y}\left(\tau_{X},s_{Y}\right)$. This is precisely the security level of the co-player.

\subsection{Symmetric objective functions}
Next, we prove that an enforceable symmetric relationship is either trivially enforceable or enforceable with $\lambda_{\min}=1$. In other words, if $X$'s aim is to control a symmetric payoff region, then she either needs a memory-less plan or an ``infinite'' amount of patience.

\begin{proposition}\label{prop:symmetricproperty}
Suppose that $S_{X}=S_{Y}$ and that $\varphi :S_{X}\times S_{Y}\rightarrow\mathbb{R}$ is symmetric. Then, $\varphi\equiv 0$ is enforceable if and only if there exists a trivial strategy or $\max_{s_{X}\in \Delta\left(S_{X}\right)}\min_{s_{Y}\in\Delta\left(S_{Y}\right)}\varphi\left(s_{X},s_{Y}\right) =0$.
\end{proposition}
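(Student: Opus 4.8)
The plan is to reduce everything to Corollary~\ref{cor:enforceability}, which states that $\varphi\equiv 0$ is enforceable if and only if $0\in J\left(\varphi\right)$. The only content beyond this corollary is the observation that symmetry forces the interval $J\left(\varphi\right)$ to degenerate to a single point, namely the value of the associated symmetric zero-sum game. Writing $V\coloneqq\max_{\tau_{X}\in\Delta\left(S_{X}\right)}\min_{\tau_{Y}\in\Delta\left(S_{Y}\right)}\varphi\left(\tau_{X},\tau_{Y}\right)$ for the bilinear extension (which is exactly the quantity $\max_{s_{X}\in\Delta\left(S_{X}\right)}\min_{s_{Y}\in\Delta\left(S_{Y}\right)}\varphi\left(s_{X},s_{Y}\right)$ appearing in the statement), I would show $J\left(\varphi\right)=\left\{V\right\}$; the proposition then follows at once.

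To establish this collapse, I would first rewrite the two endpoints of $J\left(\varphi\right)$ using mixed opponent actions. Since $\tau_{X}\mapsto\varphi\left(\tau_{X},\cdot\right)$ is linear, a minimum (resp.\ maximum) over the pure actions $S_{Y}$ agrees with the minimum (resp.\ maximum) over $\Delta\left(S_{Y}\right)$, so the right endpoint of $J\left(\varphi\right)$ is $\max_{\tau_{X}}\min_{\tau_{Y}}\varphi\left(\tau_{X},\tau_{Y}\right)=V$ and the left endpoint is $\min_{\tau_{X}}\max_{\tau_{Y}}\varphi\left(\tau_{X},\tau_{Y}\right)$. I would then apply von Neumann's minimax theorem to the bilinear payoff over the compact convex simplices $\Delta\left(S_{X}\right)$ and $\Delta\left(S_{Y}\right)$ to rewrite the right endpoint as $\min_{\tau_{Y}}\max_{\tau_{X}}\varphi\left(\tau_{X},\tau_{Y}\right)$. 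Finally, invoking symmetry $\varphi\left(\tau_{X},\tau_{Y}\right)=\varphi\left(\tau_{Y},\tau_{X}\right)$ and relabelling the bound variables identifies this last expression with the left endpoint $\min_{\tau_{X}}\max_{\tau_{Y}}\varphi\left(\tau_{X},\tau_{Y}\right)$. Hence both endpoints equal $V$, and $J\left(\varphi\right)=\left\{V\right\}$.

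With the collapse in hand, the biconditional is routine. For the forward direction, if $\varphi\equiv 0$ is enforceable then Corollary~\ref{cor:enforceability} gives $0\in J\left(\varphi\right)=\left\{V\right\}$, so $V=0$, which is the second alternative in the statement. For the converse, a trivial (unconditional) strategy is $\left(\varphi,\lambda\right)$-autocratic by its very definition, hence enforceable; and if $V=0$ then $0\in J\left(\varphi\right)$, so enforceability again follows from Corollary~\ref{cor:enforceability}. I would close with the remark that a trivial strategy plays some $\tau_{X}$ with $\varphi\left(\tau_{X},s_{Y}\right)=0$ for every $s_{Y}$, which itself forces $V=0$; thus the two alternatives in the statement are precisely the regimes $\lambda_{\min}=0$ and $\lambda_{\min}=1$ anticipated in the discussion preceding the proposition.

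The main obstacle is the minimax/symmetry collapse in the middle paragraph, and the only point demanding care there is the legitimacy of the two reductions it rests on: replacing the inner optimization over pure actions by the optimization over mixed actions (valid because a linear functional on a simplex attains its extrema at vertices), and applying the minimax theorem to the correct bilinear extension on the product of simplices. Once these are in place, the symmetry relabelling and the appeal to Corollary~\ref{cor:enforceability} are immediate.
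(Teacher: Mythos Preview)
Your proposal is correct and follows essentially the same approach as the paper: both use the minimax theorem together with the symmetry of $\varphi$ to show that $J\left(\varphi\right)$ collapses to a single point, and then invoke Corollary~\ref{cor:enforceability}. Your version is slightly cleaner in that you isolate the collapse $J\left(\varphi\right)=\left\{V\right\}$ as a standalone observation before running the biconditional, whereas the paper interleaves these steps and additionally appeals to Proposition~\ref{prop:separationundiscounted} to extract the sharper conclusion $\lambda_{\min}=1$ (which you correctly relegate to a closing remark).
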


\begin{proof}
Suppose that $\varphi$ is not $0$-enforceable. We know $\max_{s_{X}\in \Delta\left(S_{X}\right)}\min_{s_{Y}\in\Delta\left(S_{Y}\right)}\varphi\left(s_{X},s_{Y}\right) =0$ implies that $J\left(\varphi\right) =\left\{0\right\}$ by the minimax theorem and the symmetry of $\varphi$, thus $\varphi\equiv 0$ is enforceable with $\lambda_{\min}=1$ by Proposition \ref{prop:separationundiscounted}. Conversely, let $\varphi$ be enforceable but not $0$-enforceable. Then, it is $\lambda$-enforceable with $\lambda\in\left(0,1\right]$ and $J\left(\varphi\right)$ is well-defined and it contains $0$ due to Corollary \ref{cor:enforceability}. However, the symmetry of $\varphi$ and the minimax theorem imply that $J\left(\varphi\right)$ has to be trivial, i.e., the assumption of Proposition~\ref{prop:separationundiscounted} holds.
\end{proof}

It is important to note that Proposition \ref{prop:symmetricproperty} also holds for skew-symmetric relations. This follows from the fact that $\varphi\equiv 0$ if and only if $-\varphi\equiv 0$. We therefore deduce that TFT can be an autocratic strategy and enforce fair payoff relationships for symmetric games like IPD, only in the undiscounted setting.

\subsection{``Zero-sum'' strategies for generic payoff functions}
Here, we call ``zero-sum'' autocratic strategies all autocratic plans of $X$ that can enforce $u_{X}=-u_{Y}$ in expectation. Corollary \ref{cor:enforceability} directly yields the following:

\begin{proposition}\label{prop:zerosum}
There exists a zero-sum autocratic strategy for player $X$ if and only if there exist $\tau_{X}^{\pm}\in\Delta\left(S_{X}\right)$ such that $u_{X}\left(\tau_{X}^{+},s_{Y}\right)\geqslant -u_{Y}\left(\tau_{X}^{+},s_{Y}\right)$ and $u_{X}\left(\tau_{X}^{-},s_{Y}\right)\leqslant -u_{Y}\left(\tau_{X}^{-},s_{Y}\right)$ for all $s_{Y}\in S_{Y}$.
\end{proposition}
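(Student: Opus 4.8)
The plan is to realize this statement as a direct specialization of Corollary~\ref{cor:enforceability}. By definition, a zero-sum autocratic strategy is one that enforces $u_{X}=-u_{Y}$ in expectation, which is exactly the assertion that $\varphi\equiv 0$ is enforceable for the particular choice $\varphi\left(s_{X},s_{Y}\right)\coloneqq u_{X}\left(s_{X},s_{Y}\right)+u_{Y}\left(s_{X},s_{Y}\right)$. So the first step is to fix this $\varphi$ and observe that the existence of a zero-sum autocratic strategy for $X$ is equivalent to the enforceability of $\varphi\equiv 0$.

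Next I would invoke the equivalence of parts \emph{(a)} and \emph{(b)} in Corollary~\ref{cor:enforceability}: $\varphi\equiv 0$ is enforceable if and only if $\Phi_{X}^{+}\neq\left\{\right\}$ and $\Phi_{X}^{-}\neq\left\{\right\}$. It then only remains to unpack the two separation sets of \eq{SeparationSets} for this specific $\varphi$. Using the linear extension to mixed actions, $\varphi\left(\tau_{X},s_{Y}\right)=u_{X}\left(\tau_{X},s_{Y}\right)+u_{Y}\left(\tau_{X},s_{Y}\right)$, the defining condition $\min_{s_{Y}\in S_{Y}}\varphi\left(\tau_{X},s_{Y}\right)\geqslant 0$ for membership in $\Phi_{X}^{+}$ reads $u_{X}\left(\tau_{X},s_{Y}\right)\geqslant -u_{Y}\left(\tau_{X},s_{Y}\right)$ for every $s_{Y}\in S_{Y}$, while the condition $\max_{s_{Y}\in S_{Y}}\varphi\left(\tau_{X},s_{Y}\right)\leqslant 0$ defining $\Phi_{X}^{-}$ reads $u_{X}\left(\tau_{X},s_{Y}\right)\leqslant -u_{Y}\left(\tau_{X},s_{Y}\right)$ for every $s_{Y}\in S_{Y}$. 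Writing $\tau_{X}^{+}$ and $\tau_{X}^{-}$ for witnesses to $\Phi_{X}^{+}\neq\left\{\right\}$ and $\Phi_{X}^{-}\neq\left\{\right\}$ respectively recovers precisely the inequalities in the statement, and conversely any such witnesses show that the two sets are nonempty.

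Since the whole argument is a verbatim specialization of Corollary~\ref{cor:enforceability}, there is no genuine obstacle. The only point deserving a line of care is confirming that the mixed-action extension of $\varphi$ splits additively as $u_{X}+u_{Y}$ in the first coordinate, so that the separation-set inequalities on $\varphi$ coincide with the payoff inequalities in the statement; this is immediate from linearity of expectation, which passes through both $u_{X}$ and $u_{Y}$ separately.
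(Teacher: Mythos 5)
Your proposal is correct and follows exactly the paper's route: the paper derives Proposition~\ref{prop:zerosum} directly from Corollary~\ref{cor:enforceability}, and your specialization to $\varphi = u_{X}+u_{Y}$ with the unpacking of $\Phi_{X}^{\pm}$ is precisely the intended argument. No issues.
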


\section{Applications and examples}\label{sec:applications}
We now apply our theoretical framework to four variants of a classical social dilemma: the prisoner's dilemma, a nonlinear donation game, an asymmetric donation game, and the hawk-dove game. Throughout, we focus on linear payoff relationships relative to a reference line, such that the desired relationship is either exactly the reference line or else intersects it in a unique point. For example, the line $\kappa -\pi_{X}=\chi\left(\kappa -\pi_{Y}\right)$ intersects the reference line $\pi_{X}=\pi_{Y}$ at $\left(\kappa ,\kappa\right)$ provided $\chi\neq 1$.

\subsection{Prisoner's dilemma}
In the prisoner's dilemma, the payoff matrix of \eq{payoff_matrix} satisfies $T>R>P>S$. We impose the standard constraint of $2P<S+T<2R$ as well, which means that the the mean payoff for alternating cooperation and defection lies between the payoffs for mutual defection and mutual cooperation. This condition is not strictly necessary, but we use it to simplify the exposition. With the objective function $\varphi\left(s_{X},s_{Y}\right) =\kappa -u_{X}\left(s_{X},s_{Y}\right) -\chi\left(\kappa -u_{Y}\left(s_{X},s_{Y}\right)\right)$, we have
\begin{subequations}
\begin{align}
\varphi\left(C,C\right) &= \left(\chi -1\right)\left(R-\kappa\right) ; \\
\varphi\left(C,D\right) &= \left(\kappa -S\right) +\chi\left(T-\kappa\right) ; \\
\varphi\left(D,C\right) &= -\left(T-\kappa\right) - \chi\left(\kappa -S\right) ; \\
\varphi\left(D,D\right) &= -\left(\chi -1\right)\left(\kappa -P\right) .
\end{align}
\end{subequations}
We first make a few remarks on viable values of $\kappa$ and $\chi$. If $\chi =1$, then $\kappa$ is irrelevant since it cancels out in the definition of $\varphi$. If $\chi\neq 1$, then we must have $\kappa\in\left[P,R\right]$ for an autocratic strategy to exist; otherwise, $\varphi\left(C,C\right)$ and $\varphi\left(D,D\right)$ have the same sign, which means that $\Phi_{X}^{+}$ and $\Phi_{X}^{-}$ cannot simultaneously be non-empty (by Proposition~\ref{prop:lambda_min}). If $\chi <1$ and $\kappa\in\left[P,R\right]$, then for $\Phi_{X}^{+}$ and $\Phi_{X}^{-}$ to both be non-empty, we must have \citep{hilbe:GEB:2015}
\begin{align}
\chi &\leqslant \min\left\{-\frac{T-\kappa}{\kappa -S},-\frac{\kappa -S}{T-\kappa}\right\} .
\end{align}

\subsubsection{Trivial autocratic strategies}
In this game, a trivial autocratic strategy is a value $p\in\left[0,1\right]$ such that $\varphi\left(p,C\right) =\varphi\left(p,D\right) =0$. For every $p\in\left[0,1\right]$, we can simply solve for $\kappa$ and $\chi$ in these equations to obtain
\begin{subequations}
\begin{align}
\kappa &= P + p\left(S + T - 2P\right) + p^{2}\left(R - S - T + P\right) ; \\
\chi &= \frac{T - P + p\left(R - S - T + P\right)}{S - P + p\left(R - S - T + P\right)} .
\end{align}
\end{subequations}
For this pair $\left(\kappa ,\chi\right)$, playing $C$ with probability $p$ in every round is $\left(\varphi ,0\right)$-autocratic.

\subsubsection{Non-trivial autocratic strategies}
We now assume that we have already classified trivial autocratic strategies and we are in the situation in which $\varphi\equiv 0$ is not $0$-enforceable. Assuming $\left(\kappa ,\chi\right)$ is a viable pair, we see that $\delta_{C}\in\Phi_{X}^{+}$ and $\delta_{D}\in\Phi_{X}^{-}$ when $\chi\geqslant 1$, and $\delta_{D}\in\Phi_{X}^{+}$ and $\delta_{C}\in\Phi_{X}^{-}$ when $\chi <1$. In the non-additive prisoner's dilemma, where $R-T\neq S-P$, we cannot guarantee that the minimizer $\left(\tau_{X}^{+},\tau_{X}^{-}\right)$ for $\lambda_{\min}$ is attained in pure actions. By Lemma~\ref{lem:two_action_dominance}, we can restrict the search to pairs of pure actions provided that the differences
\begin{subequations}
\begin{align}
\varphi\left(C,C\right) - \varphi\left(C,D\right) &= -\chi\left(T-R\right) -\left(R-S\right) ; \\
\varphi\left(D,C\right) - \varphi\left(D,D\right) &= -\chi\left(P-S\right) -\left(T-P\right)
\end{align}
\end{subequations}
both have the same sign. For $\chi <1$, this condition requires
\begin{align}
\chi &\leqslant \min\left\{-\frac{R-S}{T-R},-\frac{T-P}{P-S}\right\} = \min_{\kappa\in\left[P,R\right]}\min\left\{-\frac{T-\kappa}{\kappa -S},-\frac{\kappa -S}{T-\kappa}\right\}
\end{align}
since $2P<S+T<2R$. In particular, for any $\left(\kappa ,\chi\right)$ with $\kappa\in\left[P,R\right]$ and
\begin{align}
\min\left\{-\frac{R-S}{T-R},-\frac{T-P}{P-S}\right\} < \chi &\leqslant \min\left\{-\frac{T-\kappa}{\kappa -S},-\frac{\kappa -S}{T-\kappa}\right\} ,
\end{align}
we resort to Proposition~\ref{prop:tractability} to calculate $\lambda_{\min}$. For all other viable values of $\left(\kappa ,\chi\right)$, we can use Proposition~\ref{prop:lambda_min}, which gives a $\left(\varphi ,\lambda\right)$-autocratic strategy if and only if $\lambda\geqslant\lambda_{\textrm{min}}$, where
\begin{align}\label{eq:lambdaIPD}
\lambda_{\textrm{min}} &= 
\begin{cases}\displaystyle 1-\frac{\min_{s_{Y}\in S_{Y}}\varphi\left(C,s_{Y}\right) -\max_{s_{Y}\in S_{Y}}\varphi\left(D,s_{Y}\right)}{\max\left\{ \substack{\max_{s_{Y}\in S_{Y}}\varphi\left(C,s_{Y}\right) -\max_{s_{Y}\in S_{Y}}\varphi\left(D,s_{Y}\right) , \\ \min_{s_{Y}\in S_{Y}}\varphi\left(C,s_{Y}\right) -\min_{s_{Y}\in S_{Y}}\varphi\left(D,s_{Y}\right)} \right\}} & \displaystyle \chi\geqslant 1 , \\[3em]
\displaystyle 1-\frac{\min_{s_{Y}\in S_{Y}}\varphi\left(D,s_{Y}\right) -\max_{s_{Y}\in S_{Y}}\varphi\left(C,s_{Y}\right)}{\max\left\{ \substack{\max_{s_{Y}\in S_{Y}}\varphi\left(D,s_{Y}\right) -\max_{s_{Y}\in S_{Y}}\varphi\left(C,s_{Y}\right) , \\ \min_{s_{Y}\in S_{Y}}\varphi\left(D,s_{Y}\right) -\min_{s_{Y}\in S_{Y}}\varphi\left(C,s_{Y}\right)} \right\}} & \displaystyle \chi\leqslant \min\left\{-\frac{R-S}{T-R},-\frac{T-P}{P-S}\right\} .
\end{cases}
\end{align}

If $\chi =1$, then $\kappa$ is irrelevant and $\lambda_{\textrm{min}}=1$. If $\chi >1$ and $\kappa\in\left[P,R\right]$, then $\varphi\left(p,D\right) >\varphi\left(p,C\right)$ for all $p\in\left[0,1\right]$, so \eq{lambdaIPD} reduces to
\begin{align}
\lambda_{\textrm{min}} &= 1-\frac{\varphi\left(C,C\right) -\varphi\left(D,D\right)}{\max\left\{ \varphi\left(C,D\right) -\varphi\left(D,D\right) , \varphi\left(C,C\right) -\varphi\left(D,C\right) \right\}} \nonumber \\
&= 1-\frac{\left(\chi -1\right)\left(R-P\right)}{\max\left\{-S + \chi T -\left(\chi -1\right) P , \left(\chi -1\right) R + T - \chi S\right\}} .
\end{align}
Since $\chi >1$, we see that $\varphi\left(C,D\right) -\varphi\left(D,D\right)\geqslant\varphi\left(C,C\right) -\varphi\left(D,C\right)$ if and only if $S+T\geqslant R+P$. Therefore, we have
\begin{align}
\lambda_{\textrm{min}} &= 
\begin{cases}
\displaystyle 1-\frac{\left(\chi -1\right)\left(R-P\right)}{-S + \chi T -\left(\chi -1\right) P} & \displaystyle S+T\geqslant R+P ,\\[2em]
\displaystyle 1-\frac{\left(\chi -1\right)\left(R-P\right)}{\left(\chi -1\right) R - \chi S + T} & \displaystyle S+T< R+P .
\end{cases}
\end{align}
Finally, if $\chi\leqslant\min\left\{-\left(R-S\right) /\left(T-R\right) ,-\left(T-P\right) /\left(P-S\right)\right\}$ and $\kappa\in\left[P,R\right]$, then \eq{lambdaIPD} gives
\begin{align}
\lambda_{\textrm{min}} &= 1-\frac{\varphi\left(D,D\right) -\varphi\left(C,C\right)}{\max\left\{\varphi\left(D,C\right) -\varphi\left(C,C\right) , \varphi\left(D,D\right) -\varphi\left(C,D\right) \right\}} \nonumber \\
&= 1 - \frac{\left(1-\chi\right)\left(R-P\right)}{\max\left\{\left(1-\chi\right) R - T + \chi S , S - \chi T -\left(1-\chi\right) P \right\}} .
\end{align}
Given the upper bound on $\chi$, we then see that
\begin{align}
\lambda_{\textrm{min}} &= 
\begin{cases}
\displaystyle 1-\frac{\left(1-\chi\right)\left(R-P\right)}{S - \chi T -\left(1-\chi\right) P} & \displaystyle S+T<R+P ,\\[2em]
\displaystyle 1-\frac{\left(1-\chi\right)\left(R-P\right)}{\left(1-\chi\right) R - T + \chi S} & \displaystyle S+T\geqslant R+P .
\end{cases}
\end{align}

Of course, in all cases, if $\lambda_{\textrm{min}}>1$, then there exists no strategy enforcing $\varphi\equiv 0$ for a repeated game with discounting, since the discount factor represents a probability. For the standard (non-additive) payoff parameters of a prisoner's dilemma ($R=3$, $S=0$, $T=5$, and $P=1$), \fig{pd_heatmap} summarizes the enforceable lines.

\begin{figure}
    \centering
    \includegraphics[width=\textwidth]{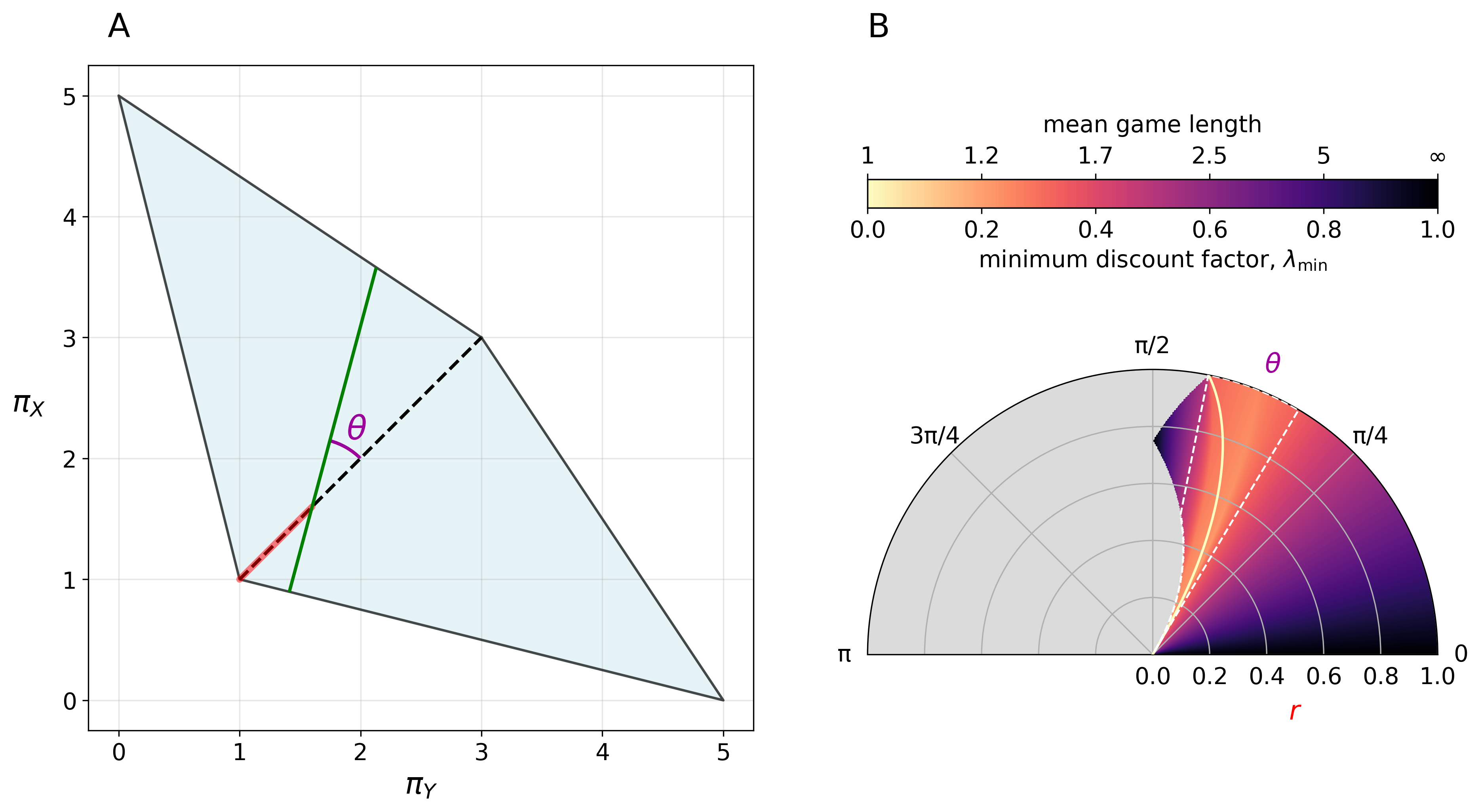}
    \caption{Heat map on enforceability of the linear payoff relationship $\varphi =\kappa -u_{X}\left(s_{X},s_{Y}\right) -\chi\left(\kappa -u_{Y}\left(s_{X},s_{Y}\right)\right)\equiv 0$ for all values $\kappa ,\chi\in\mathbb{R}$, for the repeated prisoner's dilemma with $\left(R,S,T,P\right) =\left(3,0,5,1\right)$. $\theta$ is the angle between the payoff relationship (green) and the reference line (dashed), and $r$ represents the fraction (red) of the reference line made up by the intersection point. The region enclosed by the white dashed line is the set of $\left(r,\theta\right)$ for which at least one of $\tau_{X}^{+}$ and $\tau_{X}^{-}$ is non-pure in the optimizer for $\lambda_{\min}$ (\eq{lambda_min}). For this particular game, we have $\kappa =P+r\left(R-P\right)$ and $\chi =\tan\left(\theta +\pi /4\right)$.}
    \label{fig:pd_heatmap}
\end{figure}

\subsection{Nonlinear donation game}
The donation game has a relatively simple structure and has been studied extensively in the context of ZD strategies \citep{press:PNAS:2012,hilbe:PNAS:2013,mcavoy:PNAS:2016}. A slightly more realistic social dilemma occurs when the agents have multiple choices when it comes to paying up a cost and gaining a benefit. For simplicity, consider the three-action extension of the donation game, with payoff matrix
\begin{align}\label{eq:DNpayoffmatrix}
\bordermatrix{%
& C_{1} & C_{2} & D \cr
C_{1} &\ b_{1}-c_{1},\ b_{1}-c_{1} & \ b_{2}-c_{1},\ b_{1}-c_{2} & \ -c_{1},\ b_{1} \cr
C_{2} &\ b_{1}-c_{2},\ b_{2}-c_{1} & \ b_{2}-c_{2},\ b_{2}-c_{2} & \ -c_{2},\ b_{2} \cr
D &\ b_{1},\ 0 & \ b_{2},\ 0 & \ 0,\ 0 \cr
} .
\end{align}
Here, we assume that $0<c_{1}<c_{2}$, $0<b_{1}<b_{2}$, $c_{1}<b_{1}$, and $c_{2}<b_{2}$, but that $b_{2}-c_{2}<b_{1}-c_{1}$. In other words, playing $C_{2}$ is more costly, but more beneficial than playing $C_{1}$. Yet, when both players choose the same level, mutual ``$C_{1}$'' strictly Pareto-dominates mutual ``$C_{2}$.''

\begin{figure}
    \centering
    \includegraphics[width=\textwidth]{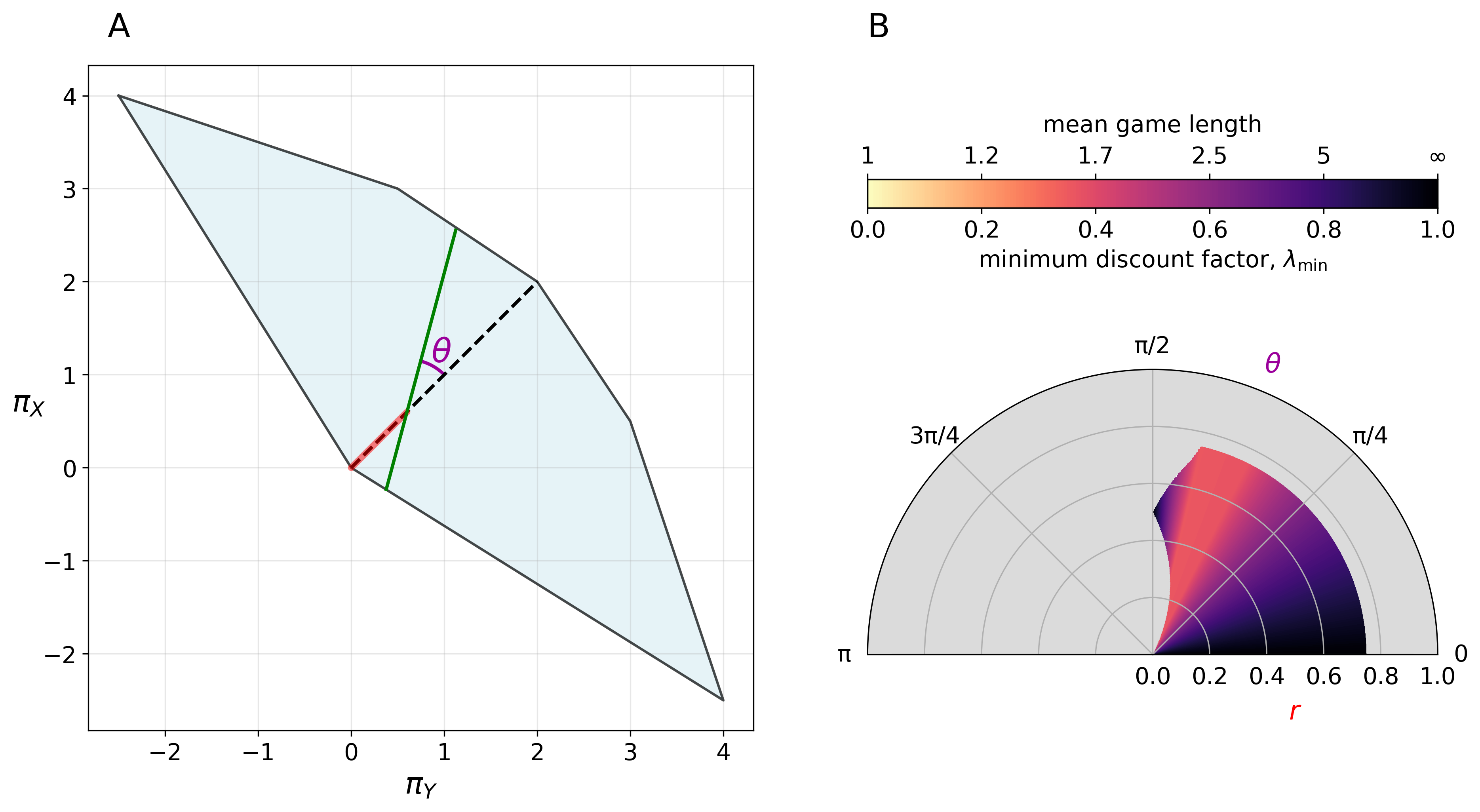}
    \caption{Enforceability of linear payoff relationships in the three-action nonlinear donation game with parameters $b_{1}=3$, $c_{1}=1$, $b_{2}=4$, $c_{2}=2.5$ (satisfying $b_{2}-c_{2}<b_{1}-c_{1}$). The heatmap shows the minimum discount factor $\lambda_{\textrm{min}}$ required to enforce $\varphi =\kappa -u_{X} -\chi\left(\kappa -u_{Y}\right)\equiv 0$ across different values of $\kappa$ and $\chi$. Due to the game's additive structure, any enforceable relationship can be implemented using a two-point reactive strategy, significantly simplifying the strategy space compared to general memory-one approaches.}
    \label{fig:ndg_heatmap}
\end{figure}

The game's payoff structure is additive: for each action profile, we can write $u_{X}\left(s_{X},s_{Y}\right) =\phi_{X}\left(s_{X}\right) +\phi_{Y}\left(s_{Y}\right)$. By symmetry, $u_{Y}$ is also additive. Consequently, for linear payoff relationships $\varphi =\kappa -u_{X} -\chi\left(\kappa -u_{Y}\right)$, the function $\varphi$ inherits this additive structure.

Since this game is additive and $\varphi$ is linear, Theorem~\ref{thm:additive} tells us that any enforceable payoff relationship can be implemented using a reactive strategy that conditions solely on the opponent's most recent action. Moreover, by Lemma~\ref{lem:inequalities_additive}, enforcement can be achieved using a two-point reactive strategy that mixes between two fixed distributions based on the opponent's last action.

The presence of three actions introduces additional strategic complexity compared to the standard two-action donation game. While mutual cooperation at level $C_{1}$ yields the Pareto-efficient outcome $\left(b_{1}-c_{1},b_{1}-c_{1}\right)$, players face a tension between contributing at the higher level $C_{2}$ (which benefits the opponent more) and defecting entirely. This structure creates richer possibilities for autocratic strategies, as player $X$ can condition their response not just on whether $Y$ cooperated, but also on the level of cooperation chosen. \fig{ndg_heatmap} illustrates that the enforceability landscape extends naturally from the two-action case, with the minimum discount factor varying smoothly across the parameter space $\left(\kappa ,\chi\right)$.

\subsection{Fairness and equality in asymmetric games}
In symmetric games like the standard prisoner's dilemma, ``fair'' strategies enforcing $\varphi =u_{X}-u_{Y}$ are well-studied \citep{press:PNAS:2012}, with TFT being the canonical example. And we have seen that payoff equality can be enforced in expectation if and only if $\lambda\rightarrow 1$ (Proposition~\ref{prop:symmetricproperty}). However, the distinction between equality and ``fairness'' becomes crucial in asymmetric social dilemmas, where two agents might have different abilities or resources. Consider the asymmetric donation game with payoff matrix
\begin{align}
\bordermatrix{%
& C & D \cr
C &\ b_{Y}-c_{X},\ b_{X}-c_{Y} & \ -c_{X},\ b_{X} \cr
D &\ b_{Y},\ -c_{Y} & \ 0,\ 0 \cr
} ,
\end{align}
where $b_{X}>c_{X}>0$ and $b_{Y}>c_{Y}>0$. In other words, as cooperators, $X$ pays $c_{X}$ to donate $b_{X}$ and $Y$ pays $c_{Y}$ to donate $b_{Y}$. As defectors, they both pay nothing and donate nothing.

In this game, the natural reference line is not $\pi_{X}=\pi_{Y}$ but rather the line through $\left(0,0\right)$ and $\left(b_{Y}-c_{X},b_{X}-c_{Y}\right)$ since this represents the line between payoffs for mutual defection and payoffs for mutual cooperation. A strategy is ``fair'' if it enforces a payoff relationship along this reference line, reflecting proportional sharing that accounts for the asymmetric costs and benefits. In contrast, a strategy emphasizing equality enforces $\pi_{X}=\pi_{Y}$ regardless of the players' differing contributions.

Like the standard donation game, this asymmetric variant is additive. By Theorem~\ref{thm:additive}, any enforceable payoff relationship can be implemented using a reactive strategy. The key distinction from symmetric games emerges when comparing fairness and equality: while TFT enforces equality in the symmetric donation game (where $b_{X}=b_{Y}$ and $c_{X}=c_{Y}$), in the asymmetric case these two objectives diverge. Fair strategies that respect the natural reference line require $\lambda\rightarrow 1$, making this constraint practically infeasible in finite-horizon interactions. By contrast, enforcing equality can be achieved with smaller discount factors, as shown in \fig{adg_heatmap}. This illustrates a fundamental principle: in asymmetric settings, unilateral enforcement of fair outcomes (proportional to players' contributions), is significantly more demanding than enforcement of equal outcomes.

\begin{figure}
    \centering
    \includegraphics[width=\textwidth]{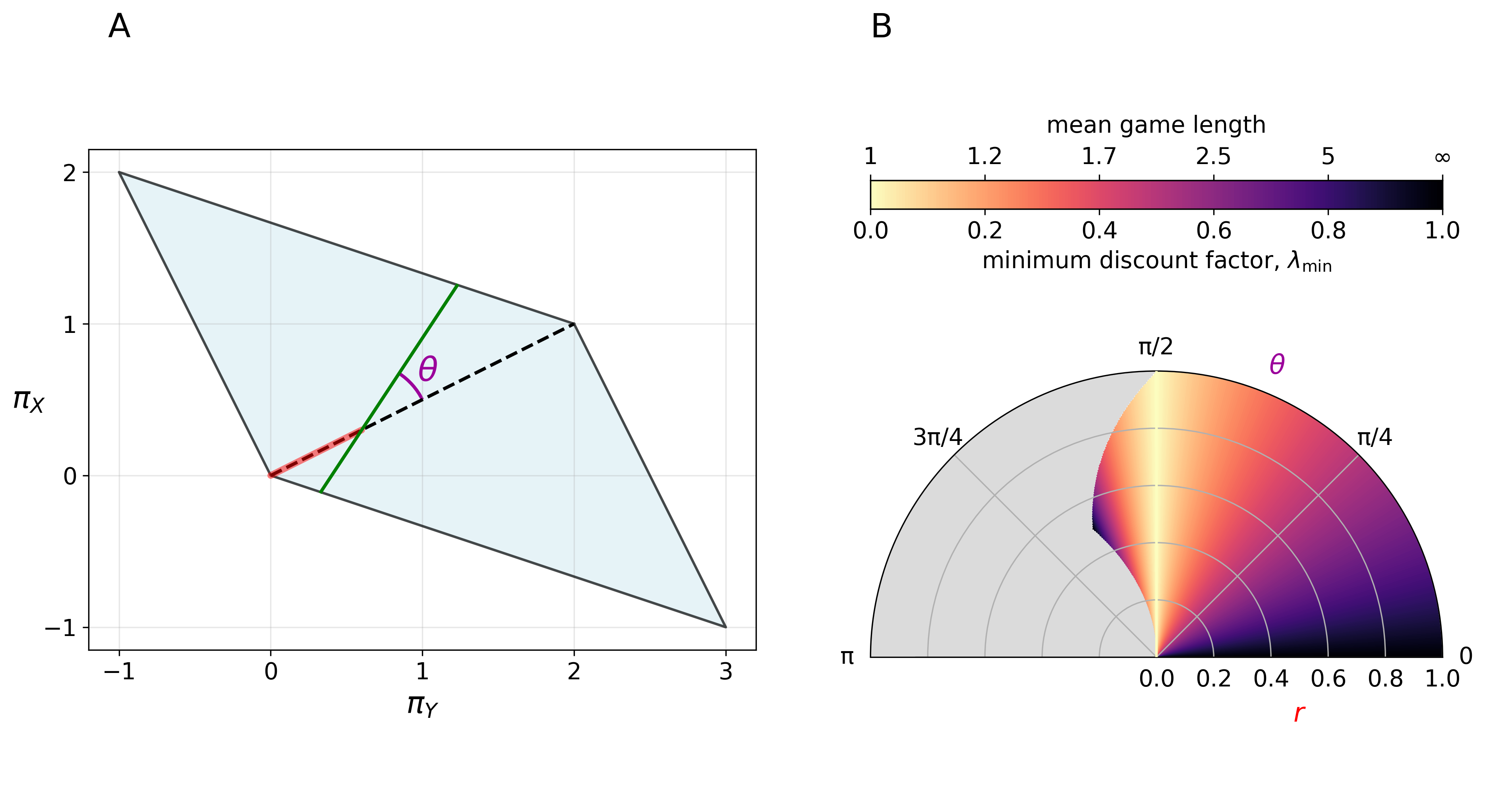}
    \caption{Enforceability in the asymmetric donation game with $b_{X}=3$, $c_{X}=1$, $b_{Y}=2$, and $c_{Y}=1$. The heatmap displays the minimum discount factor $\lambda_{\textrm{min}}$ for enforcing $\varphi =\kappa -u_{X} -\chi\left(\kappa -u_{Y}\right)\equiv 0$. The natural reference line (indicated by the dashed line in the payoff space) connects mutual defection $\left(0,0\right)$ to mutual cooperation $\left(b_{Y}-c_{X},b_{X}-c_{Y}\right)$, reflecting the asymmetric costs and benefits. Strategies enforcing equality ($\pi_{X}=\pi_{Y}$, corresponding to $\chi =1$) require $\lambda\rightarrow 1$, while ``fair'' strategies that enforce proportional sharing along the reference line are more readily achievable.}
    \label{fig:adg_heatmap}
\end{figure}

\subsection{Hawk-dove game: a ``weak'' social dilemma}
Finally, we consider a weak social dilemma \citep{hauert:JTB:2006a} known as the ``hawk-dove'' \citep{maynardsmith:Nature:1973} (or ``snowdrift'' \citep{sugden1986economics,hauert:Nature:2004}) game. The toy narrative behind this game is that there is an aggressive type (hawk) and a peaceful type (dove), and there is a common resource of value $V$ for which two individuals compete. When two doves meet, they share the resource, each receiving $V/2$. When a hawk meets a dove, the hawk takes the resource and leaves nothing for the dove. However, when two hawks meet, they fight and incur a cost that effectively lowers the value of the resource by $C>V$, resulting in $\left(V-C\right) /2$ to each.

\fig{hd_heatmap} shows which linear relationships are enforceable in this game, using parameters $V=2$ and $C=4$ as examples. Here, like in the non-additive prisoner's dilemma, we observe regions for which $\lambda_{\min}$ can be attained only when at least one of the two mixed actions in a two-point reactive learning strategy is non-pure.

\begin{figure}
    \centering
    \includegraphics[width=\textwidth]{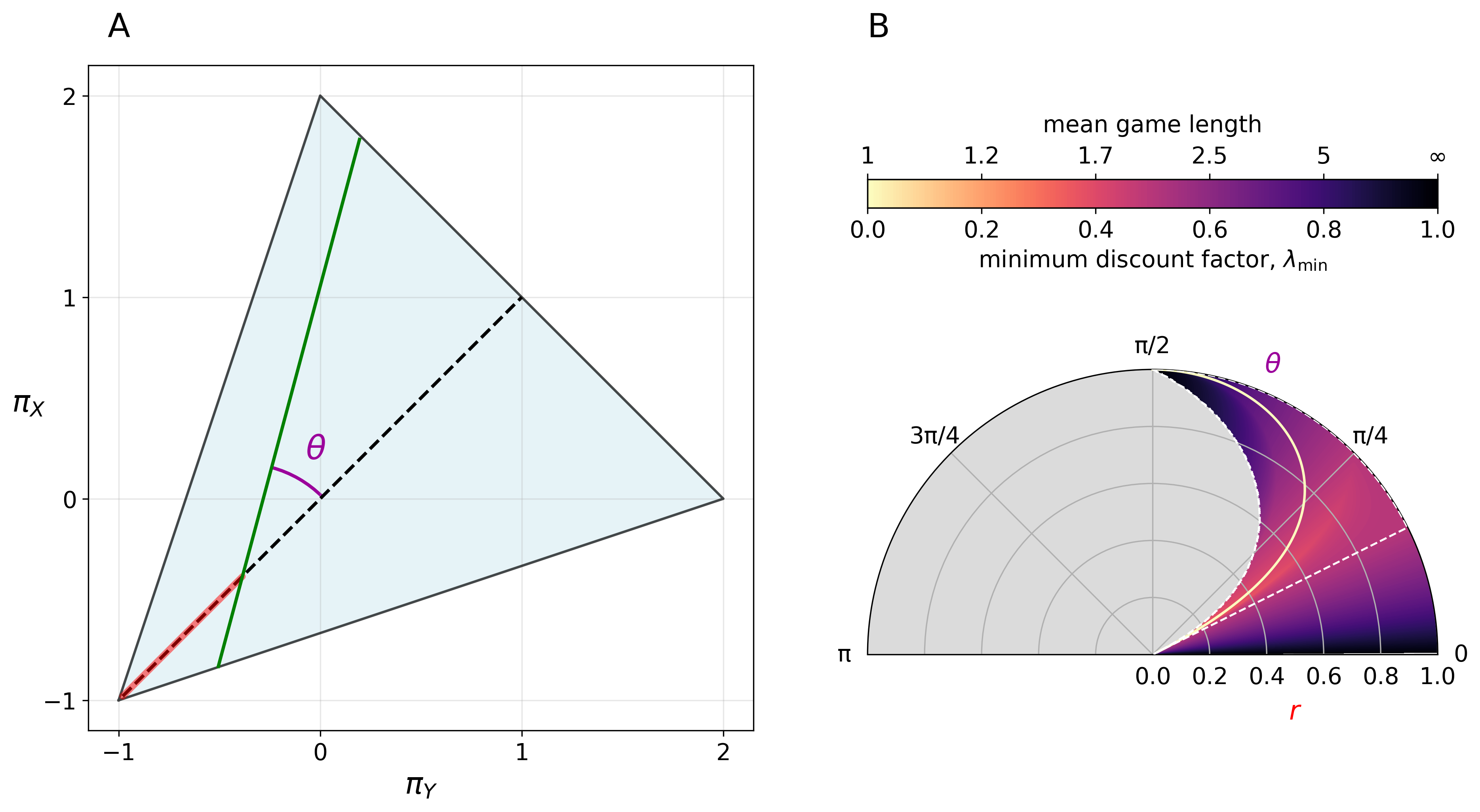}
    \caption{Heat map on enforceability of the linear payoff relationship $\varphi =\kappa -u_{X}\left(s_{X},s_{Y}\right) -\chi\left(\kappa -u_{Y}\left(s_{X},s_{Y}\right)\right)\equiv 0$ for all values $\kappa ,\chi\in\mathbb{R}$, for the repeated hawk-dove game with $\left(R,S,T,P\right) =\left(-1,2,0,1\right)$. $\theta$ is the angle between the payoff relationship (green) and the reference line (dashed), and $r$ represents the fraction (red) of the reference line made up by the intersection point. The region enclosed by the white dashed line is the set of $\left(r,\theta\right)$ for which at least one of $\tau_{X}^{+}$ and $\tau_{X}^{-}$ is non-pure in the optimizer for $\lambda_{\min}$ (\eq{lambda_min}). For this game, we have $\kappa =P+r\left(R-P\right)$ and $\chi =\tan\left(\theta +\pi /4\right)$.}
    \label{fig:hd_heatmap}
\end{figure}

\section{Discussion}\label{sec:discussion}
Our results establish a complete characterization of enforceable payoff relationships in discounted repeated games. The main theoretical contribution answers a question about the role of strategic complexity: extending memory beyond a simple reactive learning structure provides no additional power for enforcing payoff constraints. Any payoff relationship that can be enforced by a strategy of arbitrary memory can be implemented using a two-point reactive learning strategy. This universality result demonstrates that the space of reactive learning strategies captures all possible enforceable constraints, regardless of the opponent's strategic sophistication.

One consequence is that a search for autocratic strategies need only consider the computationally tractable space of reactive learning strategies, rather than the vast and intractable space of behavioral strategies. Our constructive characterization provides explicit formulas for both verifying enforceability and computing the minimum discount factor required. For arbitrary functions, $\varphi$, these computations can be performed in polynomial time using linear programming. The computational tractability of our characterization stands in stark contrast to the general intractability of analyzing finite-memory strategies. For memory-$m$ strategies in games with $k$ actions per player, the strategy space has dimension $k^{m+1}$, making exhaustive analysis infeasible even for modest values of $m$ and $k$ \citep{hauert1997effects}. The reduction to two-point reactive learning strategies eliminates this exponential dependence on memory length, reducing the problem to a fixed-dimensional space determined only by the number of actions in the stage game.

These findings resolve several open questions in the literature on zero-determinant strategies. First, we provide a definitive answer to the question raised by \citet{hilbe:PLOSONE:2013} regarding the existence of autocratic strategies in discounted games. Our necessary and sufficient conditions establish precisely when a payoff relationship is enforceable, and our formulas for the minimum discount factor extend the existence results of \citet{hilbe:GEB:2015} by providing exact thresholds. Second, we demonstrate that the initial action plays a crucial role in determining enforceability in discounted games, contrasting with undiscounted settings where initial conditions are often irrelevant. This distinction highlights fundamental differences between discounted and undiscounted repeated games that have not been fully appreciated in prior work. The structural properties we establish (e.g., convexity of enforceable relationships, the dichotomy for symmetric relationships, and the polynomial-time computability) reveal that the space of autocratic strategies has favorable geometric and algorithmic properties. The convexity result implies that if two payoff relationships are enforceable, any convex combination is also enforceable (under appropriate conditions on the underlying strategies), providing a way to construct new autocratic strategies from known ones. The dichotomy for symmetric relationships shows that fairness constraints are fundamentally incompatible with discounting: symmetric relationships are either trivially enforceable or require the limiting case of an infinite horizon. This explains why strategies like tit-for-tat, which enforce equal payoffs in the prisoner's dilemma, lose this property in any discounted setting.

Although we have framed our results primarily for two-player games, the framework extends naturally to multiplayer settings where coalitions of players coordinate to enforce payoff constraints on the larger group. A coalition $I\subseteq\left\{1,\dots ,N\right\}$ can use a correlated strategy to enforce a linear relationship $\sum_{i=1}^{N}\alpha_{i}\pi_{i}+\alpha_{0}=0$ on the expected payoffs of all players. Our characterization carries over to this setting: any enforceable coalitional constraint can be implemented using a reactive learning strategy for the coalition that conditions on the previous actions of players outside the coalition and the coalition's own previous mixed action. In fact, one need only replace $X$ by $I$ (the coalition) and $Y$ by $-I$ (the anti-coalition) in Theorem~\ref{th:mainresultDiscounted}. The only subtlety is that a strategy for $I$ allows for correlations, in the sense that it is a map $\sigma_{I}:\mathcal{H}\rightarrow\Delta\left(S_{I}\right)$ rather than from $\mathcal{H}$ to $\prod_{i\in I}\Delta\left(S_{i}\right)$. This, we believe, is reasonable, when a subset of a larger group strategically coordinate to control outcomes for others. There has been limited work in the area of multiplayer payoff enforcement \citep{hilbe:JTB:2015,pan:SR:2015,govaert2019zero,chen:JTB:2022}, but this area is not well-understood.

This coalitional perspective is especially relevant for applications in multi-agent reinforcement learning and algorithmic game theory \citep{su2025emergence,bernheim1987coalition,sandholm1996multiagent}. Zero-determinant strategies allow a coalition to effectively reshape the incentive landscape faced by learning agents outside the coalition. If external agents are optimizing their policies using gradient-based methods or other adaptive algorithms, the coalition can enforce constraints that guide the learning dynamics toward desirable equilibria or prevent convergence to undesirable outcomes. Doing so is relevant in, for example, specific domains such as climate agreements \citep{nordhaus2021dynamic} and algorithmic collusion \citep{calvano2020artificial}. Our results provide concrete tools for designing such coalitional strategies and understanding their limitations.

Autocratic strategies are closely linked to evolutionary game theory, where populations of agents adapt their strategies over time through selection, mutation, or learning \citep{weibull:MIT:1995,sigmund1999evolutionary,stewart:PNAS:2013,chen:JTB:2014}. In fact, this is the setting in which linear payoff constraints were first recognized \citep{press:PNAS:2012,stewart:PNAS:2012}. An autocratic strategy that enforces a favorable payoff relationship can maintain a consistent advantage over a wide range of opponents, potentially allowing it to persist in evolutionary competition despite not being a Nash equilibrium of the stage game. The robustness of reactive learning strategies (in the sense of their ability to enforce constraints against arbitrary opponents) suggests they may be especially resilient to invasion attempts and environmental perturbations. In learning dynamics, the presence of an autocratic player fundamentally alters the optimization landscape for other agents. If player $Y$ is adapting through reinforcement learning or evolutionary search, player $X$'s autocratic strategy constrains the payoffs $Y$ can achieve along any learning trajectory. Understanding these constraints is crucial for predicting the long-run outcomes of multi-agent learning systems and for designing interventions that guide learning toward socially beneficial equilibria.

Several important questions remain open. First, while we characterize enforceable relationships for arbitrary payoff functions, our explicit formulas for minimum discount factors apply most directly to finite action spaces. Extensions to continuous action spaces or games with state-dependent payoffs would require additional technical machinery, though we expect the fundamental principles to carry over. Second, our framework assumes that players observe actions perfectly and condition their strategies on these observations. In settings with imperfect monitoring or private information, enforcement becomes more subtle, and the relationship between memory and enforceability may change  \citep{mamiya2020zero}. Future work could explore the robustness of autocratic strategies to noise and misperception, characterize the set of enforceable relationships when both players simultaneously attempt to enforce constraints, and investigate the evolutionary stability of populations containing autocratic strategists. While much of the existing literature focuses on linear payoff relationships, our framework naturally extends to arbitrary nonlinear constraints on expected payoffs. This generality opens new avenues for studying strategic behavior in environments where traditional zero-determinant techniques fail. For instance, payoff relationships involving products, maxima, or other nonlinear combinations of player payoffs can be analyzed using our framework, potentially revealing new classes of enforceable constraints in economic and biological applications. The extension to nonlinear relationships is particularly relevant for settings where players care about relative performance, inequity aversion, or other behavioral considerations that induce nonlinear preferences over payoff profiles \citep{fehr1999theory}. Our results suggest that even in these complex preference structures, the fundamental limits on enforceability are determined by the same geometric separation conditions that govern linear relations.

The universality of reactive learning strategies within the class of autocratic strategies is perhaps surprising given the apparent richness of the space of behavioral strategies with arbitrary memory. Our results show that this richness is illusory for the purpose of enforcing payoff constraints: the geometric separation conditions that determine enforceability depend only on the stage game payoffs and the discount factor, not on the complexity of the enforcing strategy. On the other hand, reactive learning strategies can be thought of as longer-memory behavioral strategies with a ``right-invariance'' property. This property allows longer histories of simple information to be ``rolled up'' into shorter memories of richer information. Thus, we believe that this finding provides both theoretical closure on the role of memory in zero-determinant strategies and practical guidance for finding autocratic strategies in strategic environments.

\section*{Acknowledgments}
We would like to thank Christian Hilbe for helpful comments.

\end{document}